\newtheorem*{lemma6*}{Lemma 6}
\newtheorem*{theorem1*}{Theorem 1}
\newtheorem{lemma}{Lemma}
\newtheorem{lemmaa}{Lemma}
\newtheorem*{lemmaaA*}{Lemma A}
\newtheorem*{lemmaaB*}{Lemma B}
\newtheorem{definition}{Definition}
\newtheorem{theorem}{Theorem}
\newtheorem{principle}{Principle}
\newtheorem{postulate}{Postulate}
\newtheorem*{postulate1'*}{Postulate 1'}
\newtheorem*{postulate1*}{Postulate 1}
\newtheorem*{postulate3'*}{Postulate 3'}
\newtheorem{postulatea}{Postulate}
\newcommand{\reff}[1]{{\color{black}{#1}}}
\begin{document}
\interfootnotelinepenalty=10000

%\title{Spacetime symmetries and the qubit Bloch ball: A derivation of finite dimensional quantum theory}

%\title{Spacetime symmetries and the qubit Bloch ball: a physical derivation of finite dimensional quantum theory and the number of spatial dimensions in Minkowski spacetime}

\title{Spacetime symmetries and the qubit Bloch ball: a physical derivation of finite dimensional quantum theory and the number of spatial dimensions}

%\title{Spacetime symmetries and the qubit Bloch ball: a derivation of finite dimensional quantum theory and the number of spatial dimensions from Poincar{\'e} invariance and other physical postulates}

%\title{Spacetime symmetries and the qubit Bloch ball: deriving Minkowski spacetime and finite dimensional quantum theory}

%\title{An axiomatic derivation of Minkowski spacetime and finite dimensional quantum theory}

%title{Deriving finite dimensional quantum theory and the number of spatial dimensions of Minkowski spacetime}

\author{Dami\'an Pital\'ua-Garc\'ia}
%\email[]{dpitaluag@gmail.com}
\email[]{D.Pitalua-Garcia@damtp.cam.ac.uk}
\affiliation{Centre for Quantum Information and Foundations, DAMTP,
Centre for Mathematical Sciences, University of Cambridge,
Wilberforce Road, Cambridge, CB3 0WA, United Kingdom}

\date{\today}

%\pacs{03.65.Ud}
% insert suggested keywords - APS authors don't need to do this
\keywords{postulates of quantum theory$|$ postulates of special relativity$|$ dimension of space $|$ quantum information}

\begin{abstract}
Quantum theory and relativity are the pillar theories on which our understanding of physics is based. 
%Two important, long standing and independent research problems are reconstructing quantum theory from sensible physical conditions \cite{BN36,Mackeybook,JP63,DL70,E70,Ludwigbook,H01,DB09,CDP11,MM11,H11,TMSM12,MMAP13,BMU14}, and investigating whether spacetime and quantum theory are linked at the level of the Hilbert space formalism of quantum theory \cite{Weizsackerbook,MM13.1,DB13,HM14,GMD14}.
Poincar{\'e} invariance is a fundamental physical principle stating that the experimental results must be the same in all inertial reference frames in Minkowski spacetime. It is a basic condition imposed on quantum theory in order to construct quantum field theories, hence, it plays a fundamental role in the standard model of particle physics too. As is well known, Minkowski spacetime follows from clear physical principles, like the relativity principle and the invariance of the speed of light. Here, we reproduce such a derivation, but leave the number of spatial dimensions $n$ as a free variable. Then, assuming that spacetime is Minkowski in $1+n$ dimensions and within the framework of general probabilistic theories, we reconstruct the qubit Bloch ball and finite dimensional quantum theory, and obtain that the number of spatial dimensions must be $n=3$, from Poincar{\'e} invariance and other physical postulates. Our results suggest a fundamental physical connection between spacetime and quantum theory.\\

%postulates of quantum theory$|$ postulates of special relativity$|$ dimension of space $|$ quantum information

\begin{comment}
Minkowski spacetime in $1+n$ dimensions is derived from the relativity principle,
the invariance of the speed of light and other well known physical postulates. Within the formalism of general probabilistic theories, we analyse a massive particle whose momentum corresponds to a continuous variable and whose
internal degrees of freedom are assumed discrete and finite. Poincar{\'e} invariance states that, under
Poincar{\'e} transformations, the states and effects transform as representations of the Poincar{\'e} group
and that the outcome probabilities remain invariant. We show that, for theories including a
class of states with well defined momentum, the state space of minimal dimension for the particles'
internal degrees of freedom being consistent with Poincar{\'e} invariance is an Euclidean ball of
dimension $n$. Then, the qubit Bloch ball, $n=3$ and finite dimensional quantum theory follow from physical
postulates of previous reconstructions of quantum theory: tomographic
locality, continuous reversibility, existence of entanglement and universal encoding.
\end{comment}

\end{abstract}

\maketitle

\section{Introduction}
%Quantum theory and relativity are the pillar theories on which our current understanding of physics is based. 
While special relativity is clearly stated in terms of the relativity principle and the invariance of the speed of light \cite{LandauLifshitzbook,WeinbergGravitationandCosmologybook}, quantum theory is traditionally formulated in terms of abstract mathematical postulates, involving vectors in a complex Hilbert space \cite{vonNeumannbook,NielsenandChuangbook,WeinbergQMbook}.
%Poincar{\'e} invariance is a fundamental relativistic physical principle stating that the experimental results must be the same in all inertial reference frames \cite{Weinbergbook}. It is a basic condition imposed on quantum theory in order to construct quantum field theories, hence, it plays a fundamental role in the standard model of particle physics too \cite{Weinbergbook}.
%Two important, long standing and independent research problems are to reconstruct quantum theory from sensible physical conditions \cite{BN36,Mackeybook,JP63,DL70,E70,Ludwigbook,H01,DB09,CDP11,MM11,H11,TMSM12,MMAP13,BMU14}, and to investigate whether spacetime and quantum theory are linked at the level of the Hilbert space formalism of quantum theory \cite{Weizsackerbook,MM13.1,DB13,HM14,GMD14}.
Two important, long standing and independent research problems are reconstructing quantum theory from sensible physical conditions \cite{BN36,Mackeybook,JP63,DL70,E70,Ludwigbook,H01,CBH03,G08,R09,DB09,CDP10,R11,CDP11,MM11,H11,FS11,W12,TMSM12,F12,Z12,H13,MMAP13,BMU14,HW17,H17,H17.2,SSC18,W19,vdW19,T20,N20}, and investigating whether spacetime and quantum theory are linked at the level of the Hilbert space formalism \cite{Weizsackerbook,MM13.1,DB13,HM14,GMD17}.

Important motivations to reconstruct quantum theory from physical conditions are to understand quantum theory better
% \cite{H01,DB09,CDP11,MM11,H11,TMSM12,MMAP13,BMU14}
and to explore ways in which quantum theory could be modified by investigating variations of its foundational principles. An important reason for doing this is the problem of unifying gravity and quantum physics \cite{H07,H16}. Thus, it is compelling to explore physical principles that suggest clear connections between the mathematical structures of spacetime and quantum theory.

The framework of general probabilistic theories (GPTs) has minimal assumptions and includes classical and quantum theory as special cases (see e.g. \cite{H01,B07,DB09,CDP10,CDP11,MM11,H11,TMSM12,MMAP13}). It has allowed the investigation of quantum properties and protocols within a broader framework of theories, like entanglement \cite{B07,SB10}, the violation of Bell inequalities \cite{BLMPPR05,B07,JGBB11}, the no-cloning theorem \cite{B07}, the no-broadcasting theorem \cite{BBLW07}, entropy \cite{BBCLSSWW10,SW10,KBBM17}, teleportation \cite{B07,SB10,BBLW08,MPP15}, dense coding \cite{SB10,MPP15}, entanglement swapping \cite{SB10,MPP15}, communication \cite{B07,MP14}, computation \cite{B07,LB15,LH16,LS16,LS16.2,BLS18,G18,KM19,BBHL19} and cryptography \cite{B07,SS18,SS18.2,LPW18,SS20}, for instance. Furthermore, several interesting axiomatic reconstructions of the mathematical formalism of finite dimensional quantum theory have been proposed within this framework (e.g. \cite{H01,DB09,CDP11,MM11,H11,TMSM12,MMAP13,BMU14}).

%There exist several reconstructions of finite dimensional quantum theory within the framework of GPTs, from sets of axioms and postulates that are, arguably, physically sensible (e.g. \cite{}). As mentioned above, an important motivation to derive quantum theory from sets of physically motivated postulates is, by relaxing the postulates, to investigate extensions of quantum theory that could be unified with gravity. We thus find surprising that, to our knowledge, at the moment of writing this paper there is not any proposed physical principle or postulate in the axiomatic reconstructions of quantum theory suggesting a clear connection with the whole mathematical structure of Minkowski or curved spacetime.

To the best of our knowledge, at the moment of writing this paper there is not any proposed physical principle or postulate in the axiomatic reconstructions of quantum theory within the GPTs framework suggesting a clear connection between the mathematical formalism of quantum theory and the whole mathematical structure of Minkowski or curved spacetime. In particular, there is not any proposed postulate \reff{in the GPTs framework} that exploits the whole group of symmetry transformations in Minkowski spacetime, given by the proper orthochronous Poincar{\'e} group, i.e. the spacetime translations, space rotations and Lorentz boosts. \reff{However, we note that Svetlichny \cite{S00} has derived the Hilbert space formalism of quantum mechanics from Poincar{\'e} invariance and other postulates in the framework of quantum logic.}

Physical principles and postulates inspired by the causality of spacetime have been proposed and investigated in the GPTs framework (e.g. \cite{CDP10,ic,MMAP13}). For example, causality, i.e. the condition that the probability of preparing a system in a given state is independent of what measurement is applied after the preparation \cite{CDP10}, is a standard assumption made in GPTs. The no-signalling principle, which allows consistency with relativistic causality, is also a standard assumption in GPTs. No-signalling says that the outcome probabilities of any measurement applied on $A$ are independent of the measurement applied on $B$ for any bipartite system $AB$ in an arbitrary state. Extensions of the no-signalling principle called ``Information Causality'' \cite{ic} and ``No Simultaneous Encoding'' \cite{MMAP13} have been proposed to derive some quantum properties.

%Although physical principles and postulates inspired by the causality of spacetime have been proposed and investigated in the framework of general probabilistic theories (GPTs) \cite{}, to the best of our knowledge, at the moment of writing this paper there are not other proposed physical postulates that exploit the whole structure of spacetime. For example, causality, i.e. the condition that the probability of preparing a system in a given state is independent of what measurement is applied after the preparation \cite{}, is a standard assumption made in the framework of GPTs. The no-signalling principle, stating that the outcome probabilities of any measurement applied on $A$ are independent of the measurement applied on $B$ for any bipartite system $AB$ in an arbitrary state, which allows consistency with relativistic causality, is also a standard assumption in the GPTs framework. The extensions of the causality condition called ``Information Causality'' \cite{ic} and ``No Simultaneous Encoding'' \cite{MMAP13} have been proposed to derive some properties of quantum theory. 

Information causality \cite{ic} roughly states that the transmission of $m$ classical bits by a first party, Alice, to a second party, Bob, cannot increase Bob's information about Alice's data in more than $m$ bits. A quantum version of this principle \cite{QIC} states that the transmission of $m$ qubits by Alice to Bob cannot increase Bob's quantum information about Alice's data in more than a value of $m$. 
Information causality was used to derive the quantum Tsirelson bound \cite{C80} on the violation of the Clauser-Horne-Shimony-Holt (CHSH) Bell inequality \cite{CHSH69}.
This principle reduces to the no-signalling principle when $m=0$. Although no-signalling allows the satisfaction of relativistic causality and is thus motivated by the structure of spacetime, information causality is not clearly motivated by any property of spacetime.

No Simultaneous Encoding roughly states that there exists an elementary system called ``the gbit'', which if used to perfectly encode one classical bit cannot simultaneously encode any further information. This postulate was used together with other postulates to reconstruct finite dimensional quantum theory \cite{MMAP13}.  Although this postulate might be a sensible variation of information causality, it is not clearly motivated by any feature of spacetime either.

Refs. \cite{MM13.1,DB13} investigated connections between the number of spatial dimensions and the mathematical structure of finite dimensional quantum theory in the framework of GPTs. But these papers did not consider any relativistic effects. In particular, they did not consider Lorentz boosts.

Assuming that space has $n$ dimensions and that there exists a physical system that allows to encode any direction in space, Ref. \cite{MM13.1} derived from a set of postulates that $n=3$ and that a pair of such systems must be described by quantum theory. However, some of the postulate proposed by Ref. \cite{MM13.1} are not clearly physically sensible. For example, the second postulate roughly says that if a state perfectly encodes a spatial direction, it cannot encode any further information. We note that this postulate is very similar to No Simultaneous Encoding. As for No Simultaneous Encoding, we do not see any compelling reasons to consider that this postulate is physically sensible.

Assuming that space is Euclidean and isotropic with $n$ dimensions, and using a set of postulates, Ref. \cite{DB13} reconstructed finite dimensional quantum theory and that $n=3$. However, Ref. \cite{DB13} made two strong assumptions: 1) there exists an elementary system whose state space is an Euclidean ball of dimension $d$, which generalizes the qubit Bloch ball of three dimensions; and 2) the dimension of space equals the dimension of the Euclidean ball, i.e. $d=n$.

Refs. \cite{HM14,GMD17} investigated connections between the mathematical structures of Minkowski spacetime, including relativistic effects, and finite dimensional quantum theory in the GPTs framework. By establishing a task in which two distant parties must synchronize their description of local physics using quantum communication, and with the help of some assumptions, Ref. \cite{HM14} derived the group of Lorentz transformations of Minkowski spacetime. We think that deriving properties of Minkowski spacetime from quantum theory is a very interesting problem. However, here we take the view that the reverse problem, in which properties of quantum theory are derived from Minkowski spacetime, is physically more compelling. The main reason for our point of view is that, as already mentioned, Minkowski spacetime follows clearly from well established physical principles, but quantum theory is commonly stated in terms of abstract mathematical postulates. Although we appreciate that there are many axiomatic reconstructions of quantum theory, we believe that all of them make some assumptions that are arguably not completely physically sensible.

Ref. \cite{GMD17} considered the following thought experiment. A particle passes through a beam splitter and is then superposed along two different paths. The experiment is described in two different inertial reference frames that are connected by a Lorentz transformation $\Lambda$. It was derived that the state space $\mathcal{S}$ corresponding to the path superpositions for the particle is the qubit Bloch ball. We find this approach very interesting. However, there were several strong assumptions in this derivation. 

First, Ref. \cite{GMD17} assumed that the state space $\mathcal{S}$ is an Euclidean ball of arbitrary dimension $d$, generalizing the qubit Bloch ball. Second, it was assumed that under a Lorentz transformation the path superposition state transforms as a finite dimensional representation of the Lorentz group. Because the only finite dimensional unitary representation of the Lorentz group is the trivial representation \cite{W39,Weinbergbook}, Ref. \cite{GMD17} concluded that the path superposition state does not transform under Lorentz transformations. However, by considering infinite dimensional degrees of freedom, the path superposition state could, in principle, transform nontrivialy under Lorentz transformations.
Third, it was assumed that any pure state on the Euclidean sphere could be prepared by setting up the beam splitter appropriately and applying local reversible transformations on the two branches of the interferometer. Finally, it was assumed that the group of reversible transformations on both branches are equal.

The only physical property from Minkowski spacetime used by Ref. \cite{GMD17} was the simultaneity of relativity, according to which the time order of two spacelike separated events can be inverted in inertial reference frames that are connected by some Lorentz boost. The whole group of symmetry transformations in Minkowski spacetime, given by the proper orthochronous Poincar{\'e} group, was not used by Ref. \cite{GMD17}.

The main physical contribution of this paper to the literature of GPTs and reconstructions of quantum theory is to propose a physical postulate that suggests a clear connection between the group of symmetry transformations in Minkowski spacetime, given by the proper orthochronous Poincar{\'e} group $\mathfrak{Poin}$, and the Hilbert space structure of finite dimensional quantum theory. Our Postulate \hyperref[P1first]{1}, ``Poincar{\'e} Structure'', assumes that spacetime is Minkowski in $1+n$ dimensions and roughly states that there exists a type of massive particle $\mathcal{P}$ described by a GPT satisfying: 1) the states and measurements transform as nontrivial representations of $\mathfrak{Poin}$ and the outcome probabilities remain invariant under transformations from $\mathfrak{Poin}$; and 2) the particle has internal degrees of freedom of finite dimension that if are required to transform as a representation of a subgroup of $\mathfrak{Poin}$ must do so nontrivially. We restrict the group of symmetry transformations to be the proper orthochronous Poincar{\'e} group $\mathfrak{Poin}$ and not the full  Poincar{\'e} group $\mathfrak{Poin}_\text{full}$, which includes the transformations of space inversion and time reversal, because physics is observed to be perfectly invariant under $\mathfrak{Poin}$ but not under $\mathfrak{Poin}_\text{full}$ \cite{Weinbergbook}. In quantum theory, a particle of the type $\mathcal{P}$ can be an electron, with the internal degrees of freedom corresponding to the spin, for instance.

We propose other postulates in addition to Postulate \hyperref[P1first]{1} and derive the qubit Bloch ball, finite dimensional quantum theory and that the number of spatial dimensions must be $n=3$. In our derivation we assume that spacetime is Minkowski in $1+n$ dimensions. As is well known, Minkowski spacetime is derived from clear physical principles \cite{LandauLifshitzbook,WeinbergGravitationandCosmologybook}. For completeness of this paper, we present such a derivation here, and leave the number of spatial dimensions $n$ as a free variable.

Our Postulate \hyperref[P2]{2}, ``Existence of a Classical Limit'', states that there exists a classical limit for the particle of the type $\mathcal{P}$, in the sense that two conditions hold: 1) there exists a class of states in which the particle has classical well defined $1+n$ momentum $p$; and 2) a classical bit can be encoded in the particle's internal degrees of freedom. To our knowledge, continuous and spacetime degrees of freedom, like the momentum of a particle, have not been considered in previous axiomatic reconstructions of quantum theory.

The conditions 1) of Postulates \hyperref[P1first]{1} and \hyperref[P2]{2} allow us to use Wigner's method of induced representations \cite{W39,Weinbergbook} to obtain our main technical result, Lemma \ref{newlemma}, which roughly says that there exists a class of states for a particle of the type $\mathcal{P}$ in which the internal degrees of freedom must transform as a representation of the group $\text{SO}(n)$. 

Our Postulate \hyperref[P3]{3}, ``Minimality of the Elementary System'', says that there exists an elementary system having the state space with the smallest nontrivial finite dimension $d_\text{elem}$ in nature, which can be physically implemented in some internal degrees of freedom of a particle of the type $\mathcal{P}$, and which satisfies that $d_\text{elem}$ achieves the minimum value that is consistent with the other postulates. This is arguably a strong assumption, but it is weaker than Hardy's \cite{H01} ``Simplicity Axiom'' stating that, for any physical system, the dimension of its state space takes the minimum value that is consistent with the other axioms. Furthermore, we believe this is a reasonable assumption. We think it is physically sensible to assume that physical theories should have mathematical structures that are as simple as possible, while still describing a broad range of physical phenomena. We think that a reasonable measure of mathematical simplicity for a theory is given by the number of real degrees of freedom needed to describe the elementary system.

Our Postulates \hyperref[P4]{4} -- \hyperref[P7]{7} have been used in previous reconstructions of finite dimensional quantum theory (e.g. \cite{H01,DB09,MM11,CDP11,H11,TMSM12,MMAP13}) and are arguably physically sensible. Postulate \hyperref[P4]{4}, ``Continuous Reversibility'', says that for every pair of pure states there exists a continuous reversible transformation that transforms one into the other \cite{H01,DB09,MM11,TMSM12,MMAP13}. Postulate \hyperref[P5]{5}, ``Tomographic Locality'', says that the state of a composite system is totally characterized by the outcome probabilities of the local measurements on the subsystems \cite{H01,B07,DB09,CDP11,MM11,H11,TMSM12,MMAP13}. Postulate \hyperref[P6]{6}, ``Existence of Entanglement'', says that the state space of any bipartite system contains at least one entangled state \cite{TMSM12,MMAP13}. Postulate \hyperref[P7]{7}, ``Universal Encoding'', says that for any physical system, any state of finite dimension can be reversibly encoded in a sufficiently large number of elementary systems \cite{MMAP13}.

Assuming from the beginning that spacetime is Minkowski in $1+3$ dimensions, we show in Lemma \ref{qubitlemma} from Postulates \hyperref[P1first]{1} -- \hyperref[P4]{4} that the states, measurements and reversible transformations of the elementary system are equivalent to those of the qubit Bloch ball. Using this result and the results of Refs. \cite{TMSM12,MMAP13}, we reconstruct finite dimensional quantum theory in Theorem \ref{firsttheorem}, from Postulates \hyperref[P1first]{1} -- \hyperref[P7]{7}.

On the other hand, if we assume that spacetime is Minkowski in $1+n$ dimensions, leaving $n$ as a free variable, we show in Lemma \ref{adlemma1} from Postulates \hyperref[P1first]{1} -- \hyperref[P3]{3} that the states and measurements for the elementary system correspond to an Euclidean ball of dimension $n$, which generalizes the qubit Bloch ball. Then, using this result and the results of Refs. \cite{TMSM12,MMAP13,MMPA14}, we show in Theorem \ref{secondtheorem} from Postulates \hyperref[P1first]{1} -- \hyperref[P7]{7} that the elementary system is the qubit, the number of spatial dimensions is $n=3$, and any physical system of any finite dimension can be described by finite dimensional quantum theory.

%As is well known, Minkowski spacetime is derived from a set of physical principles \cite{}. For completeness of this paper, we present such a derivation in section \ref{}, and leave the number of spatial dimensions $n$ as a free variable.

Our Postulate \hyperref[P1first]{1} can be extended for arbitrary spacetimes with arbitrary groups of symmetry transformations $\mathfrak{G}$, as we do in Postulate \hyperref[P1']{1'}, ``Structure from the Spacetime Symmetries''. We do not do it in this paper, but we think it would be very interesting to investigate the implications of this postulate for quantum theory in curved spacetimes or in modifications of general relativity. As mentioned above, a compelling motivation for this is the problem of unifying gravity and quantum theory.

The rest of this paper is organized as follows. We present an introduction to the framework of GPTs and give some examples in section \ref{GPTs}. Section \ref{derivingMinkowski} gives an introduction to Minkowski spacetime and the Poincar{\'e} group in $1+n$ dimensions, and presents a well known derivation of Minkowski spacetime from physical principles, where we leave the number of spatial dimensions $n$ as a free variable. In section \ref{secmainpostulate} we present and discuss our main postulates, Postulates \hyperref[P1first]{1} and \hyperref[P1']{1'}. Section \ref{particlepostulates} introduces and discusses our other postulates. In section \ref{qubitsection} we obtain our main technical result, Lemma \ref{newlemma}, and use it to derive  in Lemma \ref{adlemma1} that the elementary system corresponds to an Euclidean ball of dimension $n$, and to reconstruct the qubit Bloch ball in Lemma \ref{qubitlemma} if it is assumed that $n=3$. Our Theorems \ref{firsttheorem} and \ref{secondtheorem}, reconstructing finite dimensional quantum theory and the number of spatial dimensions, are given and proved in section \ref{quantumtheorysection}. We conclude discussing our results and presenting some open problems in section \ref{discussion}.

\section{General probabilistic theories}
\label{GPTs}
%\subsection{Our model: general probabilistic theories}

%We initially consider a massive particle in spacetime. We then consider the case of $N$ massive particles, for arbitrary $N\in\mathbb{N}$. We assume that the particles' states are described within the framework of general probabilistic theories (GPTs), which is a broad framework of probabilistic theories that includes classical and quantum theory as special cases. We assume that a GPT describes the particles' spacetime degrees of freedom, like their spacetime coordinates and their momentum, as well as some internal degrees of freedom, like the spin. We focus on the momentum degrees of freedom, which in general may be continuous dimensional, and on the internal degrees of freedom, which we assume to be discrete and of finite dimension. This is motivated by the fact that in quantum theory, the particles' spin degrees of freedom are finite dimensional. Thus, we assume that the particles' internal degrees of freedom are described by a finite dimensional GPT, while the particles' global degrees of freedom, which include the momentum and the internal degrees of freedom, are described in general by a continuous dimensional GPT.

\subsection{Finite dimensions}

We provide a brief introduction to the framework of general probabilistic theories (GPTs) of finite dimensions \cite{H01,B07,DB09,CDP10,CDP11,MM11,H11,TMSM12,MMAP13,MPP15}.
% which we apply to describe the state space $\mathcal{S}$ for the particles internal degrees of freedom.
A GPT predicts the outcome probabilities for all possible experiments that can be implemented in any physical system described by the theory \cite{H01}. An experiment comprises \emph{preparations} and \emph{operations} \cite{B07}. A physical system is prepared in a state $\zeta$, then an operation is applied, in general changing $\zeta$ to another state $\zeta_j$ with probability $q_j$, hence preparing $\zeta_j$ with probability $q_j$. The outcome $j$ of the operation can be observed and recorded at the end of the experiment. Thus, an operation can be regarded as a measurement or as a transformation \cite{B07}.

%\subsubsection{States}

Given a preparation procedure, the \emph{state} $\zeta\in\mathcal{S}$ is a mathematical object that allows us to compute the outcome probabilities for all possible measurements \cite{H01}. With this definition, a state could comprise the list of outcome probabilities for all possible measurements. An important assumption in the framework of GPTs is that the states can be determined by the outcome probabilities of a finite set of measurements, called \emph{fiducial measurements}, with finite sets of possible outcomes. We note that this holds in classical probabilistic theory and quantum theory of finite dimension. For example, any quantum state of a qubit can be determined by the outcome probabilities of three different measurements, along the $x$, $y$ and $z$ axes in the Bloch sphere, for instance.

Stated as the ``Probabilities'' axiom by Hardy \cite{H01}, in the framework of GPTs we assume that by preparing an ensemble of $N$ identical systems in the same state $\zeta$ and performing the same measurement on each system of the ensemble, the relative frequencies obtained tend to the same value, which we call probability, in the limit that $N$ tends to infinity. This is such a fundamental assumption that it is commonly considered as part of the background framework of GPTs. 

The state space $\mathcal{S}$ for a physical system is the set of possible states in which the system can be prepared or transformed. We assume that $\mathcal{S}$ is closed and convex. Convexity means that one should be able to prepare any convex combination $\sum_j q_j\zeta_j$ of states $\zeta_j\in\mathcal{S}$ by preparing $\zeta_j$ with probability $q_j$, where $q_j\geq 0$ and $\sum_j q_j =1$. It follows that $\mathcal{S}$ can be embedded in a vector space $V$ over $\mathbb{R}$. The extreme points of $\zeta$ are called \emph{pure states}. These cannot be written as convex combinations of other states. The state space $\mathcal{S}$ is the convex hull of the pure states. The \emph{mixed states} are the states  that are not pure \cite{H01,B07}.

%\subsubsection{Measurements and Transformations}

We assume \emph{causality}, stating that the probability of preparing a system in a given state is independent of what measurements will be implemented after the preparation \cite{CDP10}. As stated by Refs. \cite{CDP10,CDP11}, causality is an axiom that does not need to hold in general physical theories, for example in theories of quantum gravity without definite causal structures. However, this principle is so fundamental in our current understanding of physics that it is commonly assumed in the framework of GPTs.

The outcome probabilities are given by maps $\varepsilon: \mathcal{S} \rightarrow [0,1]$, called \emph{effects}. A measurement is a set of effects that adds to the \emph{unit effect} $u$, which satisfies $u(\zeta)=1$ for all $\zeta\in\mathcal{S}$. Causality implies that there is a single unit effect $u$ for a given system \cite{CDP10}. The \emph{zero effect} $\varepsilon_\bold{0}$ satisfies $\varepsilon_\bold{0}(\zeta)=0$ for all $\zeta\in\mathcal{S}$; it corresponds to measuring a property that occurs with zero probability for all states. The set of effects $\mathcal{E}$ includes both $u$ and $\varepsilon_\bold{0}$.

The set $\mathcal{T}$ of allowed \emph{transformations} is a set of maps $\tau$ that must leave the state space invariant, that is, $\tau: \mathcal{S}\rightarrow \mathcal{S}$. The set of reversible transformations $\mathcal{R}\equiv\{\tau\in\mathcal{T}\vert \tau^{-1}\in\mathcal{T}, (\tau^{-1}\circ \tau)(\zeta)=\zeta~ \forall 
\zeta\in\mathcal{S}\}$ is an important subset of the allowed transformations. As explained below, the set of transformations $\mathcal{T}$ can be considered as a set of linear maps on $V$. It is straightforward to see from this property that reversible transformations take pure states into pure states. 

Similarly to $\mathcal{S}$, we assume that $\mathcal{E}$ and $\mathcal{T}$ are convex sets. One way to prepare the state $\zeta=\sum_j q_j\zeta_j$ is to prepare $\zeta_j$ with probability $q_j$ and then forget $j$. Thus, the outcome probabilities and the transformed state must satisfy $\varepsilon(\zeta)= \sum_j q_j\varepsilon(\zeta_j)$ and $\tau(\zeta)= \sum_j q_j \tau (\zeta_j)$, respectively. It follows that the set of effects $\mathcal{E}$ and the set of transformations $\mathcal{T}$ can be considered as sets of linear maps on $V$ \cite{H01,B07}. The set of all linear maps from $V$ to $\mathbb{R}$ is called the \emph{dual space} of $V$ and is denoted by $V^*$. 

In general, we have $\mathcal{E}\subseteq \mathcal{E}_{\text{norm}}\equiv \{ \varepsilon\in V^*\vert 0\leq \varepsilon (\zeta) \leq 1 ~ \forall \zeta\in\mathcal{S}\}$, where $\mathcal{E}_{\text{norm}}$ is the set of \emph{normalized}, or \emph{proper}, effects.  We see that $\mathcal{E}_{\text{norm}}$ is the set of effects that give valid outcome probabilities for all states $\zeta\in\mathcal{S}$. In general, the set of effects $\mathcal{E}$ can be a proper subset of $\mathcal{E}_{\text{norm}}$. The \emph{no-restriction hypothesis} \cite{CDP10} states that $\mathcal{E}=\mathcal{E}_{\text{norm}}$. This a rather strong and unjustified assumption, which has been used in some axiomatic reconstructions of finite dimensional quantum theory (e.g. \cite{CDP10,MM11,MMAP13}) and whose relaxation has been investigated (see e.g Ref. \cite{JL13}). We do not make this assumption here.
%An important observation is that if $\varepsilon\in\mathcal{E}$ and $\tau\in\mathcal{T}$, then a consistent theory must satisfy that $\varepsilon\circ T\in\mathcal{E}$. This is because applying a transformation $\tau$ and then a measurement that includes the effect $\varepsilon$ is clearly a valid operation that corresponds to a measurement, hence it must correspond to a valid measurement.

Since $V$ is finite dimensional, $V$ and $V^*$ are isomorphic to $\mathbb{R}^{d+1}$, for some $d\in\mathbb{N}$ (the case $d=0$ being trivial). It follows that the states and effects are given by $\zeta, \varepsilon \in \mathbb{R}^{d+1}$ with $\zeta^{\text{t}}=(\zeta_0,\zeta_1,\ldots,\zeta_d)$, $\varepsilon^{\text{t}}=(\varepsilon_0,\varepsilon_1,\ldots,\varepsilon_d)$, where `t' denotes transposition. The map $\varepsilon(\zeta)$ is an Euclidean dot product $\varepsilon(\zeta)= \varepsilon\cdot \zeta =\sum_{j=0}^{d}\varepsilon_j\zeta_j$, and the transformations $\tau$ correspond to $(d+1)\times (d+1)$ real matrices \cite{H01,B07}. Without loss of generality we take the unit effect, the zero effect and the state space by $u=\bigl(\begin{smallmatrix}
1\\ \bold{0}
\end{smallmatrix} \bigr)$, $\varepsilon_{\bold{0}}=\bigl(\begin{smallmatrix}
0\\ \bold{0}
\end{smallmatrix} \bigr)$ and $\mathcal{S}=\bigl\{\zeta\equiv\bigl(\begin{smallmatrix}
1\\ \tilde{\zeta}
\end{smallmatrix} \bigr)\vert  \tilde{\zeta}\in\tilde{\mathcal{S}}\bigr\}$, respectively, where $\bold{0}$ is the null vector in $\mathbb{R}^d$, $\tilde{\mathcal{S}}\subset \mathbb{R}^d$ is a convex set, and $d\in\mathbb{N}$ takes the minimum value such that $\tilde{\mathcal{S}}\subset \mathbb{R}^d$. We call $d$ the \emph{dimension} of $\mathcal{S}$, or the dimension of the GPT.

\subsubsection{Composite systems}
\label{app:b}
%In this section we describe how to characterize the state spaces for composite systems. In particular we use the condition of tomographic locality (Postulate 5).

Consider two systems $A$ and $B$ with respective state spaces $\mathcal{S}_A\subset\mathbb{R}^{d_A+1}$ and $\mathcal{S}_B\subset\mathbb{R}^{d_B+1}$, and respective effect spaces $\mathcal{E}_A$ and $\mathcal{E}_B$. Let $\mathcal{S}_{AB}$ and $\mathcal{E}_{AB}$ be the state space and the effect space of the composite system $AB$, respectively. A few conditions are imposed on $\mathcal{S}_{AB}$ and $\mathcal{E}_{AB}$ in the literature of GPTs \cite{H01,B07,BBLW07,CDP10,MPP15}. 

The first condition is the \emph{no-signalling principle}, stating that for any state $\phi\in\mathcal{S}_{AB}$, the outcome probabilities of measurements performed on $A$ should be independent of the measurements implemented on $B$ and vice versa. If this condition were violated, by choosing and performing local measurements on $A$ and $B$ at spacelike separation, information could be communicated faster than light. This would violate relativistic causality. It is natural for us to impose the no-signalling principle, in particular because we will be explicitly assuming in this paper that physics takes place in Minkowski spacetime.

The second condition is \emph{tomographic locality}, which we will present later as Postulate \hyperref[P5]{5}. This condition says that any state $\phi\in\mathcal{S}_{AB}$ can be determined by the  outcome probabilities of local measurements on $A$ and $B$, which are given by $(\varepsilon\otimes \varepsilon')(\phi)$ for some  $\varepsilon\in\mathcal{E}_A$ and $\varepsilon'\in\mathcal{E}_B$. It follows from these conditions that $\mathcal{S}_{AB}\subset\mathbb{R}^{d_{A}+1}\otimes \mathbb{R}^{d_{B}+1}$ \cite{B07}.

Additionally, it is assumed that product states $\zeta\otimes \zeta'$ and product effects $\varepsilon\otimes \varepsilon'$ are allowed, for all $\zeta\in\mathcal{S}_A$, $\zeta'\in\mathcal{S}_B$, $\varepsilon\in\mathcal{E}_A$, and $\varepsilon'\in\mathcal{E}_B$. This is because the possibility of preparing system $A$ in some state $\zeta$ or applying a measurement that includes some effect $\varepsilon$ should be independent of whether there is or not another system $B$. Formally, these conditions can be expressed by $\mathcal{S}_A\otimes_{\text{min}}\mathcal{S}_B\subseteq \mathcal{S}_{AB}$ and $\mathcal{E}_A\otimes_{\text{min}}\mathcal{E}_B\subseteq \mathcal{E}_{AB}$, where  $\mathcal{S}_A\otimes_{\text{min}}\mathcal{S}_B\equiv \text{convex hull}\{\zeta\otimes \zeta'\vert \zeta\in\mathcal{S}_A,\zeta'\in\mathcal{S}_B\}$ is the \emph{minimal tensor product}, and where we define an analogous quantity for the space of effects by $\mathcal{E}_A\otimes_{\text{min}}\mathcal{E}_B\equiv \text{convex hull}\{\varepsilon\otimes \varepsilon'\vert \varepsilon\in\mathcal{E}_A,\varepsilon'\in\mathcal{E}_B\}$. Together with the conditions above, these assumptions imply that $\mathcal{S}_{AB}\subseteq \mathcal{S}_A\otimes_{\text{max}}\mathcal{S}_B$, where $\mathcal{S}_A\otimes_{\text{max}}\mathcal{S}_B\equiv \{\phi\vert (u_A\otimes u_B)(\phi) = 1, (\varepsilon\otimes \varepsilon')( \phi)\geq 0, ~ \forall \varepsilon\in\mathcal{E}_A,\varepsilon'\in\mathcal{E}_B\}$ is the \emph{maximal tensor product}.

The unit effect acting on $AB$ is $u_{AB}=u_A\otimes u_B$, which is a property that can be deduced from the causality condition, introduced above \cite{CDP10}. In general, we have $\mathcal{S}_A\otimes_{\text{min}}\mathcal{S}_B\subseteq \mathcal{S}_{AB}\subseteq\mathcal{S}_A\otimes_{\text{max}}\mathcal{S}_B$. A state is \emph{separable} if it can be written as a convex combination of product states.
A state is \emph{entangled} if it is not separable. Thus, by definition, the set of separable states corresponds to $\mathcal{S}_A\otimes_{\text{min}}\mathcal{S}_B$.
If $A$ or $B$ is a classical system then $\mathcal{S}_{AB}=\mathcal{S}_A\otimes_{\text{min}}\mathcal{S}_B= \mathcal{S}_A\otimes_{\text{max}}\mathcal{S}_B$ and thus there are not entangled states \cite{B07,BBLW07}.

When considering composite systems $AB$, the set of allowed transformations $\mathcal{\tau}_A$ on system $A$ must satisfy that for every system $B$ and for every transformation $\tau_A\in\mathcal{T}_A$, it holds that $\tau_A\otimes I_B : \mathcal{S}_{AB}\rightarrow \mathcal{S}_{AB}$, where $I_B$ is the identity map acting on system $B$. This condition corresponds to the fact that in quantum theory the maps must be completely positive. Additionally, we require that $\tau_A:\mathcal{S}_{A}\rightarrow \mathcal{S}_{A}$ for every $\tau_A\in\mathcal{T}_A$, as previously discussed \cite{B07}.

The considerations above apply to an arbitrary number $N$ of systems. For example, a system composed of three subsystems $A$, $B$ and $C$ can be considered as a system composed of two subsystems $AB$ and $C$.

\subsection{Examples of finite dimensional general probabilistic theories}

\subsubsection{Finite dimensional classical probabilistic theory}

Classical probabilistic theory of finite dimensions is an example of a finite dimensional GPT. In this case, for a classical system with $N+1$ possible outcomes, the state space $\mathcal{S}$ and the space of effects $\mathcal{E}$ have $N+1$ pure states $\zeta_0,\zeta_1,\ldots,\zeta_N$ and $N+1$ extremal effects $\varepsilon_0,\varepsilon_1,\ldots,\varepsilon_N$, respectively, satisfying $\sum_{i=0}^N\varepsilon_i=u$ and $\varepsilon_i(\zeta_j)=\delta_{i,j}$, for all $i,j\in\{0,1,\ldots,N\}$. That is, there is a single measurement, given by the set of effects $\{\varepsilon_j\}_{j=0}^N$, that perfectly distinguishes the pure states.

It follows that the state space $\mathcal{S}$ satisfies that $\tilde{\mathcal{S}}$ is a regular $N-$simplex in $\mathbb{R}^{N}$. That is, $\mathcal{S}$ is the convex hull of the pure states $\zeta_i =\bigl(\begin{smallmatrix}
1\\ \tilde{\zeta}_i
\end{smallmatrix} \bigr)$, for all $i\in\{0,1,2,\ldots,N\}$, where the set of vectors $\{\tilde{\zeta}_i\}_{i=0}^N$ represents the vertices of a $N-$simplex in $\mathbb{R}^{N}$. Thus, we have that the GPT's dimension is $d=N$. The space of effects $\mathcal{E}$ is the convex hull of the unit effect $u$, the zero effect $\varepsilon_{\bold{0}}$ and the extremal effects $\varepsilon_j$, for all $j\in\{0,1,2,\ldots,N\}$. The set of reversible transformations $\mathcal{R}$ permutes the pure states.

The classical bit corresponds to the case of two possible outcomes, i.e. $N=1$. In this case, we have 
\begin{equation}
\label{classicalbit}
\zeta_i=\begin{pmatrix}\!
1\!\\ \!(-1)^{i+1}\!
\end{pmatrix} , \quad\!\!\!\!\! \varepsilon_i=\frac{1}{2}\begin{pmatrix}
\!1\!\\ \!(-1)^{i+1}\!
\end{pmatrix},  \quad\!\!\!\!\! u=\begin{pmatrix}
\!1\!\\ \!0\!
\end{pmatrix}, \quad\!\!\!\!\! \varepsilon_\bold{0}=\begin{pmatrix}
\!0\!\\ \!0\!
\end{pmatrix},
\end{equation}
for all $i\in\{0,1\}$. The state space $\mathcal{S}$ is given by the convex hull of $\zeta_0$ and $\zeta_1$, which gives a line segment (see Fig. \ref{figbit}). The space of effects $\mathcal{E}$ is given by the convex hull of $\varepsilon_0$, $\varepsilon_1$, $u$ and $\varepsilon_\bold{0}$. There is a single reversible transformation, given by the reflection
\begin{equation}
\label{classicalbitreversible}
R=\begin{pmatrix}
1 & 0\\ 0 & -1\end{pmatrix}.
\end{equation}

\begin{figure}
\includegraphics[scale=0.26]{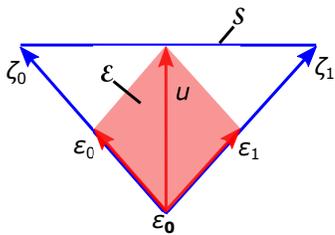}
 \caption{\label{figbit} \textbf{The GPT of a classical bit.} The state space $\mathcal{S}$ (blue horizontal line) is the convex hull of the pure states $\zeta_0$ and $\zeta_1$ (blue long diagonal arrows). The space of effects $\mathcal{E}$ (red filled area) is the convex hull of the zero effect $\varepsilon_{\bold{0}}$ (the origin), the unit effect $u$ (red vertical arrow), and the extremal effects $\varepsilon_0$ and $\varepsilon_1$ (red short diagonal arrows). The states $\zeta_0$ and $\zeta_1$, and the effects $\varepsilon_0$, $\varepsilon_1$, $u$ and $\varepsilon_{\bold{0}}$ are given by (\ref{classicalbit}), and all have the same origin. The vertical and horizontal axes represent the first and second entries of these vectors, respectively.}
\end{figure}

%An explicit representation of the vectors $\tilde{\zeta}_i$ is given, for example, by $\tilde{\zeta}_0=(1,0,\ldots,0)$,  $\tilde{\zeta}_1=(0,1,0,\ldots,0)\ldots$, $\tilde{\zeta}_N=(0,\ldots,0,1)$

\subsubsection{Finite dimensional quantum theory}

Finite dimensional quantum theory can be formulated in the framework of finite dimensional GPTs presented above too. A quantum density matrix $\rho$ acting on a complex Hilbert space $\mathcal{H}$ of finite dimension $d_\text{H}$ is a $d_\text{H}\times d_\text{H}$ Hermitian matrix with unit trace, and hence has $d_\text{H}^2-1$ real degrees of freedom. The corresponding GPT states $\zeta\in\mathcal{S}$ and effects $\varepsilon\in\mathcal{E}$ are associated to the $d=d_\text{H}^2-1$ real degrees of freedom of density matrices $\rho$ and measurement operators $M$ acting on $\mathcal{H}$, respectively, when expressed in a basis for the space of Hermitian matrices on $\mathcal{H}$ \cite{H01}. The reversible transformations correspond to the maps $\rho\rightarrow U\rho U^{\dagger}$ on density matrices, where $U\in\text{SU}(d_\text{H})$.

The case $d_\text{H}=2$ corresponds to the qubit. The qubit can be described by the GPT of a $3-$dimensional Euclidean ball, where the state space $\mathcal{S}$ is given by the Bloch ball and the set of pure states is given by the Bloch sphere. 

\subsubsection{Euclidean $d-$balls}

The GPTs for the qubit Bloch ball and more general $d-$dimensional Euclidean balls
%, also called Euclidean $d-$balls,
are illustrated in Fig. \ref{fig1}. Since Euclidean $d-$balls are natural generalizations of the qubit Bloch ball, these theories have been investigated before. For example, Ref. \cite{MPP15} used these theories to provide examples of hyperdense coding, where transmission of a system $A$ that is entangled with a system $B$ held by the receiver can communicate more than twice the amount of bits that the system $A$ alone can communicate, violating the quantum bound achieved by quantum superdense coding \cite{sdc}. Ref. \cite{MMPA14} used some physical conditions, namely continuous reversibility and tomographic locality (Postulates \hyperref[P4]{4} and \hyperref[P5]{5} presented in section\ref{particlepostulates}), to show that the only Euclidean $d-$balls with bipartite entanglement correspond to $d=3$, i.e to the qubit Bloch ball. This result was used in the reconstruction of finite dimensional quantum theory of Ref. \cite{MMAP13} and will be used here too. Other derivations of finite dimensional quantum theory have also used Euclidean $d-$balls (e.g. \cite{DB09,MM11}).

\begin{figure}
\includegraphics[scale=0.26]{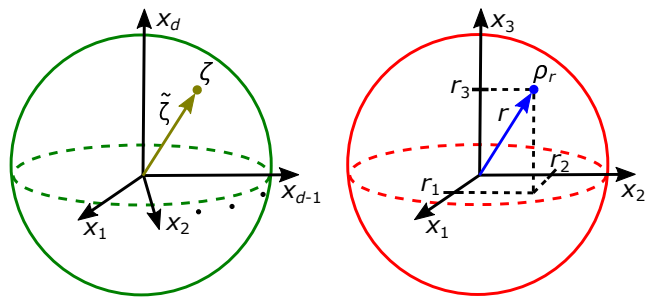}
 \caption{\label{fig1} \textbf{The GPTs of an Euclidean $d-$ball and the qubit Bloch ball.} Left: for all $d\in\mathbb{N}$, the GPT state space defining 
an \emph{Euclidean $d-$ball} is $\mathcal{S}_{\text{ball}}^{(d)}\equiv\bigl\{\zeta=\bigl(\begin{smallmatrix}
1\\ \tilde{\zeta}
\end{smallmatrix} \bigr)\vert \tilde{\zeta}\in\mathbb{R}^d,\lVert \tilde{\zeta} \rVert\leq 1\bigr\}$, where `$\lVert \cdot \rVert$' denotes the Euclidean norm. Its set of pure states is given by the \emph{Euclidean $d-$sphere}: $\mathcal{S}_{\text{sphere}}^{(d)}\equiv\bigl\{\zeta=\bigl(\begin{smallmatrix}
1\\ \tilde{\zeta}
\end{smallmatrix} \bigr)\vert \tilde{\zeta}\in\mathbb{R}^d,\lVert \tilde{\zeta} \rVert= 1\bigr\}$.  The corresponding set of effects $\mathcal{E}_{\text{ball}}^{(d)}$ is the convex hull of the zero effect $\varepsilon_{\bold{0}}\equiv\bigl(\begin{smallmatrix}
0\\ \bold{0}
\end{smallmatrix} \bigr)$, the unit effect $u\equiv\bigl(\begin{smallmatrix}1\\ \bold{0}
\end{smallmatrix} \bigr)$, and the extremal effects $\varepsilon_{v}\equiv\frac{1}{2}v$, where $v\equiv\bigl(\begin{smallmatrix}
1\\ \tilde{v}
\end{smallmatrix} \bigr)\in \mathcal{S}_{\text{sphere}}^{(d)}$, and where $\bold{0}\in\mathbb{R}^d$ is the null vector. Right: a qubit density matrix can be expressed by $\rho_r=\frac{1}{2}(I+r\cdot{\sigma})$, where $r\cdot \sigma=\sum_{j=1}^3r_j\sigma_j$, $\{\sigma_j\}_{j=1}^3$ are the
Pauli matrices and $r=(r_1,r_2,r_3)^{\text{t}}\in\mathbb{R}^3$ is the \emph{Bloch vector} \cite{NielsenandChuangbook}. The set of qubit states corresponds to the \emph{Bloch ball}: $\lVert r\rVert \leq 1$. The set of pure qubit states corresponds to the \emph{Bloch sphere}: $\lVert r\rVert = 1$. The Bloch ball and Bloch sphere are respectively represented by the GPT state spaces $\mathcal{S}_{\text{BB}}\equiv \mathcal{S}_{\text{ball}}^{(3)}$ and $\mathcal{S}_{\text{BS}}\equiv \mathcal{S}_{\text{sphere}}^{(3)}$. The qubit measurement statistics are reproduced by states from $\mathcal{S}_{\text{BB}}$ and measurements with effects from $\mathcal{E}_{\text{BB}}\equiv \mathcal{E}_{\text{ball}}^{(3)}$. For example, the probability that a qubit pure state $\rho_{\tilde{\zeta}}$ with unit Bloch vector $\tilde{\zeta}$ is projected into a pure state $\rho_{\tilde{v}}$ with unit Bloch vector $\tilde{v}$ is $\text{tr}(\rho_{\tilde{\zeta}}\rho_{\tilde{v}})=\frac{1}{2}(1+\tilde{\zeta}\cdot \tilde{v})$, which equals $\varepsilon_{v}(\zeta)$. The qubit unitary dynamics SU$(2)$ corresponds to the GPT set of reversible transformations $\mathcal{R}_{\text{BB}}\equiv\Bigl\{\tau\equiv\Bigl(\begin{smallmatrix}
  1 & 0 \\
  0 & \tilde{\tau} 
 \end{smallmatrix}\Bigr)\big\vert \tilde{\tau}\in\text{SO}(3)\Bigr\}$, which connects all pure states in $\mathcal{S}_{\text{BB}}$, i.e. all states in $\mathcal{S}_{\text{BS}}$.
}
\end{figure}

\subsubsection{Polygon theories}
\label{secpolygon}
Another simple example of GPTs is given by the \emph{polygon theories} introduced in Ref. \cite{JGBB11}. In these theories, the state space $\mathcal{S}$ of a single system is such that $\tilde{\mathcal{S}}$ is a regular polygon of $N$ vertices (see Fig. \ref{figpol}). The dimension of these GPTs is therefore $d=2$. For a given integer $N\geq3$, $\mathcal{S}$ is the convex hull of $N$ pure states
\begin{equation}
\label{polygonstates}
\zeta_i=\begin{pmatrix}
1\\ r_N \cos\bigl(\frac{2\pi(i+1)}{N}\bigr) \\ r_N \sin\bigl(\frac{2\pi(i+1)}{N}\bigr)
\end{pmatrix},
\end{equation}
for all $i\in\{0,1,\ldots,N-1\}$, where $r_N=\sqrt{\sec(\frac{\pi}{N})}$. The unit effect and the zero effect are respectively
\begin{equation}
\label{polygonunit}
u=\begin{pmatrix}
1\\ 0 \\ 0
\end{pmatrix} \quad \text{ and } \quad\varepsilon_\bold{0}=\begin{pmatrix}
0\\ 0 \\ 0
\end{pmatrix}.
\end{equation}
If $N$ is even, the set of normalized effects $\mathcal{E}_\text{norm}$ is the convex hull of $\varepsilon_\bold{0}$, $u$ and the effects
\begin{equation}
\label{polygoneffectseven}
\varepsilon_i=\frac{1}{2}\begin{pmatrix}
1\\ r_N \cos\bigl(\frac{(2i+1)\pi}{N}\bigr) \\ r_N \sin\bigl(\frac{(2i+1)\pi}{N}\bigr)
\end{pmatrix}, 
\end{equation}
for all $i\in\{0,1,\ldots,N-1\}$. If $N$ is odd, $\mathcal{E}_\text{norm}$ is the convex hull of $\varepsilon_\bold{0}$, $u$ and the effects
\begin{equation}
\label{polygoneffectsodd}
\varepsilon_i=\frac{1}{1+r_N^2}\begin{pmatrix}
1\\ r_N \cos\bigl(\frac{2\pi(i+1)}{N}\bigr) \\ r_N \sin\bigl(\frac{2\pi(i+1)}{N}\bigr)
\end{pmatrix} 
\end{equation}
and $\bar{\varepsilon}_i= u -\varepsilon_i$, for all $i\in\{0,1,\ldots,N-1\}$. As mentioned above, the space of effects $\mathcal{E}$ must satisfy $\mathcal{E}\subseteq\mathcal{E}_\text{norm}$. Let us consider in this example that $\mathcal{E}=\mathcal{E}_\text{norm}$.

\begin{figure}
\includegraphics[scale=0.26]{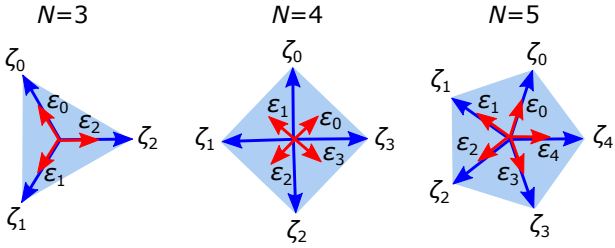}
 \caption{\label{figpol} \textbf{Polygon GPTs.} We illustrate the second and third entries of the pure states $\zeta_0,\zeta_1,\ldots,\zeta_N$ (blue long arrows) given by (\ref{polygonstates}) and the effects $\varepsilon_0,\varepsilon_1,\ldots,\varepsilon_N$ (red short arrows) given by (\ref{polygoneffectseven}) and (\ref{polygoneffectsodd}), for the cases $N=3,4,5$. The state space $\mathcal{S}$ (blue filled area) is the convex hull of the pure states. The cases $N=3$ and $N=4$ correspond to a classical trit and to half of a Poposcu-Rohrlich box, respectively. The state space $\mathcal{S}$ and space of effects $\mathcal{E}$ for these theories remain invariant under rotations $R(j\theta)$ in the illustrated plane by angles $j\theta$ with $\theta=\frac{2\pi}{N}$, for all $j\in\{0,1,\ldots,N-1\}$ (see (\ref{polygonrot})).}
\end{figure}

An important property of these theories that we will use in an example in section \ref{secmainpostulate} (see Fig. \ref{figtoyspacetime}) is that a rotation $R(j\theta)$ of an angle $j\theta$ in the plane of the last two dimensions, with $\theta=\frac{2\pi}{N}$, satisfies
\begin{eqnarray}
\label{polygonrot}
R(j\theta)\zeta_i&=&\zeta_{i+j \text{ mod }N},\nonumber\\
R(j\theta)\varepsilon_i&=&\varepsilon_{i+j \text{ mod }N},\nonumber\\
R(j\theta)\bar{\varepsilon}_i&=&\bar{\varepsilon}_{i+j \text{ mod }N},
\end{eqnarray}
for all $i\in\{0,1,\ldots,N-1\}$ and arbitrary integer $j$. Thus, we see that the rotations $R(j\theta)$ leave $\mathcal{S}$ and $\mathcal{E}$ invariant, and when applied on both $\mathcal{S}$ and $\mathcal{E}$ leave the outcome probabilities $\varepsilon(\zeta)$ invariant. It is clear that $R(j\theta)=R\bigl((j\text{ mod } N)\theta\bigr)$, for any integer $j$. Thus, these rotations are given by the set $\{R(j\theta)\}_{j=0}^{N-1}$. We note that this set includes their inverse transformations, as 
\begin{equation}
\label{polygonrotinv}
R(j\theta)R\bigl(\bigl((N-j) \text{ mod } N\bigl)\theta\bigr)= I,
\end{equation}
for all $j\in\{0,1,\ldots,N-1\}$. It follows that the set $\{R(j\theta)\}_{j=0}^{N-1}$ is a group of reversible transformations.

A classical trit corresponds to the case $N=3$. In this case we can see that $\varepsilon_0+\varepsilon_1+\varepsilon_2=u$ and $\varepsilon_j(\zeta_i)=\delta_{i,j}$, for all $i,j\in\{0,1,2\}$. Thus, the measurement $\{\varepsilon_j\}_{j=0}^2$ completely determines the pure states. Another important example is given by $N=4$, as explained below. Moreover, the limit $N\rightarrow\infty$ corresponds to the equatorial plane of the qubit Bloch ball, i.e. to a qubit in a real Hilbert space.

By defining entangled states for a pair of system locally described by the same polygon theory, Ref. \cite{JGBB11} investigated the consequences that gradually weakening the superposition principle,
from $N\rightarrow\infty$ to $N=3$, have on the degree of violation of the Clauser-Horne-Shimony-Holt (CHSH) Bell inequality \cite{CHSH69}. The communication capabilities of polygon theories were investigated in Ref. \cite{MP14}.

\subsubsection{Box world}

The case $N=4$ in the polygon theories described above represents a particular system of a GPT called \emph{box world} \cite{B07,SB10,AS14}, introduced by Barrett \cite{B07} as \emph{generalized non-signalling theory}, in which all non-signalling correlations are achieved. An entangled state of two systems, where each systems is described by a polygon theory with $n=4$ is called a Popescu-Rohrlich (PR) box \cite{PR94}. A PR box achieves the maximum violation of the CHSH Bell inequality \cite{CHSH69} that is mathematically possible, while satisfying the no-signalling principle, and thus obtaining a violation greater than the quantum Tsirelson bound \cite{C80}. 

Box world has greater capabilities than quantum theory for some tasks. For example, as mentioned above, it is more non-local than quantum theory, in the sense that it allows correlations that violate Bell inequalities to higher values than those achieved by quantum correlations. It can also solve some communication complexity tasks trivially, i.e. with the transmission of a single bit \cite{vD05}.

On the other hand, box world is more restricted than quantum theory for other tasks. For instance, it has trivial reversible dynamics comprising only combinations of local operations, which relabel the measurements and outcomes, and permutations of local systems, and thus does not have reversible interactions between different systems \cite{GMCD10,AS14}. Furthermore, it does not have entanglement swapping, teleportation or dense coding \cite{SB10}.

\subsection{Continuous dimensions}
\label{continuous}
In this paper, we need to consider continuous dimensional GPTs in order to describe 
the momentum of a massive particle, which is a continuous dimensional physical variable. Continuous dimensional GPTs have been investigated before (e.g. \cite{DL70,E70,H16}).
In this work, we only assume that continuous dimensional GPTs satisfy the following very basic properties. Let $\mathscr{S}$, $\mathscr{E}$ and $\mathscr{T}$ be the sets of states, effects and allowed transformations for a physical system of continuous dimension, respectively. 
%For the same reasons as in the finite dimensional case, we assume that $\mathscr{S}$ is convex and can be embedded in a vector space $W$ over $\mathbb{R}$, and that $\mathscr{E}$ and $\mathscr{T}$ are convex sets of linear maps acting on $W$ \cite{H01,B07}. 
We require that $\hat{E}: \mathscr{S}\rightarrow [0,1]$ and $\hat{\bold{T}}: \mathscr{S}\rightarrow \mathscr{S}$ for all $\hat{E}\in\mathscr{E}$ and all $\hat{\bold{T}}\in\mathscr{T}$.

%\section{A set of physical principles and postulates for spacetime}
%\label{spacetimepostulates}

%We assume that the physical principles and postulates presented below hold. It follows, as our main result, Theorem \ref{theorem1}, shows, that spacetime is Minkowski in $1+3$ dimensions and that any finite dimensional system can be described by finite dimensional quantum theory. 

%\subsection{Derivation of Minkowski spacetime in $1+n$ dimensions}

%\subsection{A physical derivation of Minkowski spacetime in $1+n$ dimensions}

\section{A physical derivation of Minkowski spacetime in $1+n$ dimensions}
\label{derivingMinkowski}

In this section we present a well known physical derivation of Minkowski spacetime from physical principles \cite{LandauLifshitzbook,WeinbergGravitationandCosmologybook}. We leave the number of spatial dimensions $n$ as a free variable.

\subsection{Our model for spacetime}
\label{model}
%\section{Our model: spacetime}

\reff{We consider that spacetime is mathematically described by a real pseudo-Riemannian manifold of dimension $n+1$, for some unspecified $n\in\mathbb{N}$. That is, we assume that spacetime is a real differentiable manifold with a smooth, symmetric, non-degenerate metric tensor in every spacetime point. Broadly speaking, this allows us to do calculus at every spacetime point and to define geometric properties in the neighbourhood of every spacetime point, e.g., distance and curvature \cite{Schutzbook}. This will suffice to derive Minkowski spacetime from physical principles below.

Our definition includes, as a special case, the spacetimes of general relativity, which are four-dimensional Lorentzian manifolds, i.e., four-dimensional pseudo-Riemannian manifolds with metric signature $(-,+,+,+)$ \cite{Schutzbook}. We note that time-orientable Lorentzian manifolds comprise an important class of spacetimes of general relativity, in which, broadly speaking, past and future causal relations can be assigned unambiguously for every pair of causally connected spacetime points \cite{Waldbook}. Nevertheless, there are solutions to Einstein's equations with closed timelike curves, for which past and future cannot be unambiguously defined, and hence are not time-orientable (e.g., the G\"{o}del metric \cite{G49}).

In addition to including Minkowski spacetime and the spacetimes of general relativity as special cases, our model also allows for spacetimes arising in extensions of general relativity \cite{H19}. However, in this paper we focus on Minkowski spacetime, with the particularity that the number of spatial dimensions is arbitrary.}

We consider that time has one dimension and space has $n$ dimensions, for an arbitrary and unspecified $n\in\mathbb{N}$. \reff{That is, we define the first dimension in the spacetime manifold as the \emph{time} dimension and the others as the \emph{spatial} dimensions.} In a reference frame $F$, we call a \emph{spacetime event with coordinates $x=(x_0,x_1,\ldots,x_n)$}, or simply a \emph{spacetime point $x$}, to an event occurring at a location with coordinates $x_1,\ldots,x_n$ in space and at a time $t$, where $x_0=tc$ and $c$ is the speed of light in vacuum. We use units in which $c=1$, hence, $x_0=t$. We define $\vec{x}=(x_1,\ldots,x_n)$. Thus, we can equivalently write $x=(x_0,\vec{x})$.

In this paper we only consider reference frames that are $\emph{inertial}$, i.e. that move with respect to each other at constant velocity. Thus, when we say that a spacetime event has coordinates $x$ in a reference frame $F$ and $x'$ in another reference frame $F'$, we implicitly assume that $F$ and $F'$ are inertial.

\subsection{Minkowski spacetime and the Poincar{\'e} group in $1+n$ dimensions}

\reff{Minkowski spacetime in $1+n$ dimensions is a straightforward generalization of the standard four-dimensional Minkowski spacetime, with the metric given by
the $(1+n)\times (1+n)$ matrix $\eta$ with entries
\begin{equation}
\label{metric}
\eta_{\mu\nu}=\begin{cases}
 -1, \text{ if } \mu=\nu=0,\\
 1, \text{ if } \mu=\nu>0,\\
0, \text{ otherwise},
\end{cases}
\end{equation}
for all $\mu,\nu\in\{0,1,\ldots,n\}$. As Lemma \ref{Minkowski} below shows, Minkowski spacetime follows from a set of physical principles.

Moreover, as Lemma \ref{Poincare} below shows, the non-singular coordinate transformations between inertial reference frames in Minkowski spacetime are Poincar{\'e} transformations. A \emph{Poincar{\'e} transformation} $P(a,\Lambda)$ is a transformation on $x$ of the form
\begin{equation}
\label{neweq:l1}
x_\mu\xrightarrow{P(a,\Lambda)} x'_\mu=\sum_{\nu=0}^{n}\Lambda_{\mu\nu} x_\nu+a_\mu,
\end{equation}
where $x_\mu, a_\mu,\Lambda_{\mu\nu}\in\mathbb{R}$, and where the following relation holds:
\begin{equation}
\label{newnew1}
\sum_{\mu=0}^n\sum_{\nu=0}^n\eta_{\mu\nu}\Lambda_{\mu\alpha}\Lambda_{\nu\beta}=\eta_{\alpha\beta},
\end{equation}
for all $\mu,\alpha,\beta\in\{0,1,\ldots,n\}$. A \emph{Lorentz transformation} is a Poincar{\'e} transformation $P(\vec{0},\Lambda)$.

The Poincar{\'e} transformations form the \emph{Poincar{\'e} group}, denoted here by $\mathfrak{Poin}_{\text{full}}$. The Lorentz transformations form the \emph{Lorentz group}, denoted here by $\mathfrak{L}_{\text{full}}$, which is a subgroup of $\mathfrak{Poin}_{\text{full}}$. The \emph{proper orhochronous Lorentz group} $\mathfrak{L}$ is the subgroup of $\mathfrak{L}_{\text{full}}$ comprising the Lorentz transformations that are continuously connected to the identity, i.e., the spatial rotations and the Lorentz boosts. The \emph{proper orthochronous Poincar{\'e} group} $\mathfrak{Poin}$ comprises the Poincar{\'e} transformations $P(a,\Lambda)$ with $\Lambda\in\mathfrak{L}$. We are only interested here in Poincar{\'e} transformations that are continuously connected to the identity, i.e., $P(a,\Lambda)\in \mathfrak{Poin}$.

It is easy to see from (\ref{neweq:l1}) that an arbitrary pair of Poincar{\'e} transformations compose as
\begin{equation}
\label{eq:l1.1}
P(a',\Lambda')\circ P(a,\Lambda) =P(a'+\Lambda' a,\Lambda' \Lambda),
\end{equation}
for all $a', a \in\mathbb{R}^{1+n}$ and all $\Lambda', \Lambda \in\mathfrak{L}_{\text{full}}$.

}

\subsection{A set of physical principles and postulates for Minkowski spacetime in $1+n$ dimensions}

As Lemmas \ref{Minkowski} and \ref{Poincare} below show, the following well established principles and postulates imply that spacetime is Minkowski in $1+n$ dimensions and the coordinate transformations between inertial reference frames are Poincar{\'e} transformations.

\begin{principle}[\textbf{Relativity Principle}]
\label{RP}
The laws of physics are identical in all inertial reference frames.
\end{principle}

\begin{principle}[\textbf{Constancy of the Speed of Light}]
\label{CSL}
The speed of light in vacuum is a constant $c$ in all inertial reference frames.
\end{principle}

\begin{principle}[\textbf{Homogeneity of Space and Time}]
\label{HST}
The laws of physics are identical at all locations in space and at all times.
\end{principle}

\begin{principle}[\textbf{Isotropy of Space}]
\label{IS}
The laws of physics are identical in all directions of space.
\end{principle}

\begin{postulatea}[\textbf{Euclidean Spatial Distance}]
\label{ESD}
The spatial distance $\lvert\vec{y}-\vec{x}\rvert$ between the space locations $\vec{x}$ and $\vec{y}$ of respective spacetime points $x$ and $y$ in a reference frame $F$ can be determined with Euclidean geometry, i.e. $\lvert\vec{y}-\vec{x}\rvert=\sqrt{\sum_{i=1}^n (y_i-x_i)^2}$.
\end{postulatea}

\begin{postulatea}[\textbf{Non-singularity of Coordinate Transformations}]
\label{NCT}
If an event in spacetime has coordinates $x$ in a reference frame $F$ and $x'$ in a reference frame $F'$ then the coordinate transformation $x\rightarrow x'$ is non-singular.
\end{postulatea}

\reff{We note that the \emph{cosmological principle} comprises Principles \ref{HST} and \ref{IS} applied on sufficiently large scales in the universe \cite{Colesbook}. It implies the \emph{Robertson-Walker}, also called \emph{Friedmann-Lema\^{i}tre-Robertson-Walker}, cosmological models \cite{Waldbook}.}

\subsection{A well known physical derivation of Minkowski spacetime in $1+n$ dimensions}

The following lemmas are well known in the literature. These are usually stated for Minkowski spacetime in $1+3$ dimensions, but their generalization to $n$ spatial dimensions is straightforward. \reff{In Appendix \ref{app},} we present proofs that are close to the ones given by Refs. \cite{LandauLifshitzbook} and \cite{WeinbergGravitationandCosmologybook}, respectively. \reff{We note that there exist other axiomatic derivations of Minkowski spacetime and the curved spacetimes of general relativity (e.g. \cite{EPS72}).}

\begin{lemmaa}
\label{Minkowski}
If Principles \ref{RP} -- \ref{IS} and Postulate \ref{ESD} hold, then spacetime is Minkowski in $1+n$ dimensions. 
\end{lemmaa}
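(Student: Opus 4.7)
The plan is to follow the classical Landau--Lifshitz derivation, kept in $1+n$ dimensions throughout. First I would invoke Homogeneity of Space and Time (Principle \ref{HST}) to argue that the coordinate transformation between two inertial frames must map uniformly moving worldlines to uniformly moving worldlines, with the mapping independent of the spacetime point at which it is evaluated; together with Postulate \ref{NCT} this forces the transformation to be affine, so its linear part acts linearly on infinitesimal coordinate differentials $dx$. Postulate \ref{ESD} then fixes the form of spatial distance in each frame.

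Next I would introduce the quadratic form $(ds)^{2}\equiv -(dx_{0})^{2}+\sum_{i=1}^{n}(dx_{i})^{2}$ built from the metric $\eta$ in (\ref{metric}). For two infinitesimally close events connected by a light signal, Principle \ref{CSL} combined with Postulate \ref{ESD} gives $\lvert d\vec{x}\rvert=dx_{0}$ in every inertial frame, so $(ds)^{2}=0$ if and only if $(ds')^{2}=0$. Because both $(ds)^{2}$ and $(ds')^{2}$ are quadratic forms in the same differentials (by the linearity just established) that share the light cone as their common zero set, a standard algebraic argument forces $(ds')^{2}=a\,(ds)^{2}$ for some scalar factor $a$.

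The remaining step is to pin down $a=1$. By Homogeneity (Principle \ref{HST}), $a$ cannot depend on the spacetime point. By the Relativity Principle (Principle \ref{RP}) together with the Isotropy of Space (Principle \ref{IS}), the factor relating two inertial frames can depend at most on the magnitude $\lvert\vec{v}\rvert$ of their relative velocity, not on its direction. Introducing a third inertial frame $F''$ and comparing the factor $a(\lvert\vec{v}_{FF''}\rvert)$ for the direct transformation $F\to F''$ with the product $a(\lvert\vec{v}_{FF'}\rvert)\,a(\lvert\vec{v}_{F'F''}\rvert)$ of the two-step composition $F\to F'\to F''$ yields a functional equation which, because $\lvert\vec{v}_{FF''}\rvert$ depends on the angle between $\vec{v}_{FF'}$ and $\vec{v}_{F'F''}$ while the right-hand side does not, forces $a$ to be constant; the reciprocity $F\leftrightarrow F'$ combined with $a=$ const then gives $a^{2}=1$ and hence $a=1$. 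Thus $(ds)^{2}$ is invariant under all changes of inertial frame, and integrating along coordinate paths yields invariance of the finite interval $(\Delta s)^{2}=-(\Delta x_{0})^{2}+\sum_{i=1}^{n}(\Delta x_{i})^{2}$, which is the defining property of Minkowski spacetime in $1+n$ dimensions.

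The main obstacle will be the first and the third steps together: deducing genuine linearity of the coordinate transformation from homogeneity alone (rather than from a stronger assumption about the form of free motion), and then carrying the three-frame/isotropy argument through in arbitrary dimension $n$. The algebraic part of the three-frame argument is dimension-independent because it relies only on the existence of spatial rotations in $n$ dimensions (guaranteed by Principle \ref{IS} for any $n$) and on each inertial-frame change possessing an inverse of the same type (guaranteed by Principle \ref{RP}); so the derivation passes from $n=3$ to general $n$ without modification, provided these foundational points are handled carefully at the outset.
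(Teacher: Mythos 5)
Your proposal is essentially the paper's own route: the appendix proof follows Landau--Lifshitz, establishing $\Delta s=0\Rightarrow\Delta s'=0$ from Principles \ref{RP}, \ref{CSL} and Postulate \ref{ESD}, then defining $a$ by $\Delta s=a\,\Delta s'$, using Principles \ref{HST} and \ref{IS} to reduce $a$ to a function of $\lvert\vec v\rvert$ alone, and running the same three-frame functional equation $a(v')/a(v)=a(v'')$ to force $a$ constant and equal to $1$. The one substantive deviation is your opening step: you invoke Postulate \ref{NCT} to get affinity of the coordinate transformation, but \ref{NCT} is \emph{not} among the hypotheses of this lemma (it enters only in Lemma \ref{Poincare}, where linearity of the transformation is actually derived as a consequence of interval invariance rather than assumed beforehand). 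The paper avoids needing linearity or differentiability here by working directly with finite intervals $\Delta s$, defining $a$ as the ratio for a given pair of events and letting Homogeneity kill any coordinate dependence; your infinitesimal quadratic-form argument is fine, but if you keep it you should either note that the pointwise proportionality of the two forms on a common null cone does not require linearity (only that $(ds')^2$ be quadratic in the differentials at each point, with Homogeneity removing the point dependence), or restate the lemma with \ref{NCT} added. Also, your closing reciprocity step gives only $a^2=1$; the paper's version of the functional equation yields $a=1$ directly, so you would still need a word (e.g.\ continuity to the identity transformation) to exclude $a=-1$.
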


\begin{lemmaa}
\label{Poincare}
Let spacetime be Minkowski in $1+n$ dimensions and let Postulate \ref{NCT} hold. If $x$ and $x'$ are the coordinates of a spacetime event in inertial reference frames $F$ and $F'$, respectively, then the coordinate transformation $x\rightarrow x'$ is a Poincar{\'e} transformation.
\end{lemmaa}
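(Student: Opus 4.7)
The plan is to use the fact that, by definition, in Minkowski spacetime the metric $\eta_{\mu\nu}$ of (\ref{metric}) takes the same form in every inertial frame, so that the infinitesimal interval $ds^2=\sum_{\mu,\nu=0}^{n}\eta_{\mu\nu}\,dx_\mu dx_\nu$ is invariant under $x\mapsto x'$. By Postulate \ref{NCT} the transformation is sufficiently smooth with a non-singular Jacobian; writing $J_{\alpha\mu}(x)\equiv \partial x'_\alpha/\partial x_\mu$, the chain rule gives $dx'_\alpha=\sum_{\mu=0}^{n} J_{\alpha\mu}(x)\,dx_\mu$, and imposing $ds^2=\sum_{\alpha,\beta=0}^n \eta_{\alpha\beta}\,dx'_\alpha dx'_\beta$ for arbitrary $dx_\mu$ yields the pointwise identity
\begin{equation}
\label{prop:jacid}
\sum_{\alpha,\beta=0}^n \eta_{\alpha\beta}\,J_{\alpha\mu}(x)\,J_{\beta\nu}(x)=\eta_{\mu\nu},
\end{equation}
which is exactly relation (\ref{newnew1}) applied to $\Lambda=J(x)$ at each point; taking determinants confirms that $J(x)$ is invertible everywhere.

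The crux of the argument is to show that $J_{\alpha\mu}$ is independent of $x$. Differentiating (\ref{prop:jacid}) with respect to $x_\rho$ and defining
\begin{equation}
A_{\mu\nu\rho}(x)\equiv \sum_{\alpha,\beta=0}^n \eta_{\alpha\beta}\,\frac{\partial J_{\alpha\mu}}{\partial x_\rho}\,J_{\beta\nu},
\end{equation}
one finds $A_{\mu\nu\rho}+A_{\nu\mu\rho}=0$, i.e.\ antisymmetry in the first two indices. On the other hand, since $\partial J_{\alpha\mu}/\partial x_\rho=\partial^2 x'_\alpha/(\partial x_\rho\,\partial x_\mu)$ is symmetric in $\mu$ and $\rho$, one also has $A_{\mu\nu\rho}=A_{\rho\nu\mu}$. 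Chaining these two symmetries,
\begin{equation}
A_{\mu\nu\rho}=A_{\rho\nu\mu}=-A_{\nu\rho\mu}=-A_{\mu\rho\nu}=A_{\rho\mu\nu}=A_{\nu\mu\rho}=-A_{\mu\nu\rho},
\end{equation}
so $A_{\mu\nu\rho}\equiv 0$. The invertibility of $J$ then forces $\partial J_{\alpha\mu}/\partial x_\rho=0$ identically on the domain.

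With $J_{\alpha\mu}$ a constant matrix, which I relabel $\Lambda_{\alpha\mu}$, integrating $\partial x'_\alpha/\partial x_\mu=\Lambda_{\alpha\mu}$ yields $x'_\alpha=\sum_{\mu=0}^n \Lambda_{\alpha\mu}\,x_\mu+a_\alpha$ for some constants $a_\alpha\in\mathbb{R}$, with $\Lambda$ obeying (\ref{newnew1}) by virtue of (\ref{prop:jacid}). This is precisely the form (\ref{neweq:l1}) of a Poincar{\'e} transformation, so the coordinate map belongs to $\mathfrak{Poin}_{\text{full}}$. Because $F$ and $F'$ can always be continuously connected through intermediate inertial frames (trivially at $F'=F$), continuity restricts $\Lambda$ to the identity-connected component $\mathfrak{L}$, giving $P(a,\Lambda)\in\mathfrak{Poin}$ as claimed.

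The main obstacle I anticipate is the index-shuffling step that establishes $A_{\mu\nu\rho}\equiv 0$; once that combinatorial identity is in place, the rest of the argument is a routine combination of the chain rule, integration of a constant gradient, and a continuity argument to restrict to the proper orthochronous component.
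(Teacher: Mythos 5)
Your proof is correct and follows essentially the same route as the paper (which reproduces Weinberg's argument): invariance of $ds^2$ gives the pointwise Jacobian identity, differentiating it and exploiting the symmetry of second partials kills $\partial^2 x'/\partial x\,\partial x$, and integration yields the affine form satisfying (\ref{newnew1}). Your chaining of the antisymmetry and symmetry of $A_{\mu\nu\rho}$ is just a repackaging of the paper's add-and-subtract of permuted copies of the differentiated identity, and your closing restriction to the identity-connected component is not needed, since the lemma only claims membership in $\mathfrak{Poin}_{\text{full}}$.
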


%\subsection{Derivation of finite dimensional quantum theory and $n=3$}

%\section{A physical derivation of finite dimensional quantum theory and the number of spatial dimensions}

\section{Our main postulate: spacetime symmetries in general probabilistic theories}
\label{secmainpostulate}
In this section we present our main physical contribution to the literature of general probabilistic theories and axiomatic reconstructions of quantum theory. We present a postulate that suggests that there is a fundamental connection between the mathematical structures of spacetime and quantum theory. More precisely, we suggest that the Hilbert space structure of finite dimensional quantum theory has, at least to some extent, its origin in the symmetries of Minkowski spacetime. To the best of our knowledge, a postulate similar to ours has not been considered before \reff{in the framework of GPTs. However, a similar postulate was proposed by Svetlichny \cite{S00} in the framework of quantum logic.}

%In a related but different approach, Ref. \cite{GMD17} has considered the simultaneity of relativity in general probabilistic theories. At the end of this section we will differentiate the approach of Ref. \cite{GMD17} with ours and will discuss some important shortcomings of Ref. \cite{GMD17}.

%In the following discussion we assume that spacetime is arbitrary\reff{, as given by our model of section \ref{model}}. We will later consider the particular case of Minkowski spacetime.

%\reff{In this section we mainly discuss the case of arbitrary spacetimes, given by our model in section \ref{model}. We will be explicit when discussing the particular case of Minkowski spacetime.}

\reff{
Broadly speaking, in general relativity, the principle of general covariance states that the laws of physics are invariant under arbitrary smooth coordinate transformations, i.e., under arbitrary diffeomorphisms. This implies in particular that Einstein's equations are the same in all reference frames. In the particular case of Minkowski spacetime, Poincar{\'e} invariance states that the laws of physics remain invariant under arbitrary changes of inertial reference frames, which are given by the proper orthochronous Poincar{\'e} transformations. The intuition behind general covariance and Poincar{\'e} invariance is that different reference frames merely provide different descriptions of the same physical events \cite{Waldbook}. We generalize these ideas for arbitrary spacetimes in the framework of GPTs below.

Consider an arbitrary spacetime, given by the model of section \ref{model}. Let $\mathcal{O}$ be an observer with a reference frame $F$. That is, $F$ defines a coordinate system in which $\mathcal{O}$ describes physical events in spacetime. A different reference frame $F'$ can be associated to another observer $\mathcal{O}'$, or to the same observer using a different coordinate system. Let $G$ be a diffeomorphism that transforms the spacetime manifold as described in $F$ to the spacetime manifold as described in $F'$. The descriptions of physical events in the reference frames $F$ and $F'$ are in general different. But, it is sensible that the laws of physics should not change in different frames. In particular, the probability of any physical event is expected to remain invariant by changing reference frames.

Thus, let us assume that there exists a group $\mathfrak{G}$ of transformations $G$ of reference frames $G: F\rightarrow F'$ that does not change the laws of physics. More precisely, let the outcome probabilities for arbitrary physical events in spacetime remain invariant under transformations $G\in\mathfrak{G}$. In this case, we say that $\mathfrak{G}$ is a \emph{group of spacetime symmetries}. As mentioned above, in Minkowski spacetime, $\mathfrak{G}=\mathfrak{Poin}$. Below we define invariant general probabilistic theories under $\mathfrak{G}$. }

Consider a GPT with set of states $\mathscr{S}$, set of effects $\mathscr{E}$ and set of allowed transformations $\mathscr{T}$. The GPT's dimension can be finite dimensional or continuous dimensional. We are explicitly using the notation for continuous dimensional GPTs introduced in section \ref{continuous}, as this will be useful in following sections. 

Let $F$ be a reference frame in spacetime. We consider a passive transformation $G^{-1}\in\mathfrak{G}$, which is the inverse of a transformation $G\in\mathfrak{G}$, applied on $F$ (see Fig. \ref{fig2}). We say that a state $Z\in\mathscr{S}$ and an effect $\hat{E}\in\mathscr{E}$ transform into $Z'$ and $\hat{E}'$ under $G$ if in the new reference frame $F'$, the state and effect are given by $Z'$ and $\hat{E}'$, respectively. We say that the GPT is \emph{invariant under $\mathfrak{G}$} if for any state $Z\in\mathscr{S}$ and for any effect $\hat{E}\in\mathscr{E}$ in the reference frame $F$, and for any $G\in\mathfrak{G}$, it holds that $Z$ and $\hat{E}$ transform as
\begin{eqnarray}
\label{x0.1}
Z\xrightarrow{G} Z'&\equiv& \hat{\bold{R}}^{\text{st}}(G)[Z],\nonumber\\
\hat{E}\xrightarrow{G} \hat{E}'&\equiv& \hat{\bold{R}}^{\text{ef}}(G)[\hat{E}],
\end{eqnarray}
under $G$, where $Z'\in\mathscr{S}$ and $\hat{E}'\in\mathscr{E}$, and where $\hat{\bold{R}}^{\text{st}}$ and $\hat{\bold{R}}^{\text{ef}}$ are representations of $\mathfrak{G}$; and the outcome probabilities remain invariant:
\begin{equation}
\label{x0.3}
\hat{E}'[Z']=\hat{E}[Z].
\end{equation}
%We could also impose that $\hat{\bold{R}}^{\text{st}}(G)\in\mathscr{T}$, for all $G\in\mathfrak{G}$, i.e. that the spacetime symmetry transformations $G\in\mathfrak{G}$ must induce allowed transformations in the state space $\mathscr{S}$.  However, we do not need to impose this extra condition for the results obtained in this paper.

\begin{figure}
\includegraphics[scale=0.32]{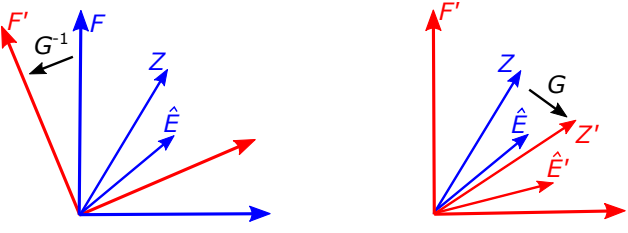}
 \caption{\label{fig2} \textbf{Transformations on reference frames.} A passive transformation $G^{-1}\in\mathfrak{G}$, which is the inverse of a transformation $G\in\mathfrak{G}$, is applied to the reference frame $F$. We illustrate a transformation $G\in\mathfrak{G}$ comprising a rotation in a two-dimensional space. Left: On the frame $F$, we illustrate by blue arrows a state $Z$ and an effect $\hat{E}$ of a given GPT; the frame $F$ (blue bold perpendicular axes) is transformed into a new frame $F'$ (red bold perpendicular axes). Right: in the frame $F'$, the state $Z$ and effect $E$ (blue arrows) are transformed into a new state $Z'$ and a new effect $\hat{E}'$ (red arrows), respectively. The GPT is invariant under $\mathfrak{G}$, if $Z'=\hat{\bold{R}}^{\text{st}}(G)[Z]$ and $\hat{E}'=\hat{\bold{R}}^{\text{ef}}(G)[\hat{E}]$, where $\hat{\bold{R}}^{\text{st}}$ and $\hat{\bold{R}}^{\text{ef}}$ are representations of $\mathfrak{G}$, and if it holds that the outcome probabilities remain invariant, that is, $\hat{E}'(Z')=\hat{E}(Z)$. We illustrate the simple case $\hat{\bold{R}}^{\text{st}}(G)=\hat{\bold{R}}^{\text{ef}}(G)=G$.} 
\end{figure}

For $a\in\{\text{st},\text{ef}\}$, $\hat{\bold{R}}^{a}$ is a representation of $\mathfrak{G}$ if it holds that
\begin{eqnarray}
\label{x0.2}
\hat{\bold{R}}^{a}(I)&=&\hat{\bold{I}}^{a},\nonumber\\
%\hat{\bold{R}}^{\text{ef}}(I)&=&\hat{\bold{I}}^{\text{ef}},\nonumber\\
\hat{\bold{R}}^{a}(G_2)\circ\hat{\bold{R}}^{a}(G_1)&=&\hat{\bold{R}}^{a}(G_2\circ G_1);
%\hat{\bold{R}}^{\text{ef}}(G_2)\circ\hat{\bold{R}}^{\text{ef}}(G_1)&=&\hat{\bold{R}}^{\text{ef}}(G_2\circ G_1);
\end{eqnarray}
for all $G_2,G_1\in\mathfrak{G}$, where $I$ is the identity element of $\mathfrak{G}$, $\hat{\bold{I}}^{\text{st}}$ is the identity acting on $\mathscr{S}$, and $\hat{\bold{I}}^{\text{ef}}$ is the identity acting on $\mathscr{E}$. $\hat{\bold{R}}^{a}$ is a trivial representation of $\mathfrak{G}$ if $\hat{\bold{R}}^{a}(G)=\hat{\bold{I}}^{a}$, for all $G\in\mathfrak{G}$ and for all $a\in\{\text{st},\text{ef}\}$.

We say that a GPT is \emph{nontrivially invariant under $\mathfrak{G}$} if the representations $\hat{\bold{R}}^{\text{st}}$ and $\hat{\bold{R}}^{\text{ef}}$ of $\mathfrak{G}$ above are not the trivial representations. We note from the invariance of probabilities that if a GPT is invariant under $\mathfrak{G}$ and one of the representations $\hat{\bold{R}}^{\text{st}}$ and $\hat{\bold{R}}^{\text{ef}}$ is trivial (nontrivial) then the other representation must also be trivial (nontrivial).

%In this paper we consider the special case in which $\mathfrak{G}$ is continuously connected to the identity. For example, we consider that spacetime is Minkowski in $n+1$ dimensions and that $\mathfrak{G}=\mathfrak{Poin}$, where $\mathfrak{Poin}$ is the group of Poincar{\'e} transformations in $n+1$ dimensions that are continuously connected to the identity, which is called the proper orthochronous Poincar{\'e} group, for all $n\in\mathbb{N}$. 
%We also consider the case that spacetime is Galilean in $n+1$ dimensions. In this case, we take $\mathfrak{G}=\mathfrak{Gal}$, where $\mathfrak{Gal}$ is the Galilei group in $n+1$ dimensions. Unless otherwise stated, in the following we consider that spacetime is Minkowski in $3+1$ dimensions and that $\mathfrak{G}=\mathfrak{Poin}$.

In order to provide a physical intuition of our definition of GPTs invariant under the group $\mathfrak{G}$ of spacetime symmetries let us consider now that spacetime is Minkowski in $1+3$ dimensions. Consider an idealized thought experiment in which in the universe there are only an observer with a set of coordinates defining a reference frame, a physical system $B$ represented by a black box and a finite set of detectors $D_1,D_2,\ldots,D_N$ that completely cover the area of a sphere surrounding the black box, and where the area of the detectors do not intersect.
%(see Fig. \ref{fig0}).
Suppose that the black box is at the origin of a reference frame $F$ and the sphere of detectors has its centre at the origin too.  In our idealized experiment we neglect the mass of the observer, black box and detectors, and assume that spacetime is exactly Minkowski.

The following description of the experiment takes place in the frame $F$.  When a button is pressed, the box emits a particle. With probability $P(i)$ the particle is detected only by the detector $D_i$, by emitting a signal, for instance, for all $i\in[N]=\{1,2,\ldots,N\}$. Since the detectors cover the whole area of a sphere surrounding the black box and their areas do not intersect, we have $\sum_{i=1}^NP(i)=1$. We assume that there is a physical state $Z$ described by a GPT for the black box that determines the probability distribution $\{P(i)\}_{i\in[N]}$. Thus, the detector $D_i$ is associated to an effect $\hat{E}_i$, for all $i\in[N]$. Therefore, the GPT description of the experiment tells us that 
\begin{equation}
\label{example}
P(i)=\hat{E}_i[Z],
\end{equation}
for all $i\in[N]$.
Consider that the observer can prepare the box in the same physical state $Z$ as many times as wished. The probability distribution (\ref{example}) will be observed each time in the reference frame $F$. By repeating the experiment a very large number of times, the observer can estimate the probabilities $P(i)$ with arbitrarily great precision, from the obtained frequencies, for all $i\in[N]$.

Now suppose that the observer changes his reference frame to $F'$ by applying a rotation $R^{-1}$, which is the inverse of a rotation $R$, to his set of coordinates.
Clearly, because spacetime is Minkowski and spatial rotations are symmetries of Minkowski spacetime, this situation is physically equivalent to rotating the black box and sphere of detectors by $R$. More precisely, in the frame $F'$, the black box and the sphere of detectors are rotated by $R$. Thus, the probability $P'(i)$ that the particle is detected by $D_i$ in the frame $F'$ satisfies 
\begin{equation}
\label{example2}
P'(i)=P(i),
\end{equation}
for all $i\in[N]$. In the GPT description, this means that the state $Z$ and the effects $\hat{E}_i$ are transformed in the frame $F'$ to a state $Z'$ and to effects $\hat{E}'_i$, for all $i\in[N]$. The probability distribution $P'$ in the frame $F'$ is given by
\begin{equation}
\label{example3}
P'(i)=\hat{E}'_i[Z'],
\end{equation}
for all $i\in[N]$. Thus, from (\ref{example}) -- (\ref{example3}), we have
\begin{equation}
\label{example4}
\hat{E}'_i[Z']=\hat{E}_i[Z],
\end{equation}
for all $i\in[N]$. This means that (\ref{x0.3}) holds in Minkowski spacetime with the coordinates transformation $G=R$ and the group of transformations $\mathfrak{G}=\text{SO}(3)$. Furthermore, since the spatial rotations form a group, the group $\text{SO}(3)$, then (\ref{x0.1}) must hold too with $\mathfrak{G}=\text{SO}(3)$. That is, under spatial rotations, the state $Z$ and effects $\hat{E}_i$ must transform as representations of the group of spatial rotations $\text{SO}(3)$.

More generally, for an arbitrary physical system with GPT state $Z$ and for an arbitrary experiment with GPT effects $\hat{E}$ in Minkowski spacetime, (\ref{x0.1}) and (\ref{x0.3}) must hold with $\mathfrak{G}$ being the set of symmetries of Minkowski spacetime. Although the coordinate transformations in Minkowski spacetime form the Poincar{\'e} group $\mathfrak{Poin}_{\text{full}}$, physics is observed to be perfectly invariant under the proper orthochronous Poincar{\'e} group $\mathfrak{Poin}$, but not perfectly invariant under the full Poincar{\'e} group $\mathfrak{Poin}_{\text{full}}$, which includes the discontinuous transformations of space inversion $P_{\text{inv}}$, time reversal $T_{\text{rev}}$ and $P_{\text{inv}}T_{\text{rev}}$ \cite{Weinbergbook}. Thus, in Postulate \hyperref[P1first]{1} below we consider that (\ref{x0.1}) and (\ref{x0.3}) hold in Minkowski spacetime with $\mathfrak{G}=\mathfrak{Poin}$.

\reff{In the rest of this paper we use the terms ``particle'', ``(four) momentum'' and ``mass'', as they are understood in classical physics in Minkowski spacetime, i.e. in special relativity. %In this paper, ``momentum'' is understood as the four momentum in Minkowski spacetime, which extends to $1+n$ momentum in Minkowski spacetime in $1+n$ dimensions.
This is justified by the following two facts. First, unless otherwise stated, in what follows we will assume that spacetime is Minkowski in $1+n$ dimensions, and the meaning of these terms can be straightforwardly extended to the case $n\neq 3$. Second, as explicitly stated in Postulate \hyperref[P2]{2} below, we will  also assume that the considered theory allows a class of states in which particles have classical momentum, and our derivation only works with this class of states.  Thus, in our derivation we will only consider particles having a well defined classical momentum in $1+n$ Minkowski spacetime. In this sense, our treatment for the spacetime degrees of
 freedom is classical, while the particles' internal degrees of freedom are treated more generally within the framework of finite dimensional GPTs. More precisely, in what follows, ``particle'' refers to the physical system under consideration, ``momentum'' refers to the $1+n$ momentum $p$ in $1+n$ Minkowski spacetime, which extends the concept of four momentum to the case $n\neq 3$, and the ``mass'' $m$ appears in the quantity $m^2=\sum_{\mu=0}^n\sum_{\nu=0}^n\eta_{\mu\nu}p_{\mu}p_{\nu}$, which remains invariant under Lorentz transformations. We note that in relativistic quantum mechanics the mass arises as an invariant number from the Casimir operator of the Poincar{\'e} group \cite{Waldbook}.}

In some of the following postulates we will refer to a particular type of physical system, defined as follows.

\begin{definition}[\textbf{Particles of the type $\mathcal{P}$}]
\label{particle}
Particles of the type $\mathcal{P}$ are massive particles with mass $m>0$. Each particle is described by the same GPT. The state space, space of effects and set of allowed transformations for a particle are denoted by $\mathscr{S}$, $\mathscr{E}$ and $\mathscr{T}$, respectively. These describe the spacetime degrees of freedom, like the momentum, which can be continuous dimensional, as well as the internal degrees of freedom, which are finite dimensional. The state space, space of effects and set of allowed transformations for a particle's internal degrees of freedom are denoted by $\mathcal{S}$, $\mathcal{E}$ and $\mathcal{T}$, respectively. These are described by a finite dimensional GPT with dimension $d$.
\end{definition}

%As stated by Theorem \ref{theorem1} below, from these postulates we derive the qubit Bloch ball, that any finite dimensional system can be described by finite dimensional quantum theory, and that the number of spatial dimensions must be $n=3$.

%obtain our our main result, Theorem \ref{theorem1}, which states that spacetime is Minkowski in $1+3$ dimensions and that any finite dimensional system can be described by finite dimensional quantum theory.  

%We present below the postulates from which finite dimensional quantum theory is reconstructed and the number of spatial dimensions $n=3$ is derived. We assume that spacetime is Minkowski in $3+1$ dimensions.

\begin{postulate}[\textbf{Poincar{\'e} Structure}]
\label{P1first}
If spacetime is Minkowski in $1+n$ dimensions then there exists an arbitrarily large number of particles of the type \hyperref[particle]{$\mathcal{P}$}. The GPT describing each particle satisfies the following two conditions.
\begin{enumerate}
\item\textbf{Nontrivial Poincar{\'e} Invariance}. The particle's GPT is nontrivially invariant under $\mathfrak{Poin}$.

\item \textbf{Nontrivial Structure}. If consistency with Nontrivial Poincar{\'e} Invariance requires a class of states from the state space $\mathcal{S}$ of any of the particle's internal degrees of freedom to transform as a representation of a subgroup of $\mathfrak{Poin}$ then such a representation must be nontrivial.
\end{enumerate}
\end{postulate}

We believe that this postulate is the main physical contribution of this paper to the literature of general probabilistic theories and reconstructions of quantum theory. To the best of our knowledge postulates similar to this one have not been considered before in derivations of quantum theory \reff{within the framework of GPTs}. This postulate suggests a connection between the mathematical structures of Minkowski spacetime and of finite dimensional quantum theory. As Lemma \ref{newlemma} given in section \ref{technical} shows, this postulate provides a first crucial step in establishing a relationship between the symmetries of spacetime and the state space of a massive particle's internal degrees of freedom of the type $\mathcal{P}$.

We state Postulate \hyperref[P1first]{1} for a particular type of physical system, a particle of a type that we have called $\mathcal{P}$, and which has mass $m>0$. This is motivated by the fact that in physics we have different types of physical systems, and in particular different types of elementary particles. A priori, different types of physical systems can behave in different ways. We only need to assume that this postulate holds for the particular type of physical systems considered. 

Regarding Postulate \hyperref[P1first]{1.1}, in general, we could assume that for any type of physical system, the GPT describing it is invariant under the proper orthochronous Poincar{\'e} group $\mathfrak{Poin}$. But, for some physical systems, the GPT states and effects could transform as trivial representations of $\mathfrak{Poin}$. That is, we cannot assume that for all physical systems the GPT is nontrivially invariant under $\mathfrak{Poin}$. The assumption that the representations $\hat{\bold{R}}^{\text{st}}$ and $\hat{\bold{R}}^{\text{ef}}$ are not the trivial ones at least for one type of physical system, i.e. for the particles of the type $\mathcal{P}$, is motivated by the observation that in quantum theory the spin degrees of freedom of massive particles arise due to Poincar{\'e} invariance \cite{W39,Weinbergbook}. One of our goals here is to investigate the structure of the spin degrees of freedom that follows from Poincar{\'e} invariance, independently of the mathematical structure of quantum theory.

Similarly, we cannot assume that Postulate \hyperref[P1first]{1.2} applies to any physical system. However, the assumption that Postulate \hyperref[P1first]{1.2} applies to some physical systems, the particles of type $\mathcal{P}$, is a natural condition given our motivations here. %\reff{In particular, the condition that the particle's internal degrees of freedom transform nontrivially is necessary to derive the structure of its state space $\mathcal{S}$.} 
As mentioned above, one of the goals in this work is to investigate the structure of the state space $\mathcal{S}$ of finite degrees of freedom
following from Poincar{\'e} invariance. We notice that Postulate \hyperref[P1first]{1.2} holds in the quantum case, where the states of massive particles' spin degrees of freedom corresponding to Hilbert spaces of finite dimension greater than one transform as irreducible (hence nontrivial) unitary representations of a subgroup of $\mathfrak{Poin}$ \cite{W39,Weinbergbook}.

Although in nature there exist particles with zero mass, e.g the photons, it is mathematically useful for our analysis to consider that the particles of the type $\mathcal{P}$ are massive. We restrict their mass to be positive because particles with negative mass are not observed in nature \cite{Weinbergbook}.

Although this postulate restricts to a particular type of physical system, we make a connection with arbitrary physical systems with any finite number of degrees of freedom by using Postulates \hyperref[P3]{3} and \hyperref[P7]{7}, introduced in  section \ref{particlepostulates}. These postulates roughly state that each particle of the type $\mathcal{P}$ encodes in its internal degrees of freedom an elementary system in the theory, and that an arbitrary physical system described by an arbitrary and finite number of degrees of freedom can be described by a sufficiently large number of elementary systems, respectively. For this reason, we require to assume in Postulate \hyperref[P1first]{1} that the number of particles of the type $\mathcal{P}$ can be arbitrarily large. This is motivated by the fact that in quantum theory, any quantum system of finite Hilbert space dimension can be described by a sufficiently large number of qubits. This is true of any type of physical systems used to encode the qubits, for example, polarization degrees of freedom of photons, spin degrees of freedom of elementary particles, energy degrees of freedom of atoms, etc.

The particles of the type $\mathcal{P}$ can be considered in quantum theory to be electrons, for example. The internal degrees of freedom of electrons are called spin and are represented by a Hilbert space of dimension 2. That is, a qubit can be encoded in the spin degrees of freedom of an electron. Since electrons and other elementary particles are more appropriately described as excitations of quantum fields, we can assume that there can be an infinite number of them. Thus, our assumption that we can have an arbitrarily large number of particles of the type $\mathcal{P}$ is justified by our current understanding of physics. We emphasize that this is just an example, that we are not assuming the particles $\mathcal{P}$ to be electrons or any other particular type of known elementary particle. In fact, we are not assuming quantum theory to hold, as our aim here is to reconstruct finite dimensional quantum theory within a broader class of probabilistic theories.

%We comment on Postulate \hyperref[P1first]{1.1}. First, the condition that Poincar{\'e} invariance is restricted only to $\mathfrak{Poin}$ is motivated by the fact that physics is observed to be perfectly invariant under the proper orthochronous Poincar{\'e} group $\mathfrak{Poin}$, but not perfectly invariant under the full Poincar{\'e} group $\mathfrak{Poin}_{\text{full}}$, which includes the discontinuous transformations of space inversion $P_{\text{inv}}$, time reversal $T_{\text{rev}}$ and $P_{\text{inv}}T_{\text{rev}}$ \cite{Weinbergbook}.

As mentioned in the introduction, an important motivation to reconstruct quantum theory from physical postulates is that by modifying the postulates we can investigate modifications of quantum theory. This is particularly relevant in the investigation of quantum gravity theories, in which potentially general relativity and/or quantum theory have to be modified. Thus, it is physically motivated to investigate variations of this postulate in which spacetime is not Minkowski. Poincar{\'e} invariance could then be replaced by invariance under the set of spacetime symmetries. For example, a modification of this postulate could be investigated for the spacetimes that are allowed by general relativity. Variations of this postulate could also be investigated for different spacetimes predicted by proposed modifications of general relativity, with the goal of investigating candidate theories for quantum gravity, for instance. With these motivations in mind, we can state a modification of Postulate \hyperref[P1first]{1} that holds in arbitrary hypothetical spacetimes.

%MAYBE FIND REFERENCES IN PARAGRAPH ABOVE 

\begin{postulate1'*}[\textbf{Structure from the Spacetime Symmetries}]
\label{P1'}
In a spacetime with group of symmetry transformations $\mathfrak{G}$ there exists a type of physical system described by a GPT satisfying the following two conditions.
\begin{enumerate}
\item\textbf{Nontrivial Invariance}. The GPT is nontrivially invariant under $\mathfrak{G}$.

\item \textbf{Nontrivial Structure}. If consistency with Nontrivial Invariance requires a class of states from the state space $\mathcal{S}$ of any of the internal degrees of freedom of the system to transform as a representation of a subgroup of $\mathfrak{G}$ then such a representation must be nontrivial.
\end{enumerate}
\end{postulate1'*}

In order to provide an intuition of how this postulate could be applied to different spacetimes, we present a toy example illustrated in Fig. \ref{figtoyspacetime}. \reff{This example does not fit within the model of spacetime given in section \ref{model}, but provides a simple illustration.} Consider a hypothetical spacetime with one time dimension $t$ and one spatial dimension $x$. Let $t$ and $x$ have discrete values given by
\begin{equation}
\label{discretespacetime}
t=ia \quad \text{ and } x=jb,
\end{equation}
\reff{in some reference frame,} where $a>0$ and $b>0$, for all integers $i$ and $j$. Suppose that the group of symmetries $\mathfrak{G}$ for this spacetime comprises only the space translations $T_k$ acting on a spacetime point like
\begin{equation}
\label{discretetranslations}
T_k(t,x)=(t,x+kb),
\end{equation}
where $k$ is an arbitrary integer. Suppose that there exists a physical system in this spacetime described by a polygon theory, discussed in section \ref{secpolygon}, for an arbitrary integer $N\geq 3$. Furthermore, suppose that under a translation $T_k$, the states $\zeta$ and effects $\varepsilon$ transform by applying a rotation $R(k\theta)$ of an angle $k\theta$ in the last two (of the three) vector entries of $\zeta$ and $\varepsilon$, i.e. on the two dimensions illustrated in Fig. \ref{figpol}, with $\theta=\frac{2\pi}{N}$ and for every integer $k$. As discussed in section \ref{secpolygon}, the rotations $R(k\theta)$ leave the state space $\mathcal{S}$ and the space of effects $\mathcal{E}$ invariant. For example, the set of pure states $\zeta_i$, given by (\ref{polygonstates}), and the set of extremal effects $\varepsilon_i$ and $\bar{\varepsilon}_i$, given by (\ref{polygoneffectseven}) and (\ref{polygoneffectsodd}), remain invariant, as given by (\ref{polygonrot}). It also holds that $\varepsilon'(\zeta')=\varepsilon(\zeta)$, for all $\zeta\in\mathcal{S}$ and all $\varepsilon\in\mathcal{E}$, where $\zeta'$ and $\varepsilon'$ are the transformed states and effects, respectively. It is straightforward to see that the rotations $R(k\theta)$ forms a group, and that this group is a nontrivial representation of the group $\mathfrak{G}$ of space translations described above. Thus, the system's GPT is nontrivially invariant under the group of spacetime symmetries $\mathfrak{G}$, and Postulate \hyperref[P1']{1'} holds.

\begin{figure}
\includegraphics[scale=0.26]{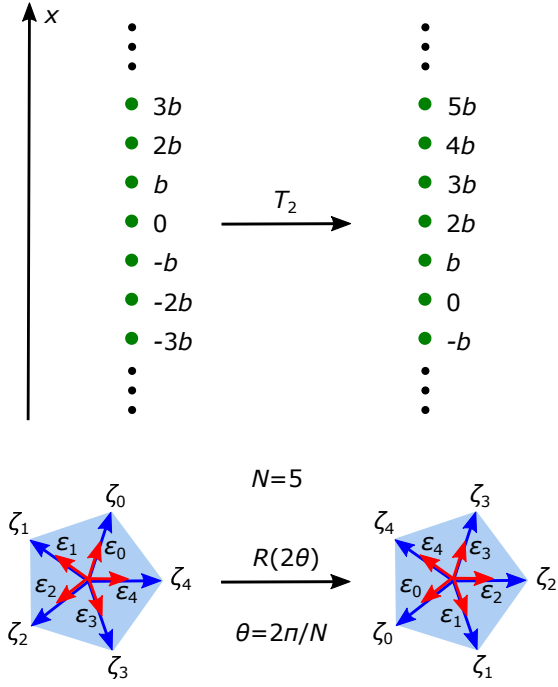}
 \caption{\label{figtoyspacetime} \textbf{Toy spacetime and GPT satisfying Postulate \hyperref[P1']{1'}.} Top: a spacetime in $1+1$ dimensions with discrete spacetime values given by (\ref{discretespacetime}) has as group of symmetries $\mathfrak{G}$, which is the group of space translations $T_k$ acting on spacetime points like in (\ref{discretetranslations}), for all integers $k$. Bottom: A physical system in this spacetime is described by a polygon theory presented in section \ref{secpolygon} (see Fig. \ref{figpol}), for an arbitrary integer $N\geq 3$. A space translation $T_k$ induces a rotation $R(k\theta)$ in the last two (of the three) dimensions of the states $\zeta\in\mathcal{S}$ and the effects $\varepsilon\in\mathcal{E}$, where $\theta=\frac{2\pi}{N}$. The rotations $R(k\theta)$ leave $\mathcal{S}$ and $\mathcal{E}$ invariant and satisfy $\varepsilon'(\zeta')=\varepsilon(\zeta)$, for arbitrary $\zeta\in\mathcal{S}$ and $\varepsilon\in\mathcal{E}$, where $\zeta'$ and $\varepsilon'$ are the transformed states and effects, respectively. The group of rotations $R(k\theta)$, with $k$ an arbitrary integer, is a nontrivial representation of $\mathfrak{G}$. Thus, Postulate \hyperref[P1']{1'} is satisfied. The case of $N=5$ and $k=2$ is illustrated.}
\end{figure}

The previous is only a simple example to illustrate how Postulate \hyperref[P1']{1'} could hold in a spacetime that is not Minkowski and with a GPT that is not quantum. We leave the investigation of this postulate for more general spacetimes and GPTs as an open problem.

\section{A set of physical postulates for finite dimensional quantum theory}
\label{particlepostulates}

\reff{In addition to Postulate \ref{P1first} introduced in section \ref{secmainpostulate}, we} use the following postulates in sections \ref{qubitsection} and \ref{quantumtheorysection} to reconstruct the qubit Bloch ball and finite dimensional quantum theory, and to derive that the number of spatial dimensions in Minkowski spacetime must be $n=3$. 

\begin{comment}
%COMMENT BEGINS
For completeness of this section we present again Postulate \hyperref[P1first]{1} introduced in section \ref{secmainpostulate}.

\begin{postulate1*}[\textbf{Poincar{\'e} Structure}]
\label{P1}
If spacetime is Minkowski in $1+n$ dimensions then there exists an arbitrarily large number of particles of the type \hyperref[particle]{$\mathcal{P}$}. The GPT describing each particle satisfies the following two conditions.
\begin{enumerate}
\item\textbf{Nontrivial Poincar{\'e} Invariance}. The particle's GPT is nontrivially invariant under $\mathfrak{Poin}$.

\item \textbf{Nontrivial Structure}. If consistency with Nontrivial Poincar{\'e} Invariance requires a class of states from the state space $\mathcal{S}$ of any of the particle's internal degrees of freedom to transform as a representation of a subgroup of $\mathfrak{Poin}$ then such a representation must be nontrivial.
\end{enumerate}
\end{postulate1*}

As discussed in section \ref{secmainpostulate}, this postulate (and its generalization given by Postulate \hyperref[P1']{1'}) is our main physical contribution to the literature of GPTs and reconstructions of quantum theory. We recall that $\mathfrak{Poin}$ is the proper orthochronous Poincar{\'e} group, comprising the Poincar{\'e} transformations that are continuously connected to the identity, which are the spacetime translations, the spatial rotations and the Lorentz boosts. For $n=3$, this is the subgroup of the Poincar{\'e} group under which physics is observed to be perfectly invariant \cite{Weinbergbook}.
%END COMMENT
\end{comment}

As Postulate \hyperref[P1first]{1}, Postulates \hyperref[P2]{2} and \hyperref[P3]{3} assume that there exist particles of the type $\mathcal{P}$, given by Definition \ref{particle}. Furthermore, Postulate \hyperref[P2]{2} assumes that spacetime is Minkowski in $1+n$ dimensions, with $n$ being an arbitrary positive integer. In order to avoid repetition, we do not state this explicitly in the postulates.

\begin{postulate}[\textbf{Existence of a Classical Limit}]
\label{P2}
For a particle of the type \hyperref[particle]{$\mathcal{P}$}, there exists a classical limit for the set of states. This postulate is two fold:
\begin{enumerate}
  \item \textbf{Existence of Classical Momentum}. There exists a class of states  $Z_{p,\zeta}^\text{class}\in\mathscr{S}$ describing a particle of the type $\mathcal{P}$ with classical well defined $1+n$ momentum $p$, and with internal degrees of freedom in an arbitrary state $\zeta\in\mathcal{S}$, for arbitrary physically realizable momentum $p$ for a particle of the type $\mathcal{P}$. Under an arbitrary proper orthochronous Poincar{\'e} transformation $P(x,\Lambda)\in\mathfrak{Poin}$, the states $Z_{p,\zeta}^\text{class}\in\mathscr{S}$ transform into the states $Z_{\Lambda p,\zeta'}^\text{class}\in\mathscr{S}$, with $\zeta'\in\mathcal{S}$.
  \item   \textbf{Existence of the Classical Bit}. The state space $\mathcal{S}$ for the internal degrees of freedom of a particle of the type $\mathcal{P}$ contains two or more states that can be perfectly distinguished in a measurement.
\end{enumerate}
\end{postulate}

We believe this postulate is very natural. To the best of our knowledge, physics allows a classical limit in which classical momentum and the classical bit exist, given by particular quantum states that we call ``classical''. 

To our knowledge, Postulate \hyperref[P2]{2.1} has not been considered before. We think this postulate is an important physical contribution to the literature of GPTs and reconstructions of quantum theory because it makes a first step in considering a particle's spacetime degrees of freedom and continuous dimensional degrees of freedom.
% As Lemma \ref{newlemma} given in section \ref{elementary} shows, this postulate together with Postulate \hyperref[P1first]{1} allows us to obtain a connection between the symmetries of Minkowski spacetime and the state space of a massive particle's internal degrees of freedom.

In this paper we only consider states with classical momentum. As shown by Lemma \ref{newlemma} in section \ref{technical}, this suffices to establish a connection between Minkowski spacetime and the state space $\mathcal{S}$ of the internal degrees of freedom of a particle of the type $\mathcal{P}$. 

\reff{We note that the non-trivial unitary representations of $\mathfrak{Poin}$ must be infinite dimensional \cite{Weinbergbook}. In quantum theory, a particle's quantum state can be transformed as a non-trivial unitary representation of $\mathfrak{Poin}$ by including in the transformations the spacetime degrees of freedom, namely the four momentum, which are continuous dimensional, and the spin  degrees of freedom, which are finite dimensional. Similarly, the states and effects considered here transform as representations of $\mathfrak{Poin}$, with the outcome probabilities remaining invariant, by including in the transformations the momentum degrees of freedom, which are continuous dimensional, and the internal degrees of freedom, which we have defined as finite-dimensional (see Lemma \ref{newlemma}). 

We further note that our derivation of finite dimensional quantum theory is via the internal degrees of freedom of particles of the type $\mathcal{P}$ (see Postulates  \hyperref[P3]{3} and  \hyperref[P7]{7} below). Regarding the momentum degrees of freedom, we only assume (in Postulate \hyperref[P2]{2.1}) that there are states with classical well defined momentum.} It would be interesting to investigate probabilistic theories where the spacetime degrees of freedom, like the momentum, are not only in classical states. We think that this requires a deeper analysis of continual dimensional general probabilistic theories and is thus out of the scope of this paper.

Postulate \hyperref[P2]{2.1} implies in particular that there exists a reference frame $F_{\text{rest}}$ in which a particle of the type $\mathcal{P}$ has momentum ${p_\text{rest}}=(m,0,0,\ldots,0)^{\text{t}}$ with $m>0$. In our notation, the first component is temporal and the other $n$ are spatial. We use units in which the speed of light is $c=1$. The set $\Pi_{p_\text{rest}}\equiv\{p=\Lambda {p_\text{rest}}\vert \Lambda\in \mathfrak{L}\}$ includes all physically possible $1+n$ momentums for a particle of the type $\mathcal{P}$, which has mass $m>0$. Mathematically, Postulate \hyperref[P2]{2.1} says that the considered theories include a set of states 
\begin{equation}
\label{eq:1}
\mathscr{S}^\text{class}\equiv \{Z_{p,\zeta}^{\text{class}}\vert p\in\Pi_{p_\text{rest}}, \zeta\in\mathcal{S}\}\subseteq\mathscr{S},
\end{equation}
and a set of effects
\begin{equation}
\label{eq:2}
\mathscr{E}^{\text{class}}\equiv \{\hat{E}_{p,\varepsilon}^{\text{class}}\vert p\in\Pi_{p_\text{rest}},\varepsilon\in\mathcal{E}\}\subseteq\mathscr{E},
\end{equation}
that perfectly distinguish the value of the momentum $p$ for states in $\mathscr{S}^\text{class}$, that is, such that
\begin{equation}
\label{eq:3}
\hat{E}_{p',\varepsilon}^{\text{class}}[Z_{p,\zeta}^{\text{class}}]=\delta(p'-p)\varepsilon\cdot \zeta,
\end{equation}
for all $p', p\in\Pi_{p_\text{rest}}$, all $\zeta\in\mathcal{S}$ and all $\varepsilon\in\mathcal{E}$. A measurement of the momentum for the states in $\mathscr{S}^\text{class}$ is given by the set of effects $\{\hat{E}_{p,u}^{\text{class}}\in\mathscr{E}^{\text{class}}\}_{p\in\Pi_{p_{\text{rest}}}}$, where $u\in\mathcal{E}$ is the unit effect on $\mathcal{S}$. Since the state of a particle with classical momentum $p$ transforms into a state with classical momentum $\Lambda p$ under a Poincar{\'e} transformation $P(x,\Lambda)\in \mathfrak{Poin}$, the states $Z_{p,\zeta}^{\text{class}}$ and effects  $\hat{E}_{p,\varepsilon}^{\text{class}}$ transform into
\begin{eqnarray}
\label{eq:5}
\hat{\bold{R}}^{\text{st}}(P(x,\Lambda))[Z_{p,\zeta}^{\text{class}}]&=&Z_{\Lambda p, R_p^{\text{st}}(P(x,\Lambda))\zeta}^{\text{class}},\nonumber\\
\label{eq:6}
\hat{\bold{R}}^{\text{ef}}(P(x,\Lambda))[\hat{E}_{p}^{\text{class}}(\varepsilon)]&=&\hat{E}_{\Lambda p,R_p^{\text{ef}}(P(x,\Lambda))\varepsilon}^{\text{class}},
\end{eqnarray}
where $\hat{\bold{R}}^{\text{st}}(P(x,\Lambda))$ and $\hat{\bold{R}}^{\text{ef}}(P(x,\Lambda))$ are representations of $\mathfrak{Poin}$ and $R^{\text{st}}_p(P(x,\Lambda)): \mathcal{S} \rightarrow \mathcal{S}$, $R^{\text{ef}}_p(P(x,\Lambda)):\mathcal{E} \rightarrow \mathcal{E}$ are transformations on the states and effects for the internal degrees of freedom of a particle of the type $\mathcal{P}$, respectively, which in general depend on $p$. 

In quantum theory, the states $Z_{p,\zeta}^{\text{class}}$ and the effects $\hat{E}_{\bar{p},\varepsilon}^{\text{class}}$ correspond to the projectors $\lvert a_{p}\rangle\langle a_{p} \rvert$ and $\lvert a_{\bar{p}}\rangle\langle a_{\bar{p}} \rvert$, where $\lvert a_p\rangle$ are the eigenstates of the four momentum operator with eigenvalues $p$, in which the states $\zeta\in\mathcal{S}$ and effects $\varepsilon\in\mathcal{E}$ are associated to a density matrix $\rho_{\zeta}$ and a POVM element $E_{\varepsilon}$ in a finite dimensional Hilbert space, respectively. A theory in which the states $Z_{p,\zeta}^{\text{class}}$, with $\zeta\in\mathcal{S}$ pure, are the only pure states in $\mathscr{S}$ is a classical theory in the momentum degrees of freedom. We are not restricting here to this particular case. We allow the possibility that there exist other pure states, although we do not investigate them in this paper.

Postulate \hyperref[P2]{2.2} is clearly satisfied by quantum theory because any quantum system with a nontrivial state space has a Hilbert space dimension greater than one and thus, containing a qubit in some subspace, can perfectly encode at least one classical bit. We note that this assumption has implications on the space of effects $\mathcal{E}$ associated to $\mathcal{S}$. That is, there exist at least two states $\zeta_0,\zeta_1\in\mathcal{S}$ and two effects $\varepsilon_0,\varepsilon_1\in\mathcal{E}$ satisfying $\varepsilon_j(\zeta_i)=\delta_{j,i}$, for all $i,j\in\{0,1\}$.

Postulate \hyperref[P2]{2.2} makes a weaker assumption than some postulates considered in previous reconstructions of quantum theory (e.g. \cite{CDP11,MM11,MMAP13}). For example, Ref. \cite{CDP11} assumes ``Perfect distinguishability'', which states that every state that is not completely mixed can be perfectly distinguished from some other state. Ref. \cite{MM11} assumes in ``Requirement 5'' that for a system with at most two distinguishable states the set of effects $\mathcal{E}$ equals the set of normalized effects $\mathcal{E}_\text{norm}$, i.e. all effects giving valid outcome probabilities for such a system exist in the theory. Similarly, Ref. \cite{MMAP13} assumes in the postulate ``Existence of an Information Unit'' that there exists a system called ``gbit'' for which $\mathcal{E}=\mathcal{E}_\text{norm}$. In this paper we do not make the strong assumption that $\mathcal{E}=\mathcal{E}_\text{norm}$ for any system. In fact, we derive this property for a particular system, the elementary system introduced below, from Postulate \hyperref[P2]{2.2} together with a few more of our postulates (see Lemmas \ref{adlemma1} and \ref{adlemma4} in section \ref{elementary}).

%Postulate \hyperref[P2]{2.2} is clearly satisfied by quantum theory because any quantum system with a nontrivial state space has a Hilbert space dimension greater than one and thus, containing a qubit in some subspace, can perfectly encode at least one classical bit. We note that this is a very weak property that is implicitly assumed by most of the axiomatic reconstructions of quantum theory. Postulate \hyperref[P2]{2.1}, being the analogue of Postulate \hyperref[P2]{2.2} for a continuous dimensional property, is also quite weak.

\begin{postulate}[\textbf{Minimality of the Elementary System}]
\label{P3}
There exists an elementary system in nature, defined as having a nontrivial state space $\mathcal{S}_{\text{elem}}$ with the smallest nontrivial finite dimension $d_{\text{elem}}\geq 1$ in nature. The GPT of the elementary system can be physically implemented in some of the internal degrees of freedom of a particle of the type \hyperref[particle]{$\mathcal{P}$}. The dimension $d_{\text{elem}}$ achieves the minimum value that is consistent with the set of considered postulates.
\end{postulate}

This is arguably a strong assumption. However, we think that among theories satisfying the same physical postulates, the theories that require the least number of real degrees of freedom are in some sense mathematically simpler and more efficient. We believe it is a natural assumption that physical theories should have mathematical structures that are as simple as possible, while still describing a broad range of physical phenomena, like the existence of entanglement as given by Postulate \hyperref[P6]{6} below, for instance. If two physical theories describe the same physical phenomena, where the first one needs $d_1$ real degrees of freedom to describe the elementary system and the second one needs $d_2>d_1$ real degrees of freedom for this, why would nature ``choose'' the second theory? Of course, the second theory could perhaps be more elegant than the first one by some standards, or it could involve mathematical calculations that are simpler than the first one. But, without knowing this a priori, we consider sensible that a measure of mathematical simplicity for a physical theory is the number of real degrees of freedom that it requires.

Nonetheless, we think that Postulate \hyperref[P3]{3} does not make a very strong assumption if we compare it with other postulates used in previous derivations of quantum theory. For example, Hardy's derivation \cite{H01} uses a  ``Simplicity Axiom", which roughly states that the number $d$ of real degrees of freedom to specify a state takes the minimum value consistent with the other considered axioms. Our postulate states that this only needs to hold for the system of smallest dimension $d_{\text{elem}}$, i.e. for the elementary system, while Hardy's axiom states that this holds for systems of arbitrary dimension $d$.

The assumption that the GPT of the elementary system can be physically implemented in some internal degrees of freedom of a particle of the type $\mathcal{P}$ seems also like a strong assumption because it gives a very special role to the particles of the type $\mathcal{P}$. However, this assumption has a more meaningful significance when considered together with Postulate \hyperref[P7]{7} below, which roughly states that any physical system with finite number of real degrees of freedom can be described by the GPT of a sufficiently large number of elementary systems.

In quantum theory, the elementary system is the qubit and any finite dimensional quantum state can be encoded in a sufficiently large number of qubits. As previously mentioned, in quantum theory, we can imagine particles of the type $\mathcal{P}$ to be electrons, for instance. In this case, since the electron has spin $\frac{1}{2}$, its spin degrees of freedom encode a qubit exactly. But a particle of the type $\mathcal{P}$ can be any other massive particle, like an atom.
Although the internal degrees of freedom of an atom have Hilbert space dimension larger than two, we can find a subspace of dimension two that describes a qubit.

Ideally, we would like to have derivations of quantum theory with the least number of postulates and with postulates that have the strongest physical motivations. Thus, we think it is interesting to investigate whether this postulate is really necessary as stated, or if
our reconstruction of the qubit Bloch ball and finite dimensional quantum theory can still be obtained with a relaxed version of this postulate in which  $d_{\text{elem}}$ is not assumed to achieve the minimum value that is consistent with the other postulates.

%\begin{postulate3'*}[\textbf{Elementary System}]
%\label{P3'}
%There exists an elementary system in nature, defined as having a nontrivial state space $\mathcal{S}_{\text{elem}}$ with the smallest finite dimension $d_{\text{elem}}>1$ in nature. The GPT of the elementary system can be physically implemented in some of the internal degrees of freedom of a particle of the type $\mathcal{P}$.
%\end{postulate3'*}

\begin{postulate}[\textbf{Continuous Reversibility}]
\label{P4}
For every pair of pure states there exists a continuous reversible transformation that transforms one state into the other.
\end{postulate}

This postulate was introduced by Hardy \cite{H01} and has been used in several reconstructions of quantum theory (e.g. \cite{DB09,MM11,TMSM12,MMAP13}). It is physically motivated by the continuity of time evolution, which holds in Minkowski spacetime, for instance. In quantum theory the set of reversible transformations is the set of unitary operations, which is continuous. We believe that this postulate should be relaxed when considering the possibility that spacetime could have a discrete structure, when investigating possible theories for quantum gravity, for instance. 

It is interesting to note that this postulate alone excludes the possibility that nature can be described by GPTs with discrete sets of pure states, like classical probabilistic theory of finite dimension or box world. This is because, as previously mentioned, reversible transformations take pure states into pure states. Thus, if a reversible transformation that takes a first pure state into a second pure state is continuous then there must be a continuous set of pure states connecting the first and second pure states.

\begin{postulate}[\textbf{Tomographic Locality}]
\label{P5}
The state of a composite system is totally characterized by the outcome probabilities of the local measurements on the subsystems.
\end{postulate}

Introduced by Barrett \cite{B07}, this postulate allows us to describe composite systems. It implies a simple relation between the number of real degrees of freedom $d_{AB}$ required to describe states of a composite system $AB$ and those of the subsystems, given by $1+d_{AB}=(1+d_A)(1+d_B)$, as initially considered by Hardy \cite{H01}. Tomographic locality has been considered in many reconstructions of quantum theory (e.g. \cite{H01,DB09,CDP11,MM11,H11,TMSM12,MMAP13}).

\begin{postulate}[\textbf{Existence of Entanglement}]
\label{P6}
The state space of any bipartite system contains at least one entangled state.
\end{postulate}

A state is \emph{entangled} if it cannot be written as a convex combination of product states. In our opinion, this postulate has great physical significance.
In quantum theory, many important properties of quantum information arise due to the existence of entangled states, like the violation of Bell inequalities \cite{Bell},  quantum teleportation \cite{teleportation}, superdense coding \cite{sdc}, and the existence of quantum computation algorithms that are exponentially faster than the best known classical algorithms \cite{DJ92,S94}, for instance. For these reasons, we believe it is physically sensible that nature allows the existence of entanglement. Postulate \ref{P6} has been considered before (e.g. \cite{TMSM12,MMAP13}).

\begin{postulate}[\textbf{Universal Encoding}]
\label{P7}
For any physical system, any state of finite dimension can be reversibly encoded in a sufficiently large number of elementary systems.
\end{postulate}

This postulate is similar to the postulate of ``Existence of an Information Unit" given in Ref. \cite{MMAP13}, which states that there exists a type of system called the ``gbit'' such that the state of any physical system can be reversibly encoded in a sufficiently large number of gbits. We note that the gbit plays the role of the qubit in quantum theory, like the elementary system in this paper does. We think that the postulate of Ref. \cite{MMAP13} should specify that this property holds for any physical system of finite dimension. This is because in quantum theory there are systems described by Hilbert spaces of continuous dimensions, which cannot be completely characterized by any finite number of qubits. The postulate of Ref. \cite{MMAP13} also assumes that gbits can interact, which is equivalent to Postulate \hyperref[P6]{6} (Existence of Entanglement) here. An important difference of Postulate \hyperref[P7]{7} with the postulate Existence of an Information Unit of Ref. \cite{MMAP13} is that we do not assume that all normalized effects are observable, which is arguably a strong assumption.

Postulate \ref{P7} says that at a fundamental level all physical systems with discrete degrees of freedom are described by the same mathematical theory.
This holds in quantum theory, where the elementary system is the qubit. An arbitrary finite dimensional quantum state is perfectly encoded in a finite number of qubits. This encoding is independent of the physical systems used to prepare the qubits, which can be, for example, the spin degrees of freedom of massive particles, the polarization degrees of freedom of massless particles, the energy levels of atoms or molecules, etc.

\section{A physical derivation of the elementary system and the qubit}
\label{qubitsection}

This section is divided in three subsections. Lemma \ref{newlemma}, presented in section \ref{technical}, is our main technical result, in the sense that the proofs of our most important results use this lemma. Lemma \ref{newlemma} roughly states that if spacetime is Minkowski in $1+n$ dimensions then Nontrivial Poincar{\'e} Invariance (Postulate \hyperref[P1first]{1.1}) and the Existence of Classical Momentum (Postulate \hyperref[P2]{2.1}) imply that there exists a class of states for a massive particle of the type $\mathcal{P}$ whose internal degrees of freedom transform as a representation of the group $\text{SO}(n)$. This result provides a first crucial step to establish a connection between Minkowski spacetime and the structure of the state space $\mathcal{S}$ of the internal degrees of freedom of a particle of the type $\mathcal{P}$. 

In section \ref{elementary} we assume that spacetime is Minkowski in $1+n$ dimensions and leave $n$ as a free variable. As is well known, and as discussed in section \ref{derivingMinkowski}, this follows from clear physical principles, like the principle of relativity and the invariance of the speed of light.
From Postulates \hyperref[P1first]{1} -- \hyperref[P3]{3}, and using Lemma \ref{newlemma}, Lemma \ref{adlemma1} below shows that the state space and the space of effects of the elementary system correspond to a $n-$dimensional Euclidean ball.

In section \ref{derivingqubit} we assume that spacetime is Minkowski in $1+3$ dimensions. From Postulates \hyperref[P1first]{1} -- \hyperref[P4]{4} and using Lemma \ref{adlemma1}, Lemma \ref{qubitlemma} below shows that the elementary system is locally equivalent to the qubit, i.e the state space, space of effects and set of reversible transformations of the elementary system correspond to those of the qubit.

\subsection{Main technical result: in Minkowski spacetime in $1+n$ dimensions some massive particles' internal degrees of freedom must transform as representations of $\text{SO(n)}$}
\label{technical}

%As previously mentioned, Lemma \ref{newlemma} below is our main technical result, in the sense that all our results are based on this lemma. It allows us to establish a first connection between the symmetries of Minkowski spacetime and the structure of the state space $\mathcal{S}$ of the internal degrees of freedom of a massive particle of the type $\mathcal{P}$.

\begin{lemma}
\label{newlemma}
We assume that spacetime is Minkowski in $1+n$ dimensions and that Postulates \hyperref[P1first]{1.1} and \hyperref[P2]{2.1} hold. 
%Thus, we assume that there exists a set of states $\mathscr{S}^\text{class}$ given by equation (\ref{eq:1}), whose elements $Z_p^{\text{class}}(\zeta)$ transform as given by equation (\ref{eq:5}) under proper orthochonous Poincar{\'e} transformations $P(x,\Lambda)\in \mathfrak{Poin}$, where $\hat{\bold{R}}\bigl(P(x,\Lambda)\bigr)$ is a representation of $\mathfrak{Poin}$.
We consider a state $Z_{p_{\text{rest}},\zeta}^{\text{class}}\in\mathscr{S}^\text{class}$ of a particle of the type $\mathcal{P}$, which has mass $m>0$, with well defined classical $1+n$ momentum $p_{\text{rest}}= (m,0,\ldots,0)^{\text{t}}$ in a given reference frame, with internal degrees of freedom in an arbitrary state $\zeta\in\mathcal{S}\subset \mathbb{R}^{d+1}$. We define $\Lambda_{p_{\text{rest}}}(p)\in\mathfrak{L}$ as the proper orthochronous Lorentz transformation that takes $p_{\text{rest}}$ into $p$, for all $p\in\Pi_{p_{\text{rest}}}\equiv\{\Lambda p_{\text{rest}}\vert \Lambda\in\mathfrak{L}\}$. From (\ref{eq:6}), under proper orthochronous Poincar{\'e} transformations $P\bigl(x,\Lambda_{p_{\text{rest}}}(p)\bigr)\in\mathfrak{Poin}$, the state $Z_{p_{\text{rest}},\zeta}^{\text{class}}$ transforms as
\begin{equation}
\label{adx1}
Z_{p_{\text{rest}},\zeta}^{\text{class}}\xrightarrow[]{P(x,\Lambda_{p_{\text{rest}}}(p))} Z_{p,\zeta'}^{\text{class}},
%\hat{\bold{R}}_\text{st}\Bigl(P\bigl(x,\Lambda_{p_{\text{rest}}}(p)\bigr)\Bigr)\bigl[Z_{p_{\text{rest}}}^{\text{class}}(\zeta) \bigr]=
\end{equation}
where the transformed state $\zeta'$ for the internal degrees of freedom is
\begin{equation}
\label{adx2}
\zeta' = R_{p_{\text{rest}}}^{\text{st}}\bigl(P(x,\Lambda_{p_{\text{rest}}}(p))\bigr)\zeta.
\end{equation}
Then, for any internal degrees of freedom of a particle of the type $\mathcal{P}$, it holds that the set of transformations $R_{p_{\text{rest}}}^{\text{st}}$ is a representation of the group SO$(n)$, that is,
\begin{equation}
\label{adx3}
R_{p_{\text{rest}}}^{\text{st}}(P(\vec{0},O_2))R_{p_{\text{rest}}}^{\text{st}}(P(\vec{0},O_1))=R_{p_{\text{rest}}}^{\text{st}}(P(\vec{0},O_2O_1)),
\end{equation}
for all $O_i\equiv\Bigl(\begin{smallmatrix}
  1 & 0 \\
  0 & \tilde{O_i} 
 \end{smallmatrix}\Bigr)$ with $\tilde{O_i}\in \text{SO}(n)$, and for all $i\in\{1,2\}$.
\end{lemma}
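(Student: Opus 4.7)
The plan is to apply a Wigner-type little-group argument inside the GPT framework: the stabilizer of the rest momentum $p_{\text{rest}}=(m,\vec{0})^{\text{t}}$ inside the proper orthochronous Lorentz group $\mathfrak{L}$ consists precisely of the spatial rotations $O=\bigl(\begin{smallmatrix}1 & 0\\ 0 & \tilde{O}\end{smallmatrix}\bigr)$ with $\tilde{O}\in\text{SO}(n)$. Because these rotations leave $p_{\text{rest}}$ fixed, the induced action on the internal slot of $Z^{\text{class}}_{p_{\text{rest}},\zeta}$ should inherit the group structure of $\hat{\bold{R}}^{\text{st}}$ restricted to that little group, giving a representation of $\text{SO}(n)$ on $\mathcal{S}$.

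First, I would specialize the transformation law (\ref{eq:5}) to $p=p_{\text{rest}}$ and $\Lambda=O$ a spatial rotation. Since $Op_{\text{rest}}=p_{\text{rest}}$, the right-hand side remains in the same classical-momentum sector,
\begin{equation*}
\hat{\bold{R}}^{\text{st}}\bigl(P(\vec{0},O)\bigr)\bigl[Z^{\text{class}}_{p_{\text{rest}},\zeta}\bigr]=Z^{\text{class}}_{p_{\text{rest}},\,R^{\text{st}}_{p_{\text{rest}}}(P(\vec{0},O))\zeta},
\end{equation*}
so each such rotation induces a well-defined map on $\mathcal{S}$. I would then combine the Poincar{\'e} composition law (\ref{eq:l1.1}), which for pure rotations gives $P(\vec{0},O_2)\circ P(\vec{0},O_1)=P(\vec{0},O_2O_1)$, with the representation property (\ref{x0.2}) of $\hat{\bold{R}}^{\text{st}}$ provided by Nontrivial Poincar{\'e} Invariance. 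Applying the resulting identity to $Z^{\text{class}}_{p_{\text{rest}},\zeta}$ and iterating the previous display twice on the left-hand side, I obtain two classical-momentum states with momentum $p_{\text{rest}}$ whose internal slots are $R^{\text{st}}_{p_{\text{rest}}}(P(\vec{0},O_2))R^{\text{st}}_{p_{\text{rest}}}(P(\vec{0},O_1))\zeta$ and $R^{\text{st}}_{p_{\text{rest}}}(P(\vec{0},O_2O_1))\zeta$ respectively, and these states must coincide.

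To extract (\ref{adx3}) from this equality, I still need that the assignment $\zeta\mapsto Z^{\text{class}}_{p_{\text{rest}},\zeta}$ is injective. This I would read off equation (\ref{eq:3}), which shows that the classical-momentum effects $\hat{E}^{\text{class}}_{p_{\text{rest}},\varepsilon}$ reproduce the internal inner product $\varepsilon\cdot\zeta$; combined with the standard GPT assumption that effects separate states on $\mathcal{S}$, two classical-momentum states with the same momentum and identical values on every $\hat{E}^{\text{class}}_{p_{\text{rest}},\varepsilon}$ must have identical internal components. Equation (\ref{adx3}) then holds as an identity of maps on $\mathcal{S}$ for every pair $\tilde{O}_1,\tilde{O}_2\in\text{SO}(n)$, so the set of induced maps is a representation of $\text{SO}(n)$. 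I expect the delicate step to be the injectivity argument: one must be careful that the $\delta$-function structure of (\ref{eq:3}) is used only to single out the correct momentum sector, after which the internal factor $\varepsilon\cdot\zeta$ supplies genuine tomographic separation; the remaining manipulations are essentially a transcription of Wigner's little-group construction into the GPT language.
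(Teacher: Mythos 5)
Your proof is correct, and it reaches (\ref{adx3}) by a noticeably shorter route than the paper. Both arguments rest on the same two pillars: the stabilizer of $p_{\text{rest}}$ in $\mathfrak{L}$ is the group of pure rotations $O=\bigl(\begin{smallmatrix}1 & 0\\ 0 & \tilde{O}\end{smallmatrix}\bigr)$ with $\tilde{O}\in\text{SO}(n)$, and the representation property (\ref{x0.2}) of $\hat{\bold{R}}^{\text{st}}$ descends to the internal slot once the labelling $\zeta\mapsto Z^{\text{class}}_{p_{\text{rest}},\zeta}$ is injective. The difference is in execution. The paper runs the full Wigner induced-representation construction: it builds the little-group elements $P_{\vec{0},p_{\text{rest}}}^{\text{little}}(a,x,\Lambda,p)$ of (\ref{ady4}) out of an arbitrary boost-and-translation excursion away from the rest frame, fixes a standard boost $\Lambda_{p_{\text{rest}}}(p)$ as in (\ref{adeq:a1}), and verifies (\ref{ady6}), (\ref{idrep}) and (\ref{ady00}) to show that these excursions induce exactly the pure rotations and exhaust $\text{SO}(n)$. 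You observe, correctly, that since the conclusion (\ref{adx3}) quantifies only over pure rotations acting at $p=p_{\text{rest}}$, one can skip all of that scaffolding and apply (\ref{x0.2}) directly to $P(\vec{0},O_2)\circ P(\vec{0},O_1)=P(\vec{0},O_2O_1)$, reading off the internal maps from (\ref{eq:5}) because $O_ip_{\text{rest}}=p_{\text{rest}}$. What the paper's longer route buys is the additional physical content that an \emph{arbitrary} Lorentz transformation acting on a particle of \emph{arbitrary} momentum induces a Wigner rotation in $\text{SO}(n)$ on the internal degrees of freedom; that extra generality is not needed for the literal statement, and the downstream uses in Lemmas \ref{adlemma3} and \ref{adlemma4} in fact only invoke the pure-rotation case you prove. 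Your explicit attention to injectivity of $\zeta\mapsto Z^{\text{class}}_{p_{\text{rest}},\zeta}$ via the separating effects of (\ref{eq:3}) is a point the paper glosses over (it is buried in the claim that (\ref{ady1}) is ``straightforward to see''), and your handling of it is at least as careful as the paper's own; the residual awkwardness of the $\delta$-function normalization is inherited from the paper's formalism, not introduced by you.
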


In our proof of Lemma \ref{newlemma}, the inclusion of the states $Z_{p,\zeta}^{\text{class}}\in\mathscr{S}^\text{class}$ with well defined classical $1+n$ momentum $p$, as follows from Postulate \hyperref[P2]{2.1}, allows us to use Wigner's method of induced representations \cite{W39,Weinbergbook}. Then, from Poincar{\'e} invariance (Postulate \hyperref[P1first]{1.1}), we show that under proper orthocrhonous Poincar{\'e} transformations in a reference frame in which the particle is stationary, with momentum $p_\text{rest} = (m,0,\ldots,0)^t$, the states $\zeta\in\mathcal{S}\subset\mathbb{R}^{d+1}$ for the particle's internal degrees of freedom must transform as representations of the little group for massive particles, the group that leaves $p_\text{rest}$ invariant, which is
$\text{SO}(n)$.

\begin{proof}[Proof of Lemma \ref{newlemma}]
%We define $\Lambda_{p_\text{rest}}(p)\in \mathfrak{L}$ as the proper orthochronous Lorentz transformation that takes the standard four momentum ${p_\text{rest}}\equiv(m,0,0,0)^t$, with $m>0$, to $p\in\Pi_{p_\text{rest}}\equiv\{ \Lambda {p_\text{rest}}\vert \Lambda\in \mathfrak{L}\}$. It follows that $\Lambda_{p_\text{rest}}({p_\text{rest}})$ is the identity on $\mathbb{R}^4$. 

A proper orthochronous Poincar{\'e} transformation $P(x,\Lambda)\in\mathfrak{Poin}$ denotes a translation by the spacetime vector $x\in\mathbb{R}^{1+n}$ and a proper orthochronous Lorentz transformation $\Lambda\in\mathfrak{L}$. From (\ref{neweq:l1}) and (\ref{eq:l1.1}), under a Poincar{\'e} transformation $P(x,\Lambda)$, a pair of spacetime and momentum $1+n$ vectors $(b,p)$ transform as
\begin{equation}
\label{ady9}
P(x,\Lambda)(b,p) =(x+\Lambda b, \Lambda p),
\end{equation}
and an arbitrary pair of Poincar{\'e} transformations compose as
\begin{equation}
\label{ady8}
P(x',\Lambda')\circ P(x,\Lambda) =P(x'+\Lambda' x,\Lambda' \Lambda),
\end{equation}
for all $x', x, b, p \in\mathbb{R}^{1+n}$ and all $\Lambda', \Lambda \in\mathfrak{L}$.

We consider a state $Z_{p,\zeta}^{\text{class}}\in\mathscr{S}^{\text{class}}$ with well defined classical $1+n$ momentum $p\in\Pi_{p_\text{rest}}$ and arbitrary state $\zeta\in\mathcal{S}\subset \mathbb{R}^{1+n}$ for any of the internal degrees of freedom of a particle of the type $\mathcal{P}$, in a reference frame $F$ in which the $1+n$ momentum is $p$. Under a Poincar{\'e} transformation $P^{-1}(x,\Lambda)\in\mathfrak{Poin}$, the reference frame $F$ is transformed into a reference frame $F'$. From Postulate \hyperref[P1first]{1.1}, in the frame $F'$, the state is
\begin{equation}
\label{ady0}
\hat{\bold{R}}^{\text{st}}(P(x,\Lambda))[Z_{p,\zeta}^{\text{class}}]=Z_{\Lambda p, R_p^{\text{st}}(P(x,\Lambda))\zeta}^{\text{class}},
\end{equation}
where $\hat{\bold{R}}^{\text{st}}(P(x,\Lambda))$ is a representation of $\mathfrak{Poin}$ and $R^{\text{st}}_p(P(x,\Lambda)): \mathcal{S} \rightarrow \mathcal{S}$ is a transformation on the states for the discrete degrees of freedom, for all $x\in\mathbb{R}^{1+n}$, all $p\in\Pi_{p_\text{rest}}$ and all $\Lambda, \Lambda'\in\mathfrak{L}$. In other words, the state $Z_{p,\zeta}^{\text{class}}$ is transformed into the state $\hat{\bold{R}}^{\text{st}}(P(x,\Lambda))[Z_{p,\zeta}^{\text{class}}]$ given by (\ref{ady0}) under the Poincar{\'e} transformation $P^{-1}(x,\Lambda)$ of the reference frame. In what follows we simply say that $Z_{p,\zeta}^{\text{class}}$ is transformed into $\hat{\bold{R}}^{\text{st}}(P(x,\Lambda))[Z_{p,\zeta}^{\text{class}}]$ under $P(x,\Lambda)$ (see Fig. \ref{fig2}). 

We define the Poincar{\'e} transformation
\begin{eqnarray}
\label{ady4}
P_{\vec{0},p_{\text{rest}}}^{\text{little}}(a,x,\Lambda,p)&\equiv& P(-\Lambda^{-1}_{p_{\text{rest}}}(\Lambda p)(x+a),\Lambda^{-1}_{p_{\text{rest}}}(\Lambda p))\nonumber\\
&&\!\!\!\!\!\!\quad \circ P(x+a-\Lambda x,\Lambda) \circ P(x,\Lambda_{p_{\text{rest}}}(p)),\nonumber\\
\end{eqnarray}
where $\vec{0}$ denotes the null spacetime vector $\vec{0}=(0,0,\ldots,0)^\text{t}$, for all $a,x\in\mathbb{R}^{1+n}$, $p\in\Pi_{p_{\text{rest}}}$ and $\Lambda\in\mathfrak{L}$. This transformation takes the pair $(\vec{0},p_{\text{rest}})$ to $(x,p)$, then to $(x+a,\Lambda p)$, and then back to $(\vec{0},p_{\text{rest}})$. Thus, it is an element of the little group of $(\vec{0},p_{\text{rest}})$, which is the subgroup of $\mathfrak{Poin}$ that leaves $(\vec{0},p_{\text{rest}})$ invariant. The little group of $(\vec{0},p_{\text{rest}})$ is SO$(n)$. This can be seen as follows. From (\ref{ady9}), an arbitrary Poincar{\'e} transformation $P(b,\Lambda')\in\mathfrak{Poin}$ transforms $(\vec{0},p_{\text{rest}})$ into $(b,\Lambda' p_{\text{rest}})$. Thus, in order that $P(b,\Lambda')$ belongs to the little group of $(\vec{0},p_{\text{rest}})$, it must hold that $b=\vec{0}$ and $\Lambda' p_{\text{rest}}= p_{\text{rest}}$, which requires that $\Lambda'\in\text{SO}(n)$, as 
$p_{\text{rest}}=(m,0,\ldots,0)^t$, with $m>0$.

\begin{comment}
%COMMENT BEGINS HERE
We show below that
\begin{eqnarray}
\label{y5}
&&P_{\vec{0},p_{\text{rest}}}^{\text{little}}(a',x+a,\Lambda',\Lambda p)  \circ P_{\vec{0},p_{\text{rest}}}^{\text{little}}(a,x,\Lambda,p)\nonumber\\
&&\qquad\qquad\qquad =P_{\vec{0},p_{\text{rest}}}^{\text{little}}(a+a',x,\Lambda'\Lambda, p),
\end{eqnarray}
for all $a', a, x\in\mathbb{R}^4$, $p\in\Pi_{p_\text{rest}}$ and $\Lambda, \Lambda'\in\mathfrak{L}$.
%COMMENT ENDS HERE
\end{comment}

We show below that
\begin{eqnarray}
\label{ady6}
&&R^{\text{st}}_{p_{\text{rest}}} \Bigl(P_{\vec{0},p_{\text{rest}}}^{\text{little}}(a',x+a,\Lambda',\Lambda p)\Bigr) R^{\text{st}}_{p_{\text{rest}}} \Bigl(P_{\vec{0},p_{\text{rest}}}^{\text{little}}(a,x,\Lambda,p)\Bigr) \nonumber\\
&&\quad = R^{\text{st}}_{p_{\text{rest}}} \Bigl(P_{\vec{0},p_{\text{rest}}}^{\text{little}}(a',x+a,\Lambda',\Lambda p)\circ P_{\vec{0},p_{\text{rest}}}^{\text{little}}(a,x,\Lambda,p) \Bigr),\nonumber\\
\end{eqnarray}
and that
\begin{equation}
\label{idrep}
R^{\text{st}}_{p_{\text{rest}}} \bigl(P\bigl(\vec{0},I\bigr)\bigr)=I^{\text{st}},
\end{equation}
where $I^{\text{st}}$ is the identity acting on $\mathcal{S}$, for all $a', a, x\in\mathbb{R}^{1+n}$, all $p\in\Pi_{p_\text{rest}}$ and all $\Lambda', \Lambda\in\mathfrak{L}$. We also show that $\Lambda_{p_{\text{rest}}}(p)$ can be consistently chosen in a way that it generates any $p\in \Pi_{p_\text{rest}}$ and that 
\begin{equation}
\label{ady00}
P_{\vec{0},p_{\text{rest}}}^{\text{little}}(a,x,O,p) = P(\vec{0},O),
\end{equation}
for any pure rotation $O$, i.e. for all $O=\Bigl(\begin{smallmatrix}
  1 & 0 \\
  0 & \tilde{O} 
 \end{smallmatrix}\Bigr)$ with $\tilde{O}\in \text{SO}(n)$, for all $a, x\in\mathbb{R}^{1+n}$ and for all $p\in\Pi_{p_\text{rest}}$. Thus, the transformations $P_{\vec{0},p_{\text{rest}}}^{\text{little}}(a,x,\Lambda,p)$ generate the whole little group $\text{SO}(n)$, for all $a, x\in\mathbb{R}^{1+n}$, $p\in\Pi_{p_\text{rest}}$ and $\Lambda \in \mathfrak{L}$. Therefore, it follows from (\ref{ady8}), (\ref{ady6}), (\ref{idrep}) and (\ref{ady00}), and from the fact that $\Lambda,\Lambda'\in \mathfrak{L}$ are arbitrary that $R^{\text{st}}_{p_\text{rest}}$ is a representation of the little group $\text{SO}(n)$, as claimed.

We show (\ref{ady6}) and (\ref{idrep}). From (\ref{ady0}), it is straightforward to see that in order that $\hat{\bold{R}}^{\text{st}}(P(x,\Lambda))$ be a representation of $\mathfrak{Poin}$, it must hold that
\begin{equation}
\label{ady1}
R^{\text{st}}_{\Lambda p}(P(b',\Lambda'))R^{\text{st}}_p(P(b,\Lambda))=R^{\text{st}}_p(P(b',\Lambda')\circ P(b,\Lambda)),
\end{equation}
and that
\begin{equation}
\label{idrep2}
R_p^{\text{st}}\bigl(P\bigl(\vec{0},I\bigr)\bigr)=I^{\text{st}},
\end{equation}
for all $b', b\in\mathbb{R}^{1+n}$, all $p\in\Pi_{p_\text{rest}}$ and all $\Lambda', \Lambda\in\mathfrak{L}$. From (\ref{ady4}), $P_{\vec{0},p_{\text{rest}}}^{\text{little}}(a,x,\Lambda,p)$ takes $p_{\text{rest}}$ to $p_{\text{rest}}$. Thus, (\ref{ady6}) and (\ref{idrep}) follow from (\ref{ady1}) and (\ref{idrep2}).

\begin{comment}
%COMMENT BEGINS
, we have
\begin{eqnarray}
\label{y20}
&&R^{\text{st}}_{p_{\text{rest}}} \Bigl(P_{\vec{0},p_{\text{rest}}}^{\text{little}}(a',x+a,\Lambda',\Lambda p)\Bigr) R^{\text{st}}_{p_{\text{rest}}} \Bigl(P_{\vec{0},p_{\text{rest}}}^{\text{little}}(a,x,\Lambda,p)\Bigr) \nonumber\\
&&\!\!\!\!\qquad = R^{\text{st}}_{p_{\text{rest}}} \Bigl(P_{\vec{0},p_{\text{rest}}}^{\text{little}}(a',x+a,\Lambda',\Lambda p) \circ P_{\vec{0},p_{\text{rest}}}^{\text{little}}(a,x,\Lambda,p) \Bigr),\nonumber\\
\end{eqnarray}
for all $a', a, x\in\mathbb{R}^4$, $p\in\Pi_{p_{\text{rest}}}$ and $\Lambda',  \Lambda\in\mathfrak{L}$. Thus, (\ref{y6}) follows straightforwardly from (\ref{y5}) and (\ref{y20}).
%COMMENT ENDS
\end{comment}

We show that $\Lambda_{p_{\text{rest}}}(p)$ can be consistently chosen in a way that it generates any $p\in \Pi_{p_\text{rest}}$. Let $\vec{p}$ be the $n$ momentum of $p$, i.e. the vector whose components correspond to the $n$ spatial dimensions of $p$. Let $\lVert \vec{p}\rVert >0$ be the Euclidean norm of $\vec{p}$, and let
\begin{equation}
\bar{p}\equiv\frac{\vec{p}}{\lVert \vec{p}\rVert}\nonumber
\end{equation}
be the unit vector in the direction of $\vec{p}$. We define
\begin{equation}
\label{adeq:a1}
\Lambda_{p_{\text{rest}}}(p)=Q(\bar{p})S(\lVert \vec{p}\rVert)Q^{-1}(\bar{p}),
\end{equation}
where $Q(\bar{p})$ is a pure rotation that takes the first-axis of the spatial dimensions into the axis $\bar{p}$ and
\begin{equation}
\label{eq:a1.1}
S(\lVert \vec{p}\rVert)=\left(\begin{smallmatrix}
\gamma & \sqrt{\gamma^2 -1} & 0 & 0 & \ldots & 0 &0 \\ 
\sqrt{\gamma^2 -1} & \gamma & 0 & 0& \ldots & 0 & 0\\
0 & 0 & 1 & 0& \ldots & 0&0\\
\vdots  & \vdots  & \vdots & \vdots    &\ddots &\vdots & \vdots\\
0 & 0 & 0 & 0&\ldots &0& 1\\
\end{smallmatrix} \right)\nonumber
\end{equation}
 is a pure boost in the first-axis that depends only on the magnitude of $\vec{p}$, and where
 \begin{equation}
 \gamma\equiv \frac{\sqrt{\lVert \vec{p}\rVert^2+m^2}}{m}.\nonumber
 \end{equation}
  We see that any momentum $p\in\Pi_{p_{\text{rest}}}$ can be obtained from ${p_\text{rest}}$ by such a transformation. Thus, we do not loss any generality by defining $\Lambda_{p_{\text{rest}}}(p)$ as in (\ref{adeq:a1}). 
 
We use the definition (\ref{adeq:a1}) of $\Lambda_{p_{\text{rest}}}(p)$, for all $p\in\Pi_{p_{\text{rest}}}$. We complete the proof by showing (\ref{ady00}).
From the definition (\ref{ady4}) of $P_{\vec{0},p_{\text{rest}}}^{\text{little}}(a,x,\Lambda,p)$, we have
\begin{equation}
\label{ady21}
P_{\vec{0},p_{\text{rest}}}^{\text{little}}(a,x,\Lambda,p)= P\bigl(\vec{0},\Lambda^{-1}_{p_{\text{rest}}}(\Lambda p)\Lambda \Lambda_{p_{\text{rest}}}(p)\bigr),
\end{equation}
for all $a, x\in\mathbb{R}^{1+n}$, all $p\in\Pi_{p_{\text{rest}}}$ and all $\Lambda', \Lambda \in\mathfrak{L}$. Thus, from (\ref{adeq:a1}) and (\ref{ady21}), if $O$ is a pure rotation, i.e if $O=\Bigl(\begin{smallmatrix}
  1 & 0 \\
  0 & \tilde{O} 
 \end{smallmatrix}\Bigr)$ with $\tilde{O}\in \text{SO}(n)$, we have
\begin{eqnarray}
\label{adeq:a2}
&&P_{\vec{0},p_{\text{rest}}}^{\text{little}}(a,x,O,p)\nonumber\\
&&\qquad= P\Bigl(\vec{0},Q(O\bar{p})S^{-1}(\lVert \vec{p}\rVert)Q'(O,\bar{p}) S(\lVert \vec{p}\rVert)Q^{-1}(\bar{p})\Bigr),\nonumber\\
\end{eqnarray}
where 
\begin{equation}
Q'(O,\bar{p})\equiv Q^{-1}(O\bar{p})OQ(\bar{p})\nonumber
\end{equation}
 is a pure rotation that takes the first-axis into $\bar{p}$, then into $O\bar{p}$, and then back into the first-axis, hence, it corresponds to a rotation around the first-axis, and thus commutes with $S(\lVert \bar{p}\rVert)$. It follows from (\ref{adeq:a2}) that $P_{\vec{0},p_{\text{rest}}}^{\text{little}}(a,x,O,p)=P(\vec{0},O)$, as claimed.

\end{proof}

\subsection{If spacetime is Minkowski in $1+n$ dimensions then the elementary system corresponds to an Euclidean $n-$ball}
\label{elementary}
%We assume that spacetime is Minkowski in $1+n$ dimensions, with $n$ being a free variable. Below we prove the following lemma.

\begin{lemma}
\label{adlemma1}
Suppose that spacetime is Minkowski in $1+n$ dimensions. If Postulates \hyperref[P1first]{1} -- \hyperref[P3]{3} hold then the state space and the space of effects of the elementary system corresponds to an Euclidean ball of dimension $d=n$, that is, $\mathcal{S}_{\text{elem}}= \mathcal{S}_{\text{ball}}^{(n)}$ and $\mathcal{E}_{\text{elem}}= \mathcal{E}_{\text{ball}}^{(n)}$, respectively.
\end{lemma}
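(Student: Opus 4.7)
My plan is to combine Lemma \ref{newlemma} with Postulates \hyperref[P1first]{1.2}, \hyperref[P2]{2.2}, and \hyperref[P3]{3} to identify $\mathcal{S}_{\text{elem}}$ and $\mathcal{E}_{\text{elem}}$ as the Euclidean $n$-ball GPT. Since by Postulate \hyperref[P3]{3} the elementary system is realized in the internal degrees of freedom of a particle of the type $\mathcal{P}$, Lemma \ref{newlemma} applies: pure spatial rotations $\{P(\vec{0},O):\tilde{O}\in\mathrm{SO}(n)\}\subset\mathfrak{Poin}$ induce a linear representation $R^{\mathrm{st}}_{p_{\mathrm{rest}}}$ of $\mathrm{SO}(n)$ on $\mathcal{S}_{\text{elem}}\subset\mathbb{R}^{d_{\text{elem}}+1}$. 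This action fixes the normalization coordinate, so it descends to a linear action on $\tilde{\mathcal{S}}_{\text{elem}}\subset\mathbb{R}^{d_{\text{elem}}}$; the invariance of the outcome probabilities produces a dual representation on $\mathcal{E}_{\text{elem}}$, and Postulate \hyperref[P1first]{1.2} (Nontrivial Structure) upgrades both representations to nontrivial ones.

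Next I would pin down $d_{\text{elem}}=n$ via Postulate \hyperref[P3]{3} (Minimality). By the definition of $d_{\text{elem}}$ as the smallest ambient dimension, $\tilde{\mathcal{S}}_{\text{elem}}$ spans $\mathbb{R}^{d_{\text{elem}}}$, so the representation is nontrivial on the full ambient space and its orbits include full $\mathrm{SO}(n)$-spheres of positive radius; moreover, Postulate \hyperref[P2]{2.2} contributes at least two perfectly distinguishable states. Among real representations of $\mathrm{SO}(n)$ admitting a full-dimensional convex invariant set compatible with these constraints, the minimal ambient dimension is $n$, realized by the defining (vector) representation of $\mathrm{SO}(n)$ on $\mathbb{R}^n$; Postulate \hyperref[P3]{3} saturates this minimum, so $d_{\text{elem}}=n$ and, up to a choice of basis, the representation is the defining one.

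With the representation identified, each orbit in $\mathbb{R}^n$ is a Euclidean sphere centered at the origin, so every closed convex $\mathrm{SO}(n)$-invariant subset is a union of concentric closed balls and, by convexity, a single closed ball centered at the origin; Postulate \hyperref[P2]{2.2} forces its radius to be positive, and rescaling to unit radius gives $\mathcal{S}_{\text{elem}}=\mathcal{S}_{\text{ball}}^{(n)}$. For the effects, I would argue that any perfectly distinguishing pair $(\zeta_0,\zeta_1)$, $(\varepsilon_0,\varepsilon_1)$ supplied by Postulate \hyperref[P2]{2.2} must, by a Cauchy--Schwarz saturation of the normalization bound $0\le\varepsilon\cdot\zeta\le 1$ on the unit ball, consist of antipodal pure states $\zeta_{\pm\tilde{v}}=\bigl(\begin{smallmatrix}1\\\pm\tilde{v}\end{smallmatrix}\bigr)$ with effects $\varepsilon_{\pm\tilde{v}}=\tfrac{1}{2}(1,\pm\tilde{v})^{\mathrm{t}}$. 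Applying the $\mathrm{SO}(n)$-action to these effects then populates the entire equatorial family $\{\varepsilon_v:v\in\mathcal{S}_{\text{sphere}}^{(n)}\}\subset\mathcal{E}_{\text{elem}}$, and convexity with $u$ and $\varepsilon_{\bold{0}}$ yields $\mathcal{E}_{\text{ball}}^{(n)}\subseteq\mathcal{E}_{\text{elem}}$; since $\mathcal{E}_{\text{ball}}^{(n)}$ already coincides with the bicone of normalized effects $\mathcal{E}_{\mathrm{norm}}$ for the ball state space, the reverse inclusion is automatic and equality follows.

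The main obstacle is the dimension-counting in the second step: rigorously excluding smaller candidates that might arise from non-faithful representations of $\mathrm{SO}(n)$, and confirming that $n$ is indeed the minimum ambient dimension for a nontrivial $\mathrm{SO}(n)$-invariant convex set of full dimension supporting two perfectly distinguishable states, so that Postulate \hyperref[P3]{3} selects precisely the defining representation.
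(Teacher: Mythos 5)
Your proposal is correct and takes essentially the same route as the paper: Lemma \ref{newlemma} supplies the $\text{SO}(n)$ action on the internal state space, Nontrivial Structure plus Minimality fix $d_{\text{elem}}=n$ with the defining representation, the orbit-plus-convexity argument yields $\mathcal{S}_{\text{elem}}=\mathcal{S}_{\text{ball}}^{(n)}$, and the classical-bit pair (antipodal pure states with effects $\tfrac{1}{2}\zeta_{\pm}$) rotated through $\text{SO}(n)$ together with $\mathcal{E}_{\text{elem}}\subseteq\mathcal{E}_{\text{norm}}=\mathcal{E}_{\text{ball}}^{(n)}$ gives the effect space, which is precisely the content of the paper's Lemmas \ref{adlemma3} and \ref{adlemma4}. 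The dimension-counting obstacle you flag is exactly the step the paper settles by simply asserting that the smallest nontrivial representation of $\text{SO}(n)$ is the defining one, so your argument is at the same level of rigor as the paper's own proof.
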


The proof of Lemma \ref{adlemma1} uses Lemma \ref{newlemma}, given in section \ref{technical}, and Lemmas \ref{adlemma3} and \ref{adlemma4} given below, and is provided at the end of this subsection. 
%A brief summary of the proof is given below.

%The proof of Lemma \ref{adlemma1} uses the following lemmas.

Lemma \ref{adlemma3} uses Nontrivial Structure (Postulate \hyperref[P1first]{1.2}) and Lemma \ref{newlemma} to show that the inequality $d\geq n$ between the number $d$ of real internal degrees of freedom of a massive particle of the type $\mathcal{P}$ and the number of spatial dimensions $n$ must hold.

\begin{lemma}
\label{adlemma3}
If spacetime is Minkowski in $1+n$ dimensions and Postulates \hyperref[P1first]{1} and \hyperref[P2]{2.1} hold then the dimension $d$ for the state space $\mathcal{S}$ of any of the internal degrees of freedom of a particle of the type $\mathcal{P}$ is bounded by $d\geq n$. Let $p_{\text{rest}}=(m,0,\ldots,0)^{\text{t}}$ be the $1+n$ momentum of a particle of the type $\mathcal{P}$, which has mass $m>0$. If $d=n$ then in Lemma \ref{newlemma}, we have
\begin{equation}
R^{\text{st}}_{p_\text{rest}}(P(\vec{0},O))=P(\vec{0},O),\\
\end{equation}
for all pure rotations $O$, i.e. with $\tilde{O}\in\text{SO}(n)$. 
\end{lemma}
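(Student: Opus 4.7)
The plan is to apply Lemma \ref{newlemma} to reduce the problem to a representation-theoretic question about $\text{SO}(n)$, invoke Postulate \hyperref[P1first]{1.2} to force nontriviality of the resulting representation, and then read off $d\geq n$ and the identification with the standard vector representation from elementary Lie theory.

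First I would observe that, by Lemma \ref{newlemma}, for every pure rotation $O=\bigl(\begin{smallmatrix}1 & 0 \\ 0 & \tilde O\end{smallmatrix}\bigr)$ with $\tilde O\in\text{SO}(n)$, the map $R^{\text{st}}_{p_{\text{rest}}}(P(\vec{0},O))$ acts linearly on the ambient space $\mathbb{R}^{d+1}$ of $\mathcal{S}$. Because the state space lies in the affine hyperplane $\zeta_0=1$ picked out by the unique (by causality) unit effect $u=(1,\vec{0})^{\text{t}}$, each such map must preserve the first coordinate and hence take block form $\bigl(\begin{smallmatrix}1 & 0 \\ 0 & \tilde R(O)\end{smallmatrix}\bigr)$, where $\tilde R:\text{SO}(n)\to\text{GL}(d,\mathbb{R})$ is a linear representation leaving the convex body $\tilde{\mathcal{S}}\subset\mathbb{R}^d$ invariant. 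Lemma \ref{newlemma} forces the states $\zeta\in\mathcal{S}$ to transform in this way under the subgroup $\text{SO}(n)\subset\mathfrak{Poin}$ in order to maintain consistency with Nontrivial Poincar{\'e} Invariance, so Postulate \hyperref[P1first]{1.2} requires $\tilde R$ to be nontrivial.

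Next I would extract the dimension bound. By compactness of $\text{SO}(n)$, averaging an arbitrary inner product on $\mathbb{R}^d$ with the Haar measure yields an invariant one, so we may take $\tilde R(O)\in\text{O}(d)$; connectedness of $\text{SO}(n)$ and continuity of $\tilde R$ then place the image inside $\text{SO}(d)$. Passing to Lie algebras, $d\tilde R:\mathfrak{so}(n)\to\mathfrak{so}(d)$ is a homomorphism whose kernel is an ideal of $\mathfrak{so}(n)$; for $n\geq 3$ with $n\neq 4$, $\mathfrak{so}(n)$ is simple, so nontriviality of $\tilde R$ forces $d\tilde R$ to be injective, whence
\[\tfrac{n(n-1)}{2}=\dim\mathfrak{so}(n)\leq\dim\mathfrak{so}(d)=\tfrac{d(d-1)}{2},\]
which gives $d\geq n$. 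The cases $n=1$ and $n=2$ are immediate (for $n=1$ the group is trivial and Postulate \hyperref[P2]{2.2} already gives $d\geq 1$; for $n=2$ the minimal nontrivial real representation of $\text{SO}(2)$ is the 2-dimensional standard rotation).

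For the equality case $d=n$, the injective homomorphism $d\tilde R:\mathfrak{so}(n)\to\mathfrak{so}(n)$ between equal-dimensional spaces is in fact an isomorphism, so $\tilde R$ is a continuous automorphism of $\text{SO}(n)$. For $n\geq 3$ every such automorphism is given by conjugation by some element $M\in\text{O}(n)$, and a basis change on $\mathbb{R}^d$ — a freedom we have in the ambient space of $\mathcal{S}$ — absorbs $M$, yielding $\tilde R(O)=\tilde O$. Combined with the block form this gives $R^{\text{st}}_{p_{\text{rest}}}(P(\vec{0},O))=\bigl(\begin{smallmatrix}1 & 0 \\ 0 & \tilde O\end{smallmatrix}\bigr)=P(\vec{0},O)$ under the natural identification of the state space ambient $\mathbb{R}^{n+1}$ with the Minkowski coordinate space $\mathbb{R}^{1+n}$. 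The main obstacle is the exceptional case $n=4$, where $\mathfrak{so}(4)\cong\mathfrak{so}(3)\oplus\mathfrak{so}(3)$ is only semisimple and admits a nontrivial but non-faithful three-dimensional real representation on self-dual two-forms, so the clean simplicity argument does not exclude $d=3<n$; closing this gap requires exploiting further structure of $\tilde{\mathcal{S}}$ and its interaction with boosts through the full family $\{Z^{\text{class}}_{p,\zeta}\}_{p\in\Pi_{p_{\text{rest}}}}$ of Postulate \hyperref[P2]{2.1}.
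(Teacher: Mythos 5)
Your proposal follows the same route as the paper's own proof: use Lemma \ref{newlemma} to obtain a representation of the little group $\text{SO}(n)$ on the ambient space of $\mathcal{S}$, invoke Postulate \hyperref[P1first]{1.2} to exclude the trivial representation, and then argue that any nontrivial representation has dimension at least $n$ and, in the minimal case $d=n$, is conjugate to the defining representation, with the conjugating map absorbed into a physically irrelevant change of basis (the paper's transformation $L$). The difference is one of rigor rather than strategy: the paper compresses the entire representation-theoretic content into the bare assertion that ``the nontrivial representation of $\text{SO}(n)$ with the smallest dimension is $\text{SO}(n)$ itself,'' whereas you actually prove the two claims for $n\neq 4$ --- Haar averaging to orthogonalize, simplicity of $\mathfrak{so}(n)$ to force injectivity of $d\tilde R$ and hence $\dim\mathfrak{so}(n)\le\dim\mathfrak{so}(d)$, i.e.\ $d\ge n$, and the classification of automorphisms of $\text{SO}(n)$ to settle the equality case. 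Your dimension count via Lie-algebra embedding is also a cleaner route to $d\ge n$ than classifying minimal nontrivial irreps.

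The $n=4$ obstruction you flag is genuine, and it is not resolved by the paper either: $\text{SO}(4)$ acts nontrivially on the three-dimensional space of self-dual $2$-forms, so the paper's key unproved assertion is false at $n=4$, and a $d=3$ state space carrying that representation would satisfy the stated hypotheses while violating $d\ge n$. In other words, what you honestly present as a gap in your argument is in fact a gap in the lemma's own proof, silently passed over by the source; closing it would indeed require input beyond the little-group representation alone. A second place where you are more careful than the paper is $n=2$: the representations $\theta\mapsto R(k\theta)$ with $|k|\ge 2$ are nontrivial, two-dimensional, and not conjugate to the defining representation, so the equality clause needs a separate argument there too; you correctly restrict the automorphism step to $n\ge3$, while the paper asserts the identification for all $n$. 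One small point to tighten in your write-up: the block form $\bigl(\begin{smallmatrix}1 & 0 \\ 0 & \tilde R(O)\end{smallmatrix}\bigr)$ requires removing a possible translation part of the affine action on the hyperplane $\zeta_0=1$; this is done by centering coordinates at a fixed point of the compact group action (which exists by averaging over an orbit), a step the paper also leaves implicit.
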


\begin{proof}
From Lemma \ref{newlemma}, the states $\zeta\in\mathcal{S}\subset\mathbb{R}^{d+1}$ for any of  the internal degrees of freedom of a stationary massive particle of the type $\mathcal{P}$ transform as $R^{\text{st}}_{p_\text{rest}}(P(x,\Lambda)) \zeta$ under a Poincar{\'e} transformation $P(x,\Lambda)\in \mathfrak{Poin}$, where $R^{\text{st}}_{p_\text{rest}}$ is a representation of $\text{SO}(n)$. 
From Postulate \hyperref[P1first]{1.2}, we discard the possibility that $R^{\text{st}}_{p_\text{rest}}(P(x,\Lambda))$ be the trivial representation $R^{\text{st}}_{p_\text{rest}}(P(x,\Lambda))=I^{\text{st}}$, where $I^{\text{st}}$ is the identity acting on $\mathcal{S}$. The nontrivial representation of $\text{SO}(n)$ with the smallest dimension is $\text{SO}(n)$ itself:
\begin{equation}
\label{ady22}
R^{\text{st}}_{p_\text{rest}}(P(\vec{0},O))=P(\vec{0},O),
\end{equation}
for all
 \begin{equation}
 \label{ady23}
 O\equiv\Bigl(\begin{smallmatrix}
  1 & 0 \\
  0 & \tilde{O} 
 \end{smallmatrix}\Bigr),
\end{equation} 
  with $\tilde{O}\in \text{SO}(n)$.
 
We can also  have equivalent representations 
\begin{equation}
\label{ady24}
R^{\text{st}}_{p_\text{rest}}(P(\vec{0},O))=L^{-1}P(\vec{0},O)L,
\end{equation}
for any invertible linear map $L$ and for all $O$ given by (\ref{ady23}). We note that because $P(\vec{0},O)$ is a $(1+n)\times (1+n)$ matrix (given by (\ref{ady23})), and since $L$ is an invertible linear map, the representation given by (\ref{ady24}) is a matrix acting on $\mathbb{R}^{1+d'}$ with $d'\geq n$. If we have the representation (\ref{ady24}), we can transform the states, effects and transformations by 
\begin{eqnarray}
\label{ady25}
\zeta\rightarrow \zeta_{\text{new}}&\equiv&L\zeta,\nonumber\\
\varepsilon\rightarrow\varepsilon_{\text{new}}&\equiv& (L^{-1})^{\text{t}}\varepsilon,\nonumber\\
\tau\rightarrow \tau_{\text{new}}&\equiv&L\tau L^{-1},
\end{eqnarray}
for all $\zeta\in\mathcal{S}$, all $\varepsilon\in\mathcal{E}$ and all $\tau\in\mathcal{T}$, 
without changing the physics. This is because the outcome probabilities do not change:
 \begin{eqnarray}
 \label{ady26}
 \varepsilon_{\text{new}}(\zeta_{\text{new}})&=&\varepsilon(\zeta),\nonumber\\
 \varepsilon_{\text{new}}(\tau_{\text{new}}\zeta_{\text{new}})&=&\varepsilon(\tau\zeta),
 \end{eqnarray}
 for all $\zeta\in\mathcal{S}$, all $\varepsilon\in\mathcal{E}$ and all $\tau\in\mathcal{T}$.
 
Thus, from (\ref{ady26}), if we have a representation $R^{\text{st}}_{p_\text{rest}}$ given by (\ref{ady24}), we can apply the transformations (\ref{ady25}) and obtain a new representation given by (\ref{ady22}) that gives the same outcome probabilities. It follows that, in general, the dimension of the representation $R^{\text{st}}_{p_\text{rest}}$, and thus of the state space $\mathcal{S}$, satisfies $d\geq n$. Furthermore, if $d=n$ then (\ref{ady22}) holds, as claimed.
\end{proof}

Lemma \ref{adlemma4} below proves, from Postulates \hyperref[P1first]{1} and \hyperref[P2]{2} and using Lemma \ref{adlemma3}, that if the state space of the elementary system is an Euclidean ball of dimension $n$, i.e. if $\mathcal{S}_\text{elem}=\mathcal{S}_\text{ball}^{(n)}$, then the space of effects of the elementary system is $\mathcal{E}_\text{elem}=\mathcal{E}_\text{ball}^{(n)}$. The Existence of the Classical Bit (Postulate \hyperref[P1first]{2.2}) means that there is a pair of states that can be perfectly distinguished in a single measurement. Since $\mathcal{S}_\text{elem}=\mathcal{S}_\text{ball}^{(n)}$, this requires that there exists an effect of the form $\varepsilon_0=\frac{1}{2}\zeta_0\in\mathcal{E}_{\text{elem}}$ with $\zeta_0=\bigl(\begin{smallmatrix}
  1  \\
  r_0 
 \end{smallmatrix}\bigr)\in\mathcal{S}_\text{ball}^{(n)}$ pure, that is, with $r_0\in\mathbb{R}^n$ and $\lVert r_0\rVert=1$. Thus, we obtain from Poincar{\'e} invariance that by applying all the Poincar{\'e} transformations that correspond to the group of spatial rotations $\text{SO}(n)$, all extremal effects in $\mathcal{E}_\text{ball}^{(n)}$ must be included in the set of effects $\mathcal{E}_{\text{elem}}$ of the elementary system. Since $\mathcal{E}_{\text{elem}}$ is convex, we obtain that $\mathcal{E}_{\text{elem}}=\mathcal{E}_\text{ball}^{(n)}$.

\begin{lemma}
\label{adlemma4}
Suppose that spacetime is Minkowski in $1+n$ dimensions and Postulates \hyperref[P1first]{1} -- \hyperref[P3]{3} hold. If the elementary system has the state space $\mathcal{S}_{\text{elem}}= \mathcal{S}_{\text{ball}}^{(\text{n})}$, then its space of effects is given by $\mathcal{E}_{\text{elem}}= \mathcal{E}_{\text{ball}}^{(\text{n})}$.
\end{lemma}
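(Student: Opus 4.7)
The plan is to establish both inclusions $\mathcal{E}_{\text{ball}}^{(n)} \subseteq \mathcal{E}_{\text{elem}}$ and $\mathcal{E}_{\text{elem}} \subseteq \mathcal{E}_{\text{ball}}^{(n)}$ separately. The second is a convex-geometry computation: any $\varepsilon = \bigl(\begin{smallmatrix} a \\ \vec{b} \end{smallmatrix}\bigr) \in \mathcal{E}_{\text{elem}} \subseteq \mathcal{E}_{\text{norm}}$ must satisfy $0 \le a + \vec{b}\cdot\tilde\zeta \le 1$ for every $\tilde\zeta$ with $\lVert\tilde\zeta\rVert \le 1$, and optimizing over the unit ball reduces this to $\lVert\vec{b}\rVert \le a$ and $a + \lVert\vec{b}\rVert \le 1$; the resulting set of $(a,\vec{b})$ is precisely $\mathcal{E}_{\text{ball}}^{(n)}$. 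So the real work is generating enough elements of $\mathcal{E}_{\text{elem}}$ to fill out $\mathcal{E}_{\text{ball}}^{(n)}$.

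I would first extract a single extremal Bloch effect from Postulate \hyperref[P2]{2.2} (Existence of the Classical Bit), which provides states $\zeta_0,\zeta_1 \in \mathcal{S}_{\text{elem}}$ and effects $\varepsilon_0,\varepsilon_1 \in \mathcal{E}_{\text{elem}}$ with $\varepsilon_i(\zeta_j) = \delta_{i,j}$. Perfect distinguishability inside the ball $\mathcal{S}_{\text{ball}}^{(n)}$ forces $\zeta_0,\zeta_1$ to be antipodal pure states, say $\zeta_0 = \bigl(\begin{smallmatrix} 1 \\ r_0 \end{smallmatrix}\bigr)$ and $\zeta_1 = \bigl(\begin{smallmatrix} 1 \\ -r_0 \end{smallmatrix}\bigr)$ with $\lVert r_0\rVert = 1$. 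Writing $\varepsilon_0 = \bigl(\begin{smallmatrix} a \\ \vec{b} \end{smallmatrix}\bigr)$, the constraints $\varepsilon_0(\zeta_0)=1$, $\varepsilon_0(\zeta_1)=0$ and normalization over the whole ball (via Cauchy--Schwarz) uniquely force $a=1/2$ and $\vec{b} = r_0/2$, so $\varepsilon_0 = \tfrac{1}{2}\zeta_0$ is exactly the Bloch extremal effect associated with $\zeta_0$.

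Second, I would propagate this seed effect over the whole sphere using spatial rotations. Since $\dim\mathcal{S}_{\text{elem}} = n$, Lemma \ref{adlemma3} gives $R^{\text{st}}_{p_{\text{rest}}}(P(\vec{0},O)) = P(\vec{0},O)$ on states for any pure rotation $O = \bigl(\begin{smallmatrix} 1 & 0 \\ 0 & \tilde O \end{smallmatrix}\bigr)$ with $\tilde O \in \text{SO}(n)$. By Postulate \hyperref[P1first]{1.1}, $\mathcal{E}_{\text{elem}}$ is stable under the accompanying effect representation, and the probability-invariance condition $\varepsilon'(\zeta') = \varepsilon(\zeta)$ together with the orthogonality of $O$ determines that effects must transform as $\varepsilon \mapsto O\varepsilon$. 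Applying this to $\varepsilon_0 = \tfrac{1}{2}\zeta_0$ and using transitivity of $\text{SO}(n)$ on the unit sphere $S^{n-1}$ (for $n \ge 2$) produces every extremal effect $\varepsilon_v = \tfrac{1}{2}v$ with $v \in \mathcal{S}_{\text{sphere}}^{(n)}$; the case $n=1$ is trivial, since Postulate \hyperref[P2]{2.2} directly supplies both extremal effects $\tfrac{1}{2}\zeta_0$ and $\tfrac{1}{2}\zeta_1 = u - \tfrac{1}{2}\zeta_0$. Together with $u$ and $\varepsilon_{\mathbf{0}}$, convexity of $\mathcal{E}_{\text{elem}}$ then gives $\mathcal{E}_{\text{ball}}^{(n)} \subseteq \mathcal{E}_{\text{elem}}$, completing the argument.

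The step I expect to be the main obstacle is justifying that effects inherit the same orthogonal representation of $\text{SO}(n)$ as the states do: Lemma \ref{adlemma3} is stated only for the state representation $R^{\text{st}}$, so the corresponding statement for the effect representation $R^{\text{ef}}$ must be read off from probability invariance and the orthogonality of the pure-rotation matrix, rather than quoted directly. Once that linkage is made clean, the remainder reduces to transitivity of $\text{SO}(n)$ on $S^{n-1}$ together with the elementary convex-geometric characterization of the Bloch effect cone.
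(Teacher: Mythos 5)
Your proposal is correct and follows essentially the same route as the paper's proof: the easy inclusion via $\mathcal{E}_{\text{elem}}\subseteq\mathcal{E}_{\text{norm}}=\mathcal{E}_{\text{ball}}^{(n)}$, a seed extremal effect $\varepsilon_0=\tfrac{1}{2}\zeta_0$ obtained from Postulate 2.2, propagation over the whole sphere by $\text{SO}(n)$ rotations using Lemma \ref{adlemma3} and Poincar\'e invariance, and finally convexity. The only (harmless) variation is in the propagation step: you pin down the effect transformation as $\varepsilon\mapsto O\varepsilon$ directly from probability invariance together with the fact that $\mathcal{S}_{\text{ball}}^{(n)}$ linearly spans $\mathbb{R}^{n+1}$ and $O$ is orthogonal, whereas the paper transports the probability-one relation $\varepsilon_0(\zeta_0)=1$ through the transformation and reads off that the transformed effect must be $\tfrac{1}{2}O\zeta_0$; both arguments are sound, and your separate treatment of $n=1$ is a detail the paper leaves implicit.
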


\begin{proof}
First, we note that since $\mathcal{S}_{\text{elem}}= \mathcal{S}_{\text{ball}}^{(n)}$, the dimension of $\mathcal{S}_{\text{elem}}$ is $d=n$. Let $\mathcal{E}_{\text{elem}}$ be the set of effects associated to $\mathcal{S}_{\text{elem}}= \mathcal{S}_{\text{ball}}^{(n)}$. We show that $\mathcal{E}_{\text{elem}}=\mathcal{E}_{\text{ball}}^{(n)}$, where $\mathcal{E}_{\text{ball}}^{(n)}$ was defined as the convex hull of the zero effect $\varepsilon_{\bold{0}}\equiv\bigl(\begin{smallmatrix}
0\\ \bold{0}
\end{smallmatrix} \bigr)$, the unit effect $u\equiv\bigl(\begin{smallmatrix}1\\ \bold{0}
\end{smallmatrix} \bigr)$, and the extremal effects $\varepsilon_t\equiv\frac{1}{2}\bigl(\begin{smallmatrix}
1\\ t
\end{smallmatrix} \bigr)$, where $\bold{0},t \in\mathbb{R}^n$,  $\bold{0}$ is the null vector, and $\lVert t \rVert=1$ (see Fig. \ref{fig1}).

As previously mentioned, we have in general that 
\begin{equation}
\label{ady27}
\mathcal{E}_{\text{elem}}\subseteq \mathcal{E}_{\text{elem}}^{\text{norm}},
\end{equation}
where
\begin{equation}
\label{ady28}
\mathcal{E}_{\text{elem}}^{\text{norm}}\equiv \{\varepsilon\in \mathbb{R}^{1+n}\vert 0\leq \varepsilon(\zeta)\leq 1 ~\forall \zeta\in\mathcal{S}_{\text{ball}}^{(n)}\}\nonumber
\end{equation}
is the set of normalized effects associated to $\mathcal{S}_{\text{elem}}=\mathcal{S}_{\text{ball}}^{(n)}$. It is not difficult to see that 
\begin{equation}
\label{ady29}
\mathcal{E}_{\text{elem}}^{\text{norm}}= \mathcal{E}_{\text{ball}}^{(n)}.
\end{equation}
It follows from (\ref{ady27}) and (\ref{ady29}) that
\begin{equation}
\label{ady30}
\mathcal{E}_{\text{elem}}\subseteq \mathcal{E}_{\text{ball}}^{(n)}.
\end{equation}
Thus, from (\ref{ady30}), we only need to show that all effects $\varepsilon\in \mathcal{E}_{\text{ball}}^{(n)}$ are elements of $\mathcal{E}_{\text{elem}}$, that is, we need to show that
\begin{equation}
\label{ady31}
\mathcal{E}_{\text{ball}}^{(n)}\subseteq \mathcal{E}_{\text{elem}}.
\end{equation}

We show (\ref{ady31}). From Postulate \hyperref[P3]{3}, the elementary system can be physically implemented with some internal degrees of freedom of a particle of the type $\mathcal{P}$. We focus on these degrees of freedom in what follows. Thus, we can consider states $Z_{{p_\text{rest}},\zeta}^{\text{class}}\in \mathscr{S}^\text{class}$ and effects $\hat{E}_{{p_\text{rest}},\varepsilon}^{\text{class}}\in\mathscr{E}^{\text{class}}$ for a particle of the type $\mathcal{P}$ such that the internal degrees of freedom correspond to an elementary system, i.e. with $\zeta\in\mathcal{S}_\text{elem}$ and $\varepsilon\in\mathcal{E}_\text{elem}$.

From Postulate \hyperref[P2]{2.2}, there exists a pair of states $\zeta_0,\zeta_1\in\mathcal{S}_{\text{elem}}$ and a pair of effects $\varepsilon_0,\varepsilon_1\in \mathcal{E}_{\text{elem}}$ such that
\begin{equation}
\label{ady32}
\varepsilon_i(\zeta_j)=\delta_{i,j},
\end{equation}
for all $i,j\in\{0,1\}$. Since $\mathcal{S}_{\text{elem}}= \mathcal{S}_{\text{ball}}^{(n)}$, it is easy to see that this condition requires $\zeta_i$ to be pure, i.e. such that 
\begin{equation}
\label{ady33}
\zeta_i=\bigl(\begin{smallmatrix}
1\\ r_i
\end{smallmatrix} \bigr), \qquad \text{with}~\lVert r_i\rVert = 1,
\end{equation}
and $r_i\in\mathbb{R}^n$, for all $i\in\{0,1\}$, and satisfying 
 \begin{equation}
 \label{ady34}
 r_1=-r_0;
 \end{equation}
  and that $\varepsilon_i$ are extremal effects of the form 
  \begin{equation}
  \label{ady35}
  \varepsilon_i=\frac{1}{2}\zeta_i,
\end{equation}
for all $i\in\{0,1\}$.

From Postulate \hyperref[P3]{3}, the elementary system can be physically implemented with some internal degrees of freedom of a particle of the type $\mathcal{P}$. Thus, we can consider the state $Z_{{p_\text{rest}},\zeta_0}^{\text{class}}\in \mathscr{S}^\text{class}$ and the effect $\hat{E}_{{p_\text{rest}},\varepsilon_0}^{\text{class}}\in\mathscr{E}^{\text{class}}$, with $\zeta\in\mathcal{S}_\text{elem}$ and $\varepsilon\in\mathcal{E}_\text{elem}$. It follows straightforwardly from (\ref{eq:3}) and from (\ref{ady32}) that
\begin{equation}
\label{ady36}
\hat{E}_{{p_\text{rest}},\varepsilon_0}^{\text{class}}\bigl[Z_{{p_\text{rest}},\zeta_0}^{\text{class}}\bigr]=1.
\end{equation} 
From Poincar{\'e} invariance, this outcome probability remains the same after the Poincar{\'e} transformation $P(x,\Lambda)\in \mathfrak{Poin}$. More precisely, from  
(\ref{x0.1}), (\ref{eq:6}) and (\ref{ady36}), we have
\begin{equation}
\label{ady37}
\hat{E}_{\Lambda {p_\text{rest}},\varepsilon'_0}^{\text{class}}\bigl[Z_{\Lambda {p_\text{rest}},\zeta'_0}^{\text{class}}\bigr]=1,
\end{equation}
where
\begin{eqnarray}
\label{ady38}
\zeta'_0&=&R^{\text{st}}_{p_\text{rest}}(P(x,\Lambda))\zeta_0,\nonumber\\
\varepsilon'_0&=&R^{\text{ef}}_{p_\text{rest}}(P(x,\Lambda))\varepsilon_0,
\end{eqnarray}
are the transformed states and effects for the discrete degrees of freedom, after the Poincar{\'e} transformation $P(x,\Lambda)\in \mathfrak{Poin}$, respectively.

From (\ref{eq:3}), we see that (\ref{ady37}) is only possible if $\zeta'_0$ is a pure state 
\begin{equation}
\label{ady39}
 \zeta_0'=\bigl(\begin{smallmatrix}1\\ r_0'
\end{smallmatrix} \bigr),\qquad ~\text{with} ~\lVert r_0' \rVert =1,
\end{equation}
and $r_0'\in\mathbb{R}^n$, and if $\varepsilon'_0$  is an extremal effect
\begin{equation}
\label{ady40}
\varepsilon_0'=\frac{1}{2}\zeta_0'.
\end{equation}
Thus, from (\ref{ady38}) and (\ref{ady40}), we have
\begin{equation}
\label{ady41}
\varepsilon'_0=\frac{1}{2}R^{\text{st}}_{p_\text{rest}}(P(x,\Lambda))\zeta_0,
\end{equation}
where $P(x,\Lambda)\in \mathfrak{Poin}$. In (\ref{ady41}), we take $x=\vec{0}$ and $\Lambda=O$ a pure rotation, i.e. with $\tilde{O}\in\text{SO}(n)$. Since the dimension of the state space for the elementary system is $d=n$, from Lemma \ref{adlemma3} we have that  
\begin{equation}
R^{\text{st}}_{p_\text{rest}} (P(\vec{0},O))=O,\nonumber
\end{equation}
for all pure rotations $O$, i.e. for all $\tilde{O}\in\text{SO}(n)$. Thus, it follows that
 \begin{equation}
\label{ady42}
\varepsilon'_0=\frac{1}{2}O\zeta_0,
\end{equation}
 where $\tilde{O}\in\text{SO}(n)$. Therefore, from (\ref{ady42}), we can generate all extremal effects $\varepsilon'_0$ given by (\ref{ady39}) and (\ref{ady40}),
with
\begin{equation}
\label{ady43}
r'_0=\tilde{O} r_0,\nonumber
\end{equation}
for all $\tilde{O}\in\text{SO}(n)$. That is, all extremal effects in $\mathcal{E}_{\text{ball}}^{(n)}$ are elements of $\mathcal{E}_{\text{elem}}$. As previously mentioned, the zero effect $\varepsilon_{\bold{0}}$ and the unit effect $u$ are included in the space of effects. Since $\mathcal{E}_{\text{elem}}$ is convex, any convex combination of the extremal effects, the zero effect $\varepsilon_{\bold{0}}$ and the unit effect $u$ is an element of $\mathcal{E}_{\text{elem}}$, that is $\mathcal{E}_{\text{ball}}^{(n)}\subseteq \mathcal{E}_{\text{elem}}$, as claimed. 
\end{proof}

Having stated and proved Lemmas  \ref{newlemma} -- \ref{adlemma4}, we proceed to show Lemma \ref{adlemma1}.

%\subsubsection{Proof of Lemma \ref{adlemma1}}

\begin{proof}[Proof of Lemma \ref{adlemma1}]
From Lemma \ref{adlemma3}, the dimension of the state space $\mathcal{S}$ for the internal degrees of freedom of a particle of the type $\mathcal{P}$ satisfies $d\geq n$. From Postulate \hyperref[P3]{3}, the elementary system can be physically implemented in some internal degrees of freedom of a particle of the type $\mathcal{P}$ and has the state space $\mathcal{S}_{\text{elem}}$ of minimum dimension $d_{\text{elem}}$ consistent with the considered postulates. Since consistency with Postulates \hyperref[P1first]{1} and \hyperref[P2]{2} is satisfied, the elementary system achieves $d_{\text{elem}}=n$. It will be seen later that this is consistent with Postulates \hyperref[P4]{4} -- \hyperref[P7]{7} too. Thus, we assume that the case $d_{\text{elem}}=n$ is achieved. 

It follows from the previous discussion and from Lemma \ref{adlemma3} that
\begin{equation}
\label{ady44}
R_{\text{dist}}^{p_\text{rest}}(P(\vec{0},O))=O,
\end{equation}
for all $\tilde{O}\in\text{SO}(n)$. We show that in this case $\mathcal{S}_{\text{elem}}=\mathcal{S}_{\text{ball}}^{(n)}$. Consider a state $Z_{{p_\text{rest}},\zeta}^{\text{class}}\in \mathscr{S}^\text{class}$. After a Poincar{\'e} transformation $P(x,\Lambda)\in \mathfrak{Poin}$, we obtain from (\ref{eq:6}) that
$Z_{{p_\text{rest}},\zeta}^{\text{class}}$ transforms into the state
\begin{equation}
\hat{\bold{R}}^{\text{st}}(P(x,\Lambda))[Z_{{p_\text{rest}},\zeta}^{\text{class}}]=Z_{\Lambda {p_\text{rest}},\zeta'}^{\text{class}},\nonumber
\end{equation}
where
\begin{equation}
\zeta' = R^{\text{st}}_{p_{\text{rest}}}(P(x,\Lambda))\zeta
\end{equation}
is the transformed state for the internal degrees of freedom.
Thus, $R^{\text{st}}_{p_\text{rest}}(P(x,\Lambda))$ is an allowed transformation on $\mathcal{S}_{\text{elem}}$ and $\zeta'\in\mathcal{S}_{\text{elem}}$.

Consider a state 
\begin{equation}
\label{ady45}
\zeta_r\equiv\bigl(\begin{smallmatrix}
1\\ r
\end{smallmatrix} \bigr)\in\mathcal{S}_{\text{elem}},
\end{equation} whose vector $r\in\mathbb{R}^n$ has the biggest Euclidean norm
\begin{equation}
\lVert r \rVert >0
\end{equation}
 among all states in $\mathcal{S}_{\text{elem}}$. Without loss of generality, we can take 
\begin{equation}
\lVert r \rVert =1,
\end{equation}
as we argue. In general, the state $\zeta_r$ with the biggest Euclidean norm $\lVert r \rVert$ must satisfy $\lVert r \rVert > 0$. Otherwise, $\mathcal{S}$ would have a single state $\zeta_{\bold{0}}=(\begin{smallmatrix}
1\\ \bold{0}
\end{smallmatrix} \bigr)$. This would mean that $R^{\text{st}}_{p_\text{rest}}(P(\vec{0},\Lambda))$ is a trivial representation $R^{\text{st}}_{p_\text{rest}}(P(x,\Lambda))=I^{\text{st}}$ acting on $\mathbb{R}$. Since this case is discarded from Postulate \hyperref[P1first]{1.2}, we have $\lVert r \rVert >0$. If $\lVert r \rVert \neq1$ we can rescale the states and effects by applying the transformation
\begin{equation}
L_c=\text{diag}(1,c,\ldots,c),
\end{equation}
 with 
 \begin{equation}
 c=\lVert r \rVert ^{-1},
 \end{equation}
  to $\mathcal{S}_{\text{elem}}$ and $(L_c)^{-1}$ to $\mathcal{E}_{\text{elem}}$, which leaves all outcome probabilities unchanged, and so describes the same physics. 

From (\ref{ady44}), the state $\zeta_r$ is transformed into the state $\zeta_{\tilde{O}r}$ by applying the transformation $P(\vec{0},O)$, for all $\tilde{O}\in\text{SO}(n)$. This means that $\mathcal{S}_{\text{elem}}$ contains  a set of states  $\zeta_r$ with $r\in\mathbb{R}^n$ defining a unit sphere in $n-$dimensional Euclidean space. Since $\mathcal{S}_{\text{elem}}$ is convex, any state $\zeta_r$ with $r$ in a unit $n-$ball is in $\mathcal{S}_{\text{elem}}$. Furthermore, we said that the biggest norm of the vectors $r$ is $1$. It follows that the state space $\mathcal{S}_{\text{elem}}$ must be precisely an Euclidean $n-$ball $\mathcal{S}_{\text{ball}}^{(n)}$, defined in Fig. \ref{fig1}.

Finally, it follows from Lemma \ref{adlemma4} that $\mathcal{E}_{\text{elem}}=\mathcal{E}_{\text{ball}}^{(n)}$.
\end{proof}

\subsection{If spacetime is Minkowski in $1+3$ dimensions then the elementary system is locally identical to the qubit}
\label{derivingqubit}

%Now we assume that spacetime is Minkowski in $1+3$ dimensions. Thus, Lemmas \ref{newlemma} -- \ref{adlemma4} hold with $n=3$.

%The following lemma shows, from Postulates \hyperref[P1first]{1} -- \hyperref[P4]{4} and using Lemma \ref{adlemma1} that the state space, the space of effects and the group of reversible transformations of the elementary system are those of the qubit. 

\begin{lemma}
\label{qubitlemma}
If spacetime is Minkowski in $1+3$ dimensions and Postulates \hyperref[P1first]{1} -- \hyperref[P4]{4} hold then the state space, the space of effects and the group of reversible transformations of the elementary system are those of the qubit, i.e. $\mathcal{S}_{\text{elem}}= \mathcal{S}_{\text{BB}}$, $\mathcal{E}_{\text{elem}}= \mathcal{E}_{\text{BB}}$ and $\mathcal{R}_{\text{elem}}= \mathcal{R}_{\text{BB}}$.
\end{lemma}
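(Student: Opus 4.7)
The plan is to combine Lemma~\ref{adlemma1} at $n=3$ with Postulate~\hyperref[P4]{4} to identify the three pieces of structure. Specializing Lemma~\ref{adlemma1} to $n=3$ gives $\mathcal{S}_{\text{elem}}=\mathcal{S}_{\text{ball}}^{(3)}\equiv\mathcal{S}_{\text{BB}}$ and $\mathcal{E}_{\text{elem}}=\mathcal{E}_{\text{ball}}^{(3)}\equiv\mathcal{E}_{\text{BB}}$ directly from the definitions recorded in Fig.~\ref{fig1}, so the first two claims are immediate and only $\mathcal{R}_{\text{elem}}=\mathcal{R}_{\text{BB}}$ requires work.

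For the reversible transformations I would first establish $\mathcal{R}_{\text{elem}}\subseteq\text{O}(3)$ as follows. Every $\tau\in\mathcal{R}_{\text{elem}}$ is an invertible linear map on $\mathbb{R}^4$ that sends $\mathcal{S}_{\text{BB}}$ bijectively onto itself; preservation of the normalization entry (the first coordinate of every state equals one) pins the first row of $\tau$ to $(1,0,0,0)$, forcing the block form $\tau=\bigl(\begin{smallmatrix} 1 & 0 \\ 0 & \tilde{\tau} \end{smallmatrix}\bigr)$. Since reversible transformations take pure states to pure states, $\tilde{\tau}$ must map the Bloch $2$-sphere bijectively onto itself, which together with linearity identifies $\tilde{\tau}$ as an element of $\text{O}(3)$.

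Next, applying Lemma~\ref{adlemma3} at $d=n=3$ yields $R^{\text{st}}_{p_{\text{rest}}}(P(\vec{0},O))=O$ for every pure rotation $O$, so the spatial rotations inside $\mathfrak{Poin}$ induce the whole group $\text{SO}(3)$ as reversible transformations on the internal degrees of freedom of a particle of the type $\mathcal{P}$; by Postulate~\hyperref[P3]{3} these internal degrees of freedom realize the elementary system, so $\text{SO}(3)\subseteq\mathcal{R}_{\text{elem}}$.

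Finally, Postulate~\hyperref[P4]{4} closes the remaining gap. For any pair of pure states there is a continuous path $s\mapsto\tau(s)\in\mathcal{R}_{\text{elem}}$ with $\tau(0)=I$ connecting them; continuity together with $\det\tilde{\tau}(0)=+1$ forces $\det\tilde{\tau}(s)=+1$ throughout, so the identity component of $\mathcal{R}_{\text{elem}}$ lies inside $\text{SO}(3)$. Reading Postulate~\hyperref[P4]{4} as the statement that every physically allowed reversible transformation arises as the endpoint of such a continuous family excludes the orientation-reversing elements of $\text{O}(3)\setminus\text{SO}(3)$ and delivers $\mathcal{R}_{\text{elem}}=\text{SO}(3)=\mathcal{R}_{\text{BB}}$. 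The main obstacle is precisely this last step: a strict reading of Postulate~\hyperref[P4]{4} only guarantees that the identity component of $\mathcal{R}_{\text{elem}}$ acts transitively on the Bloch sphere, not that $\mathcal{R}_{\text{elem}}$ lacks extra disconnected components, so the argument relies on the natural physical interpretation that reversible dynamics are generated continuously from the identity to rule out reflections and conclude the proof.
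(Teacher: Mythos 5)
Your proposal is correct and follows the same basic route as the paper's proof: Lemma \ref{adlemma1} specialized to $n=3$ handles $\mathcal{S}_{\text{elem}}$ and $\mathcal{E}_{\text{elem}}$ verbatim, and Postulate \hyperref[P4]{4} is what settles $\mathcal{R}_{\text{elem}}$. Where you differ is in the treatment of the reversible transformations, and your version is the more careful one. The paper's argument is two sentences: any continuous reversible transformation taking pure states into pure states is a rotation of the Bloch vector, and the group of such rotations is $\mathcal{R}_{\text{BB}}$; it does not spell out the block form $\tau=\bigl(\begin{smallmatrix}1 & 0\\ 0 & \tilde{\tau}\end{smallmatrix}\bigr)$ with $\tilde{\tau}\in\text{O}(3)$, does not separately establish the inclusion $\text{SO}(3)\subseteq\mathcal{R}_{\text{elem}}$ (you obtain it cleanly from Lemma \ref{adlemma3} together with Postulate \hyperref[P3]{3}, whereas the paper implicitly gets transitivity on the sphere from Postulate \hyperref[P4]{4} alone), and it does not address the possibility of orientation-reversing elements. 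The interpretive point you flag at the end is real but is not a gap relative to the paper: the paper resolves it in exactly the way you describe, by reading $\mathcal{R}_{\text{elem}}$ as consisting of \emph{continuous} reversible transformations (those reached continuously from the identity), which excludes the reflections in $\text{O}(3)\setminus\text{SO}(3)$; a strict reading under which a disconnected reflection component could coexist with Postulates \hyperref[P1first]{1}--\hyperref[P4]{4} is only ruled out later, when composition with a second system is considered. One small point of rigor in your step two: pinning the first row of $\tau$ does not by itself kill the lower-left block; you need, e.g., that an affine bijection of the ball (or sphere) onto itself preserves its centroid, which forces the translation part to vanish before concluding $\tilde{\tau}\in\text{O}(3)$.
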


\reff{This means that the measurement statistics and reversible transformations for the elementary system are identical to those of the qubit. Therefore, the elementary system is locally identical to the qubit.}

\begin{proof}[Proof of Lemma \ref{qubitlemma}]
From Postulates \hyperref[P1first]{1} -- \hyperref[P3]{3} and Lemma \ref{adlemma1}, the state space and the space of effects of the elementary system correspond to an Euclidean ball of dimension $d=n=3$, that is, 
$\mathcal{S}_{\text{elem}}= \mathcal{S}_{\text{ball}}^{(3)}$ and $\mathcal{E}_{\text{elem}}= \mathcal{E}_{\text{ball}}^{(3)}$. The state space and the space of effects of the qubit correspond to the Bloch ball: $\mathcal{S}_{\text{BB}}\equiv \mathcal{S}_{\text{ball}}^{(3)}$ and $\mathcal{E}_{\text{BB}}\equiv \mathcal{E}_{\text{ball}}^{(3)}$ (see Fig. \ref{fig1}). Thus, we have $\mathcal{S}_{\text{elem}}= \mathcal{S}_{\text{BB}}$ and $\mathcal{E}_{\text{elem}}= \mathcal{E}_{\text{BB}}$.

From Continuous Reversibility (Postulate \hyperref[P4]{4}), every pair of pure states must be connected by a continuous reversible transformation. The set of pure states in the Bloch ball is the Bloch sphere $\mathcal{S}_{\text{BB}}\equiv \mathcal{S}_{\text{sphere}}^{(3)}$. Any continuous reversible transformation that takes pure states into pure states corresponds to a rotation of the Bloch vector. The group of rotations of the Bloch vector is $\mathcal{R}_{\text{BB}}$ (see Fig. \ref{fig1}).
\end{proof}

\section{A physical derivation of finite dimensional quantum theory and the number of spatial dimensions in Minkowski spacetime}
\label{quantumtheorysection}

The results of this section are twofold. First, in section \ref{firstderivation} we assume that spacetime is Minkowski in $1+3$ dimensions. From Postulates \hyperref[P1first]{1} -- \hyperref[P7]{7} and using Lemma \ref{qubitlemma} and the results of Ref. \cite{TMSM12,MMAP13}, Theorem \ref{firsttheorem} shows that any physical system of any finite dimension can be described by finite dimensional quantum theory \reff{(see a summary of the proof in Fig. \ref{fig3}).}

%We give a brief summary of our proof of Theorem \ref{firsttheorem} (see Fig. \ref{fig3}). We assume that spacetime is Minkowski in $1+3$ dimensions. From Postulates \hyperref[P1first]{1} -- \hyperref[P4]{4}, Lemma \ref{qubitlemma} states that the elementary system is locally identical to the qubit. The result of Ref. \cite{TMSM12} states that for any theory in which the individual systems are identical qubits and that satisfies Continuous Reversibility (Postulate \hyperref[P4]{4}) and Tomographic Locality (Postulate \hyperref[P5]{5}), if the theory admits any continuous reversible entangling interaction between systems, i.e. the Existence of Entanglement (Postulate \hyperref[P6]{6}) holds, then the allowed states, measurements, and transformations must be identical to those in quantum theory. Thus, by having $N$ massive particles of the type $\mathcal{P}$ encoding each an elementary system, which are identical qubits, in their internal degrees of freedom, we obtain from Postulates \hyperref[P1first]{1} -- \hyperref[P6]{6} that the internal degrees of freedom of these particles are described by $N-$qubit theory. Finally, as argued by Ref. \cite{MMAP13}, Universal Encoding (Postulate \hyperref[P7]{7}) states that for any physical system, any state of finite dimension can be reversibly encoded in a sufficient large number of elementary systems. It follows that any physical system of any finite dimension can be described by finite dimensional quantum theory.

\begin{figure*}
\includegraphics[scale=0.32]{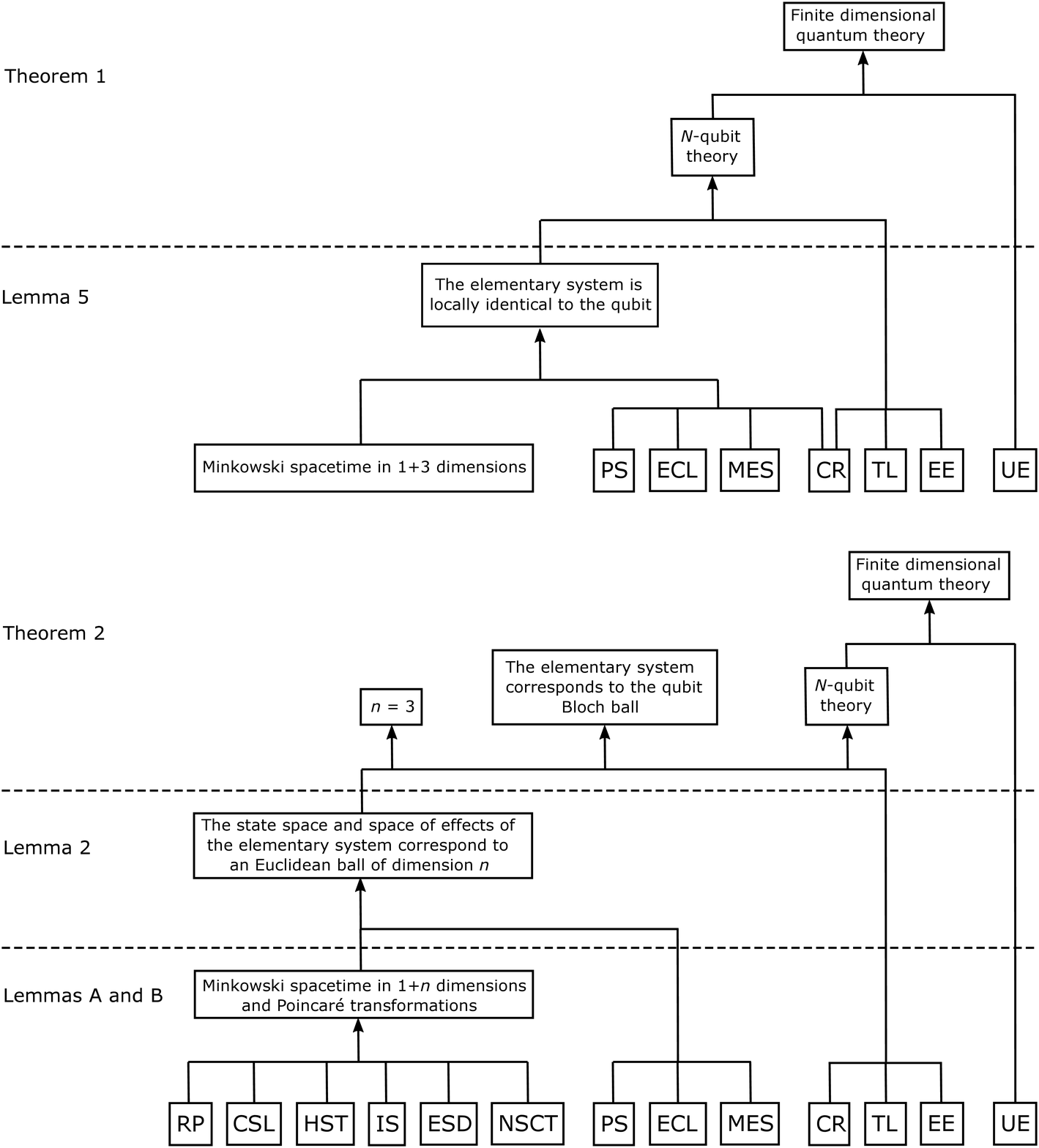}
 \caption{\label{fig3} \textbf{Summary of our reconstructions of Minkowski spacetime and finite dimensional quantum theory.} The considered physical principles and postulates are the Relativity Principle (RP), Constancy of the Speed of Light (CSL), Homogeneity of Space and Time (HST), Isotropy of Space (IS), Euclidean Spatial Distance (ESD), Non-Singularity of Coordinate Transformations (NSCT), Poincar{\'e} Structure (PS), Existence of a Classical Limit (ECL), Minimality of the Elementary System (MES), Continuous Reversibility (CR), Tomographic Locality (TL), Existence of Entanglement (EE) and Universal Encoding (UE), given in sections \ref{derivingMinkowski}, \ref{secmainpostulate} and \ref{particlepostulates}. The dotted lines indicate which lemmas and theorems in this paper prove the stated results. Top: Assuming that spacetime is Minkowski in $1+3$ dimensions, Lemma \ref{qubitlemma} in section \ref{derivingqubit} shows that the elementary system is locally identical to the qubit. This result is used by Theorem \ref{firsttheorem} in section \ref{firstderivation} to reconstruct finite dimensional quantum theory. Bottom: Minkowski spacetime in $1+n$ dimensions and the Poincar{\'e} transformations are derived by Lemmas \ref{Minkowski} and \ref{Poincare} in section \ref{derivingMinkowski} and  Appendix \ref{app} from well established physical principles and postulates. Then, assuming that spacetime is Minkowski in $1+n$ dimensions, Lemma \ref{adlemma1} shows in section \ref{elementary} that the state space and space of effects of the elementary system corresponds to an Euclidean ball of dimension $n$. This result is used by Theorem \ref{secondtheorem} in section \ref{secondderivation} to derive that the number of spatial dimensions is $n=3$, the elementary system corresponds to the qubit Bloch ball, and that any physical system of any finite dimension can be described by finite dimensional quantum theory.}
\end{figure*}

Second, in section \ref{secondderivation} we assume that spacetime is Minkowski in $1+n$ dimensions and leave $n$ as a free variable. As discussed in section \ref{derivingMinkowski}, this follows from well established physical principles.  From Postulates \hyperref[P1first]{1} -- \hyperref[P7]{7} and using Lemma \ref{adlemma1} and the results of Refs. \cite{TMSM12,MMAP13,MMPA14}, Theorem \ref{secondtheorem} shows that the elementary system is the qubit, the number of spatial dimensions is $n=3$, and any physical system of any finite dimension can be described by finite dimensional quantum theory \reff{(see a summary of the proof in Fig. \ref{fig3}).}

\subsection{A physical derivation of finite dimensional quantum theory if spacetime is Minkowski in $1+3$ dimensions}
\label{firstderivation}

%We assume that spacetime is Minkowski in $1+3$ dimensions. Below we prove the following theorem.

\begin{theorem}
\label{firsttheorem}
Consider that spacetime is Minkowski in $1+3$ dimensions and that Postulates \hyperref[P1first]{1} -- \hyperref[P7]{7} hold. Then, any physical system of any finite dimension can be described by finite dimensional quantum theory.
\end{theorem}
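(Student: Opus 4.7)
The plan is to reduce the problem to two successive extension steps, leveraging the qubit reconstruction already obtained in Lemma \ref{qubitlemma}. First, since spacetime is Minkowski in $1+3$ dimensions and Postulates \hyperref[P1first]{1} -- \hyperref[P4]{4} hold, Lemma \ref{qubitlemma} immediately gives that the elementary system has state space $\mathcal{S}_{\text{elem}}=\mathcal{S}_{\text{BB}}$, effect space $\mathcal{E}_{\text{elem}}=\mathcal{E}_{\text{BB}}$, and reversible transformation group $\mathcal{R}_{\text{elem}}=\mathcal{R}_{\text{BB}}$. Thus I start from the fact that the elementary system is, locally, a qubit. What remains is to pin down the joint structure for multiple elementary systems and then to extend to arbitrary finite-dimensional systems.

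For the composite structure, I would invoke Postulates \hyperref[P5]{5} and \hyperref[P6]{6}. Tomographic Locality implies that for a bipartite system $AB$ of two elementary systems, $\mathcal{S}_{AB}\subseteq \mathbb{R}^{d_A+1}\otimes\mathbb{R}^{d_B+1}$, constraining the joint state space to lie between the minimal and maximal tensor products of two Bloch balls. Existence of Entanglement rules out the minimal tensor product. Continuous Reversibility (Postulate \hyperref[P4]{4}) then forces a large enough group of joint reversible dynamics. At this point the reconstruction of Refs.~\cite{TMSM12,MMAP13} applies verbatim: given a local theory whose single-system state space is the $3$-ball, together with tomographic locality, continuous reversibility, and the existence of entanglement, the bipartite state space and effect space are uniquely those of two qubits in standard quantum theory. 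I would cite this explicitly rather than reprove it.

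Having established that two elementary systems form a standard pair of qubits, I would then invoke Postulate \hyperref[P7]{7} (Universal Encoding) to lift the conclusion to arbitrary finite dimension. Universal Encoding says that any state of any finite-dimensional physical system can be reversibly encoded in a sufficiently large number of elementary systems. Combining this with the bipartite result and iterating the tensor construction yields that every such system embeds reversibly into an $N$-qubit quantum system for some $N$. Since reversible encodings preserve the state space, effect space and reversible transformation group, the finite-dimensional GPT describing any physical system is equivalent to a sub-theory of standard $N$-qubit quantum theory, and hence is described by finite-dimensional quantum theory. The argument for the $N$-qubit case is, again, the content of Refs.~\cite{TMSM12,MMAP13}, which I would quote.

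The step I expect to be the real obstacle is the bipartite gluing, i.e.\ ruling out non-quantum bilinear extensions that are still consistent with two Bloch-ball marginals, tomographic locality, existence of entanglement and continuous reversibility. Naively the maximal tensor product contains many more extremal joint states than the quantum tensor product (it contains PR-box-like states), and showing that continuous reversibility together with the requirement that local reversible actions on each $3$-ball extend to joint reversible transformations of $\mathcal{S}_{AB}$ squeezes the joint theory down to the quantum tensor product is the technical heart of \cite{TMSM12,MMAP13}. A clean presentation would therefore isolate this as a single imported lemma, keeping the rest of the proof a straightforward chain: \emph{elementary system $=$ qubit} (Lemma \ref{qubitlemma}) $\Rightarrow$ \emph{pairs of elementary systems $=$ two qubits} (imported) $\Rightarrow$ \emph{all finite-dimensional systems $=$ finite-dimensional quantum theory} (via Postulate \hyperref[P7]{7}).
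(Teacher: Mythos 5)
Your proposal is correct and follows essentially the same route as the paper: Lemma \ref{qubitlemma} gives the qubit elementary system, the multi-system structure is imported as a single lemma from Refs.~\cite{TMSM12,MMAP13} (the paper's Lemma \ref{lemmadelatorre}, invoked via Continuous Reversibility, Tomographic Locality and Existence of Entanglement), and Universal Encoding lifts the conclusion to arbitrary finite-dimensional systems. Your identification of the bipartite/multipartite gluing as the imported technical core matches the paper's treatment exactly.
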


To prove Theorem \ref{firsttheorem}, we use Lemma \ref{qubitlemma} given above, and Lemma \ref{lemmadelatorre} given below. The following Lemma is shown in Ref. \cite{TMSM12} (Theorem 2 of Ref. \cite{TMSM12}).

\begin{lemma}
\label{lemmadelatorre}
Consider any locally tomographic theory in
which the individual systems are identical qubits. If the
theory admits any continuous reversible entangling interaction between
systems, then the allowed states, measurements, and
transformations must be identical to those in quantum
theory.
\end{lemma}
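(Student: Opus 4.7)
The plan is to combine Lemma \ref{qubitlemma}, which already does the heavy lifting of identifying the elementary system with the qubit under Postulates \hyperref[P1first]{1}--\hyperref[P4]{4}, with Lemma \ref{lemmadelatorre} of Ref.~\cite{TMSM12} to lift the single-system identification to composites, and then use Postulate \hyperref[P7]{7} to bootstrap from multi-qubit systems to arbitrary finite-dimensional systems. Since the genuinely novel structural input comes from Lemma \ref{qubitlemma}, the remainder of the argument is largely verification.

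Concretely, I would first invoke Lemma \ref{qubitlemma} to obtain $\mathcal{S}_{\text{elem}} = \mathcal{S}_{\text{BB}}$, $\mathcal{E}_{\text{elem}} = \mathcal{E}_{\text{BB}}$, and $\mathcal{R}_{\text{elem}} = \mathcal{R}_{\text{BB}}$. By the uniqueness clause in Postulate \hyperref[P3]{3}, every elementary system in the theory has the same qubit structure, so a composite of $N$ elementary systems is a composite of $N$ identical qubits in the GPT sense. The task then reduces to verifying the three hypotheses of Lemma \ref{lemmadelatorre} for such composites: (i) local tomography, supplied directly by Postulate \hyperref[P5]{5}; (ii) identical qubit subsystems, just observed; and (iii) existence of a continuous reversible entangling interaction between two elementary systems. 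For (iii), Postulate \hyperref[P6]{6} guarantees that the bipartite state space $\mathcal{S}_{AB}$ of two elementary systems contains at least one entangled state, and decomposing it into pure states yields at least one \emph{pure} entangled state (else the mixture would be separable by definition). Postulate \hyperref[P4]{4} then furnishes a continuous reversible transformation from a pure product state to this pure entangled state. Such a transformation cannot be a product of local reversibles on $A$ and $B$, since local reversibles preserve the set $\mathcal{S}_A\otimes_{\text{min}}\mathcal{S}_B$ of separable states; hence it constitutes a continuous reversible entangling interaction. Applying Lemma \ref{lemmadelatorre} then shows that the GPT of two composed elementary systems coincides with two-qubit quantum theory, and iterating (or grouping) gives $N$-qubit quantum theory for every $N$.

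Finally, I would invoke Postulate \hyperref[P7]{7}: any state of any physical system of finite dimension can be reversibly encoded in a sufficiently large number of elementary systems. Because reversible encoding preserves the full GPT structure (states, effects, and allowed transformations up to isomorphism), the GPT of any finite-dimensional physical system is isomorphic to a reversibly encoded subtheory of multi-qubit quantum theory, and is therefore itself described by finite-dimensional quantum theory. This yields the theorem.

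The step I expect to be most delicate is condition (iii) above: one must carefully extract a pure entangled state from Postulate \hyperref[P6]{6} (which only asserts the existence of \emph{some} entangled state) using convexity and closedness of $\mathcal{S}_{AB}$, and then verify that the continuous reversible trajectory connecting a pure product state to this pure entangled state matches precisely the notion of ``continuous reversible entangling interaction'' used in the statement of Lemma \ref{lemmadelatorre} (Theorem~2 of Ref.~\cite{TMSM12}). The remaining steps are essentially bookkeeping appeals to postulates already in hand.
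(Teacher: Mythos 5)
There is a genuine gap here, and it is structural: your proposal does not prove the statement at all. The statement in question \emph{is} Lemma \ref{lemmadelatorre} --- the claim that any locally tomographic theory of identical qubits admitting a continuous reversible entangling interaction must coincide with quantum theory. Your argument, however, \emph{invokes} Lemma \ref{lemmadelatorre} as a known input (``The task then reduces to verifying the three hypotheses of Lemma \ref{lemmadelatorre}\ldots'', ``Applying Lemma \ref{lemmadelatorre} then shows\ldots''), and what you actually establish is Theorem \ref{firsttheorem}: that Postulates \hyperref[P1first]{1}--\hyperref[P7]{7} plus Lemma \ref{qubitlemma} imply finite-dimensional quantum theory. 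As a proof of Lemma \ref{lemmadelatorre} itself this is circular; as a proof of Theorem \ref{firsttheorem} it is essentially the paper's own argument (including the verification of local tomography, identical qubit subsystems, and the existence of an entangling interaction via Postulates \hyperref[P4]{4}--\hyperref[P6]{6}, and the final appeal to Postulate \hyperref[P7]{7}), but that is not the statement you were asked to prove.

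For the record, the paper does not prove Lemma \ref{lemmadelatorre} either: it imports it verbatim as Theorem 2 of Ref.~\cite{TMSM12}. A genuine proof would have to work entirely inside the hypotheses of the lemma --- no Poincar\'e structure, no Postulates \hyperref[P1first]{1}--\hyperref[P3]{3}, no Lemma \ref{qubitlemma} --- and carry out the hard classification done in Ref.~\cite{TMSM12}: starting from Bloch-ball subsystems, local tomography, and a connected group of reversible transformations on the composite that is entangling, one must show that the global state space, effects, and dynamics are forced to be those of multi-qubit quantum theory (this is a nontrivial Lie-group/representation-theoretic analysis of the possible transitive groups acting on the composite state space). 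None of that content appears in your proposal, and your observation about extracting a pure entangled state from Postulate \hyperref[P6]{6} --- while a correct and relevant point for Theorem \ref{firsttheorem} --- does not bear on the lemma, whose hypothesis already grants the entangling interaction directly.
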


\begin{proof}[Proof of Theorem \ref{firsttheorem}]
From Postulates \hyperref[P1first]{1} -- \hyperref[P3]{4}, and from Lemma \ref{qubitlemma}, the elementary system is locally identical to the qubit, i.e. the state space, the space of effects and the set of reversible transformations of the elementary system correspond to those of the qubit: $\mathcal{S}_{\text{elem}}= \mathcal{S}_{\text{BB}}$, $\mathcal{E}_{\text{elem}}= \mathcal{E}_{\text{BB}}$ and $\mathcal{R}_{\text{elem}}= \mathcal{R}_{\text{BB}}$.

Now we consider $N$ massive particles of the type $\mathcal{P}$ where each of them encodes an elementary system, which is locally identical to a qubit, in its internal degrees of freedom. For these $N$ elementary systems, which are $N$ identical local qubits, it follows from Continuous Reversibility (Postulate \hyperref[P4]{4}), Tomographic Locality (Postulate \hyperref[P5]{5}) and the Existence of Entanglement (Postulate \hyperref[P6]{6}), and from Lemma \ref{lemmadelatorre} that the allowed states, measurements and transformations must be identical to those of quantum theory.

Finally, as argued in Ref. \cite{MMAP13}, Universal Encoding (Postulate \hyperref[P7]{7}) implies that any finite dimensional state space of any physical system can be perfectly encoded in a sufficiently large number of elementary systems. Thus, any finite dimensional state space in the theory, together with its space of effects and set of transformations is described by finite dimensional quantum theory. 
\end{proof}

\subsection{A physical derivation of the qubit, of finite dimensional quantum theory and of $n=3$ if spacetime is Minkowski in $1+n$ dimensions}
\label{secondderivation}

%We assume that spacetime is Minkowski in $1+n$ dimensions, with $n$ being a free variable. Below we prove the following theorem.

\begin{theorem}
\label{secondtheorem}
Suppose that spacetime is Minkowski in $1+n$ dimensions, for $n\in\mathbb{N}$, and that Postulates \hyperref[P1first]{1} -- \hyperref[P7]{7} hold. Then, the elementary system is the qubit, the number of spatial dimensions is $n=3$, and any physical system of any finite dimension can be described by finite dimensional quantum theory.
\end{theorem}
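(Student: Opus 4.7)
The plan is to bootstrap off the already-proved Lemma \ref{adlemma1}, use a classification result from Ref.\ \cite{MMPA14} to force $n=3$, and then reduce to Theorem \ref{firsttheorem}. By Lemma \ref{adlemma1}, Postulates \hyperref[P1first]{1} -- \hyperref[P3]{3} (which are among the hypotheses) already pin down the elementary system to the Euclidean $n$-ball GPT: $\mathcal{S}_{\text{elem}}=\mathcal{S}_{\text{ball}}^{(n)}$ and $\mathcal{E}_{\text{elem}}=\mathcal{E}_{\text{ball}}^{(n)}$, where crucially the dimension of the state space equals the number of spatial dimensions $n$. The remaining task is therefore to show that only $n=3$ survives when the additional Postulates \hyperref[P4]{4} -- \hyperref[P7]{7} are imposed, and that the resulting theory is finite dimensional quantum theory.

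Next, I would invoke Postulate \hyperref[P6]{6} (Existence of Entanglement), which demands that the bipartite composite of two elementary systems admits at least one entangled state. Because the elementary system is an Euclidean $n$-ball, the composite falls within the family studied in Ref.\ \cite{MMPA14}, where it was shown that, under Continuous Reversibility (Postulate \hyperref[P4]{4}) and Tomographic Locality (Postulate \hyperref[P5]{5}), Euclidean $d$-ball systems admit bipartite entanglement only for $d=3$. Applied here with $d=n$, this forces $n=3$. Hence both the number of spatial dimensions and the dimension of the elementary-system state space are simultaneously fixed, and we recover $\mathcal{S}_{\text{elem}}=\mathcal{S}_{\text{BB}}$, $\mathcal{E}_{\text{elem}}=\mathcal{E}_{\text{BB}}$; adding Postulate \hyperref[P4]{4} and the rotational connectedness of the Bloch sphere yields $\mathcal{R}_{\text{elem}}=\mathcal{R}_{\text{BB}}$, exactly as in Lemma \ref{qubitlemma}. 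So the elementary system is locally identical to the qubit.

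With this reduction accomplished, Theorem \ref{firsttheorem} can now be quoted essentially verbatim, since its hypothesis (spacetime is Minkowski in $1+3$ dimensions, Postulates \hyperref[P1first]{1} -- \hyperref[P7]{7}) is now satisfied in the reduced setting. Concretely, $N$ copies of the elementary system form $N$ identical local qubits, so by Lemma \ref{lemmadelatorre} (Ref.\ \cite{TMSM12}) combined with Postulates \hyperref[P4]{4} -- \hyperref[P6]{6}, the allowed states, measurements and transformations of any collection of elementary systems coincide with those of $N$-qubit quantum mechanics. Universal Encoding (Postulate \hyperref[P7]{7}) then extends this conclusion to any physical system of any finite dimension, completing the derivation of finite dimensional quantum theory.

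The main obstacle, and the substantive new step beyond Theorem \ref{firsttheorem}, is ruling out $n\neq 3$. This rests entirely on the classification of entangled Euclidean-ball composites in Ref.\ \cite{MMPA14}: one must verify that its hypotheses are met by our setup (in particular that the local state space is $\mathcal{S}_{\text{ball}}^{(n)}$ with the full rotation group acting reversibly, and that tomographic locality applies to the bipartite composite), so that the $n\neq 3$ balls genuinely have a separable bipartite state space and therefore violate Postulate \hyperref[P6]{6}. Once that link is articulated clearly, the identification of the elementary system as a qubit and the final appeal to Theorem \ref{firsttheorem} are routine.
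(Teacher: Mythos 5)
Your proposal is correct and follows essentially the same route as the paper's proof: Lemma \ref{adlemma1} gives the Euclidean $n$-ball, Lemma \ref{lemmamasanesmuller} (the result of Ref.\ \cite{MMPA14}) together with Postulates \hyperref[P4]{4} -- \hyperref[P6]{6} forces $n=3$ and identifies the elementary system with the qubit, and Lemma \ref{lemmadelatorre} plus Postulate \hyperref[P7]{7} then yield finite dimensional quantum theory exactly as in Theorem \ref{firsttheorem}. The verification you flag as the remaining obstacle is carried out in the paper precisely by noting $\mathcal{S}_A\otimes_{\text{min}}\mathcal{S}_B\subseteq\mathcal{S}_{AB}$ and $\mathcal{E}_A\otimes_{\text{min}}\mathcal{E}_B\subseteq\mathcal{E}_{AB}$, which give $\mathcal{R}\times\mathcal{R}\subseteq\mathcal{R}_{AB}$ and $(\varepsilon\otimes\varepsilon)(\mathcal{R}_{AB}(\zeta\otimes\zeta))\subseteq[0,1]$, so your sketch matches the paper's argument.
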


To prove Theorem \ref{secondtheorem}, we use Lemmas \ref{adlemma1} and \ref{lemmadelatorre} given above, and Lemma \ref{lemmamasanesmuller} given below.
%We recall Lemma \hyperref[lemmadelatorrerepeated]{6}, used in section \ref{firstderivation} and shown in Ref. \cite{TMSM12} (Theorem 2 of Ref. \cite{TMSM12}).
%\begin{lemma6*}
%\label{lemmadelatorrerepeated}
%Consider any locally tomographic theory in which the individual systems are identical qubits. If the theory admits any continuous reversible entangling interaction between systems, then the allowed states, measurements, and transformations must be identical to those in quantum theory.
%\end{lemma6*}
The following lemma is shown in Ref. \cite{MMPA14} (Theorem 1 of Ref. \cite{MMPA14}).

\begin{lemma}
\label{lemmamasanesmuller}
Consider a bipartite system $AB$, where the local system $A$ and the local system $B$ have the state space and space of effects of an Euclidean ball of dimension $n$, $\mathcal{S}_{\text{ball}}^{(n)}$ and  $\mathcal{E}_{\text{ball}}^{(n)}$, respectively, for $n\in\mathbb{N}$. Consider any group $\mathcal{R}$ of continuous reversible transformations that acts transitively on the set of pure states of $\mathcal{S}_{\text{ball}}^{(n)}$, i.e on $\mathcal{S}_{\text{sphere}}^{(n)}$, with $\mathcal{R}$ different to the group of qubit reversible transformations: $\mathcal{R}_{\text{BB}}\equiv\Bigl\{\tau\equiv\Bigl(\begin{smallmatrix}
  1 & 0 \\
  0 & \tilde{\tau} 
 \end{smallmatrix}\Bigr)\big\vert \tilde{\tau}\in\text{SO}(3)\Bigr\}$.
For any connected group $\mathcal{R}_{AB}$ acting on the set of pure states of the bipartite system $AB$ satisfying $\mathcal{R}\times\mathcal{R}\subseteq \mathcal{R}_{AB}$ and $(\varepsilon\otimes\varepsilon)(\mathcal{R}_{AB}(\zeta\otimes\zeta))\subseteq[0,1]$, for any  effect $\varepsilon\in\mathcal{E}_{\text{ball}}^{(n)}$ and any sate $\zeta\in\mathcal{S}_{\text{ball}}^{(n)}$, there is no entanglement interaction between the systems $A$ and $B$.
\end{lemma}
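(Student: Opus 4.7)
The plan is to prove the lemma via a Lie-algebraic analysis of the connected group $\mathcal{R}_{AB}$, along the lines of Masanes, M\"uller, P\'erez-Garc\'ia and Augusiak \cite{MMPA14}. Since $\mathcal{R}_{AB}$ is connected, it is determined by its Lie algebra $\mathfrak{r}_{AB}$, and the inclusion $\mathcal{R}\times\mathcal{R}\subseteq\mathcal{R}_{AB}$ gives $\mathfrak{r}\oplus\mathfrak{r}\subseteq\mathfrak{r}_{AB}$. The key question is whether $\mathfrak{r}_{AB}$ can contain any \emph{entangling generator} $X$ not of the form $X_A\otimes I+I\otimes X_B$, since only such a generator can reversibly produce entanglement from a product state.

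First I would classify the admissible local groups. The group $\mathcal{R}$ is a closed connected subgroup of $\mathrm{O}(n)$ (acting as continuous reversible maps on $\tilde{\mathcal{S}}=\mathcal{S}_{\text{ball}}^{(n)}$) whose action on $\mathcal{S}_{\text{sphere}}^{(n)}$ is transitive. By the Borel--Montgomery--Samelson classification of compact Lie groups acting transitively on spheres, $\mathcal{R}$ lies on a finite list ($\mathrm{SO}(n)$, $\mathrm{U}(n/2)$, $\mathrm{SU}(n/2)$, $\mathrm{Sp}(n/4)$, $\mathrm{Sp}(n/4)\cdot \mathrm{U}(1)$, $\mathrm{Sp}(n/4)\cdot \mathrm{Sp}(1)$, plus the sporadic cases $G_2$, $\mathrm{Spin}(7)$, $\mathrm{Spin}(9)$). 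Since by hypothesis $\mathcal{R}\neq \mathcal{R}_{\text{BB}}=\mathrm{SO}(3)$, I would then treat the remaining entries in turn.

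Next I would decompose $\mathfrak{r}_{AB}$ as a representation of $\mathcal{R}\times\mathcal{R}$ acting by conjugation. Embedding the bipartite state space in $\mathbb{R}^{n+1}\otimes \mathbb{R}^{n+1}$ and writing a generic product pure state as $(1,\vec{u})\otimes(1,\vec{v})$, the generators split according to the irreducible components of the tensor representation: a scalar direction, two vector (local) directions, and a rank-two correlation-tensor block $\vec{u}\otimes\vec{v}$. The local generators $\mathfrak{r}\oplus\mathfrak{r}$ fill the vector directions, so any $X\in\mathfrak{r}_{AB}\setminus(\mathfrak{r}\oplus\mathfrak{r})$ must induce a nontrivial equivariant map involving the tensor block. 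By a Schur-type argument, the space of such equivariant maps is highly constrained, and for every $\mathcal{R}\neq \mathrm{SO}(3)$ in the classification one can show that no candidate map exists that is simultaneously $\mathcal{R}\times\mathcal{R}$-equivariant, antisymmetric (so as to generate a one-parameter subgroup inside the orthogonal group preserving an invariant inner product), and compatible with the local action required by the inclusion $\mathcal{R}\times\mathcal{R}\subseteq \mathcal{R}_{AB}$.

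The main obstacle, and the heart of the argument, is to close the loop using the positivity constraint $(\varepsilon\otimes\varepsilon)\bigl(\mathcal{R}_{AB}(\zeta\otimes\zeta)\bigr)\subseteq[0,1]$. For any candidate entangling generator $X$, expanding $\exp(tX)(\zeta\otimes\zeta)$ to first order in $t$ forces $X(\zeta\otimes\zeta)$ to lie in the tangent cone of the bipartite state space at $\zeta\otimes\zeta$ for every pure product state, while the second-order expansion produces additional quadratic inequalities among the components of $X$; combined with tomographic locality (which sandwiches $\mathcal{S}_{AB}$ between $\mathcal{S}_A\otimes_{\min}\mathcal{S}_B$ and $\mathcal{S}_A\otimes_{\max}\mathcal{S}_B$), this yields a contradiction for every admissible group on the Borel--Montgomery--Samelson list except $\mathrm{SO}(3)$, where the accidental isomorphism $\mathfrak{so}(3)\cong \mathfrak{su}(2)$ provides precisely the extra equivariant maps that allow a two-qubit entangling generator to coexist with positivity. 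The hard part is thus the case-by-case verification across the sporadic entries $n=4,7,8,16$, where one must rule out low-dimensional coincidences of representations that could, a priori, support an entangling generator.
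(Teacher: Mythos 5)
You should first be aware that the paper itself contains no proof of Lemma \ref{lemmamasanesmuller}: it is stated with the remark that it ``is shown in Ref. \cite{MMPA14} (Theorem 1 of Ref. \cite{MMPA14})'' and is then used as a black box in the proof of Theorem \ref{secondtheorem}. The relevant comparison is therefore between your sketch and the proof published in Ref. \cite{MMPA14}, and at the level of strategy your sketch does track that proof: reduce the local group $\mathcal{R}$ to the Borel--Montgomery--Samelson list of compact connected groups acting transitively on spheres, decompose the Lie algebra of the connected group $\mathcal{R}_{AB}$ into blocks under conjugation by $\mathcal{R}\times\mathcal{R}$, and use equivariance together with the positivity constraint $(\varepsilon\otimes\varepsilon)\bigl(\mathcal{R}_{AB}(\zeta\otimes\zeta)\bigr)\subseteq[0,1]$ to exclude generators outside $\mathfrak{r}\oplus\mathfrak{r}$.

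However, as a proof your proposal has a genuine gap: the two claims that carry all of the mathematical weight are asserted rather than argued. You write that ``by a Schur-type argument \ldots for every $\mathcal{R}\neq \mathrm{SO}(3)$ in the classification one can show that no candidate map exists,'' and later that the first- and second-order positivity constraints ``yield a contradiction for every admissible group \ldots except $\mathrm{SO}(3)$.'' Those two sentences \emph{are} the lemma; everything else in your text is setup. In Ref. \cite{MMPA14} precisely this is where the work lies: one must actually compute the spaces of $\mathcal{R}\times\mathcal{R}$-equivariant maps on the correlation block for each family $\mathrm{U}(k)$, $\mathrm{SU}(k)$, $\mathrm{Sp}(k)$, $\mathrm{Sp}(k)\cdot\mathrm{U}(1)$, $\mathrm{Sp}(k)\cdot\mathrm{Sp}(1)$ and each sporadic case, and then verify that every nonzero candidate generator violates positivity on some product state; the low-dimensional coincidences (e.g. $n=4$, where $\mathrm{Sp}(1)$ acts transitively on $S^{3}$) require separate treatment. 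Until those computations are supplied, your argument is a roadmap, not a proof. Two smaller gaps are fixable but should be addressed: the lemma does not assume $\mathcal{R}$ is closed or connected, so before invoking the classification you must pass to the closure and to the identity component (which still acts transitively on the connected sphere for $n\geq 2$); and the claim that only generators outside $\mathfrak{r}\oplus\mathfrak{r}$ can create entanglement deserves a one-line justification, namely that local generators exponentiate to product transformations, which map product states to product states.
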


\begin{proof}[Proof of Theorem \ref{secondtheorem}]
From Postulates \hyperref[P1first]{1} -- \hyperref[P3]{3}, and from Lemma \ref{adlemma1}, the state space and the space of effects of the elementary system corresponds to an Euclidean ball of dimension $d=n$, that is, $\mathcal{S}_{\text{elem}}= \mathcal{S}_{\text{ball}}^{(n)}$ and $\mathcal{E}_{\text{elem}}= \mathcal{E}_{\text{ball}}^{(n)}$, respectively. The set of pure states for the elementary system is given by $\mathcal{S}_{\text{sphere}}^{(n)}$.

Now consider a bipartite system $AB$, where the local systems $A$ and $B$ are elementary systems,  with state spaces $\mathcal{S}_A=\mathcal{S}_{\text{ball}}^{(n)}$ and $\mathcal{S}_B=\mathcal{S}_{\text{ball}}^{(n)}$, and with space of effects $\mathcal{E}_A=\mathcal{E}_{\text{ball}}^{(n)}$ and $\mathcal{E}_B=\mathcal{E}_{\text{ball}}^{(n)}$, respectively. This can be implemented by having two massive particles of the type $\mathcal{P}$ where each of them encodes an elementary system in its internal degrees of freedom, for instance. The joint state space $\mathcal{S}_{AB}$ must include the tensor product of local states. Similarly, the joint space of effects $\mathcal{E}_{AB}$ must include the tensor product of local effects. That is, $\mathcal{S}_A\otimes_{\text{min}}\mathcal{S}_B\subseteq \mathcal{S}_{AB}$ and $\mathcal{E}_A\otimes_{\text{min}}\mathcal{E}_B\subseteq \mathcal{E}_{AB}$, where  $\mathcal{S}_A\otimes_{\text{min}}\mathcal{S}_B\equiv \text{convex hull}\{\zeta\otimes \zeta'\vert \zeta\in\mathcal{S}_A,\zeta'\in\mathcal{S}_B\}$ is the minimal tensor product, and where $\mathcal{E}_A\otimes_{\text{min}}\mathcal{E}_B\equiv \text{convex hull}\{\varepsilon\otimes \varepsilon'\vert \varepsilon\in\mathcal{E}_A,\varepsilon'\in\mathcal{E}_B\}$.

From Continuous Reversibility (Postulate \hyperref[P4]{4}), for every pair of pure states there exists a continuous reversible transformation that transforms one state into  the other. Let the group of continuous reversible transformations acting on the set of pure states $\mathcal{S}_{\text{sphere}}^{(n)}$ of the elementary system be the group $\mathcal{R}$. This group must act transitively on $\mathcal{S}_{\text{sphere}}^{(n)}$. Let $\mathcal{R}_{AB}$ be a connected group of continuous reversible transformations acting on the set of pure states of the bipartite state space $\mathcal{S}_{AB}$. As mentioned above, we must have $\mathcal{S}_A\otimes_{\text{min}}\mathcal{S}_B\subseteq \mathcal{S}_{AB}$. Thus, $(\zeta\otimes\zeta)\in\mathcal{S}_{AB}$, for any  state $\zeta\in\mathcal{S}_{\text{ball}}^{(n)}$. Similarly, we must have $\mathcal{R}_{AB}(\zeta\otimes\zeta)\in\mathcal{S}_{AB}$, for any state $\zeta\in\mathcal{S}_{\text{ball}}^{(n)}$. It must also hold that $(\varepsilon\otimes\varepsilon)\in\mathcal{E}_{AB}$, for any effect $\varepsilon\in\mathcal{E}_{\text{ball}}^{(n)}$. Therefore, it must hold that $(\varepsilon\otimes\varepsilon)(\mathcal{R}_{AB}(\zeta\otimes \zeta))\subseteq[0,1]$, for any  effect $\varepsilon\in\mathcal{E}_{\text{ball}}^{(n)}$ and for any state $\zeta\in\mathcal{S}_{\text{ball}}^{(n)}$. Clearly, the tensor product of two local continuous reversible interactions acting respectively on the subsystems $A$ and $B$ is a continuous reversible interaction acting on the bipartite system $AB$. That is, we have $\mathcal{R}\times\mathcal{R}\subseteq\mathcal{R}_{AB}$. It follows from Continuous Reversibility (Postulate \hyperref[P4]{4}), Tomographic Locality (Postulate \hyperref[P5]{5}) and the Existence of Entanglement (Postulate \hyperref[P6]{6}), and from Lemmas \ref{lemmadelatorre} and \ref{lemmamasanesmuller}, that the bipartite state space $\mathcal{S}_{AB}$, the bipartite space of effects $\mathcal{E}_{AB}$, and the bipartite set of allowed transformations $\mathcal{T}_{AB}$ are identical to those of quantum theory for a two qubit system and the dimension of the state space is $n=3$. This means in particular that the elementary system is identical to a qubit.

Now we consider $N$ massive particles of the type $\mathcal{P}$ where each of them encodes an elementary system, which is locally identical to a qubit, in its internal degrees of freedom. For these $N$ elementary systems, which are $N$ identical local qubits, it follows from Continuous Reversibility, Tomographic Locality and the Existence of Entanglement, and from Lemma \ref{lemmadelatorre} that the allowed states, measurements and transformations must be identical to those of quantum theory.

Finally, as argued in Ref. \cite{MMAP13}, Universal Encoding (Postulate \hyperref[P7]{7}) implies that any finite dimensional state space of any physical system can be perfectly encoded in a sufficiently large number of elementary systems. Thus, any finite dimensional state space in the theory, together with its space of effects and set of transformations is described by finite dimensional quantum theory. 

\end{proof}

\section{Discussion and open problems}
\label{discussion}
%\subsection{Conclusions}

%We have found a connection between the mathematical structures of spacetime and quantum theory. Although deriving Minkowski spacetime from physical principles is well established, we have left the number of spatial dimensions $n$ as a free variable. From Poincar{\'e} invariance and other physical postulates, we have derived finite dimensional quantum theory and that $n=3$. It would be interesting to investigate variations of our derivation. For example, can Minimality of the Elementary System (Postulate \ref{P3}) be discarded to make our derivation stronger? More generally, it is interesting to investigate further connections between the mathematical formalisms of spacetime and quantum theory.

%We reconstructed the qubit Bloch ball from Poincar{\'e} invariance and other physical properties. Then, we followed the arguments of previous works \cite{TMSM12,MMAP13} to reconstruct finite dimensional quantum theory. It is interesting to investigate how fundamental the principle of Poincar{\'e} invariance is in our reconstruction, and more generally, in the foundations of quantum theory. 

Quantum theory and relativity are the most fundamental theories in physics. In our view, it would not be surprising if they were connected at a deep level. In this paper we have proposed the postulate of \hyperref[P1first]{Poincar{\'e} Structure} suggesting that the state space and the space of measurements of some physical systems are constrained in such a way that the states and measurements must transform as nontrivial representations of the group of symmetry transformations of Minkowski spacetime, which is the proper orthochronous Poincar{\'e} group. From this and other physically sensible postulates, and with the help of results of Refs. \cite{TMSM12,MMAP13,MMPA14}, we have reconstructed finite dimensional quantum theory and derived the number of spatial dimensions of Minkowski spacetime.

Am important piece in our reconstruction is Lemma \ref{newlemma}, which roughly says that, from Poincar{\'e} invariance, the states and effects for a massive particle's internal degrees of freedom that has classical well defined $1+n$ momentum in Minkowski spacetime of $1+n$ dimensions must transform as representations of $\text{SO}(n)$, with the measurement outcome probabilities remaining invariant.  This is an encouraging reason to investigate what sets of states $\mathcal{S}$ and effects $\mathcal{E}$ are consistent with this condition when such representations are nontrivial. 

In particular, it would be interesting to investigate for which bipartite systems with local states spaces $\mathcal{S}$ and local spaces of effects $\mathcal{E}$ of this type there can exist entanglement. If by imposing \hyperref[P4]{Continuous Reversibility} and \hyperref[P5]{Tomographic Locality} it can be shown that these type of bipartite systems can only have entanglement if the local states transform as the representation  of $\text{SO}(n)$ given by $\text{SO}(n)$ itself, then our result follows without the assumption we have made in our postulate \hyperref[P3]{Minimality of the Elementary System} that the dimension $d_\text{elem}$ of the elementary system takes the minimum value that is consistent with the postulates. Investigating the previous question would be an interesting extension of the result of Ref. \cite{MMPA14} used in our analysis, which roughly says that for a bipartite system with local state and effect spaces given by Euclidean balls of dimension $d$ satisfying continuous reversibility and tomographic locality, entanglement can only exist if $d=3$, in which case the bipartite system is described by quantum theory.

It would also be interesting to investigate the sets of correlations arising from local measurements on bipartite (and multipartite) systems with local state and effect spaces $\mathcal{S}$ and $\mathcal{E}$ of the type mentioned above, where the states and effects transform as nontrivial representations of $\text{SO}(n)$ and where the outcome probabilities remain invariant. In particular,
can the quantum Tsirelson bound \cite{C80} on the CHSH Bell inequality \cite{CHSH69} be violated by some
bipartite system of the form described? Considering this question is to some extent motivated by the results of Refs. \cite{BBBEW10,AACHKLMP10} which show that the correlations obtained for bipartite systems with local measurements described by quantum theory that satisfy the no-signalling principle can be obtained with quantum theory, even if the joint system is in principle not quantum.

Finally, as previously mentioned, an important motivation to investigate foundational physical principles of quantum theory is to explore new theories that follow by modifying these principles. The problem of unifying gravity and quantum theory is a compelling reason to do so \cite{H07,H16}. The postulate of \hyperref[P1']{Structure from the Spacetime Symmetries} that we have proposed in this paper generalizes the postulate of \hyperref[P1first]{Poincar{\'e} Structure} to arbitrary spacetimes with arbitrary groups of symmetry transformations. This postulate allows us to explore possible modifications of quantum theory in spacetimes that are not Minkowski. It would be very interesting to investigate the implications for quantum theory within the framework of GPTs arising from this postulate in curved spacetimes of general relativity, and more broadly in spacetimes of natural modifications of general relativity.

\begin{acknowledgments}
The author acknowledges helpful conversations with Serge Massar, Stefano Pironio and Lucien Hardy. The author began this work at the Laboratoire d'Information
Quantique, Universit\'{e} libre de Bruxelles, with financial
support from the European Union under the project
QALGO, from the F.R.S.-FNRS under the project
DIQIP and from the InterUniversity Attraction Poles of
the Belgian Federal Government through project Photonics@be. The author continued and completed this work at the Centre for Quantum Information and Foundations, DAMTP, University of Cambridge, with financial support from the UK Quantum
  Communications Hub grant no. EP/T001011/1.
\end{acknowledgments}

\appendix
\section{Proofs of Lemmas \ref{Minkowski} and \ref{Poincare}}
\label{app}

\begin{proof}[Proof of Lemma \ref{Minkowski}]
We present a proof that is close to the one given by Ref. \cite{LandauLifshitzbook}. Let $F$ and $F'$ be any two inertial reference frames. Let $E_0$ and $E_1$ be two spacetime events with respective spacetime coordinates $x$ and $y$ in  reference frame $F$, and with spacetime coordinates $x'$ and $y'$ in reference frame $F'$. Let the spacetime intervals between these spacetime events be $\Delta s$ and $\Delta s'$ in the reference frames $F$ and $F'$, respectively. \reff{We define these by 
\begin{eqnarray}
\label{M0}
(\Delta s)^2 = -(y_0-x_0)^2+\sum_{i=1}^n\bigl(y_i-x_i\bigr)^2,\nonumber\\
(\Delta s')^2 = -(y_0'-x_0')^2+\sum_{i=1}^n\bigl(y_i'-x_i'\bigr)^2.
\end{eqnarray}
From} Postulate \ref{ESD}, the spatial distance $\lvert \vec{y}-\vec{x}\rvert$ between the space locations $\vec{x}$ and $\vec{y}$ in the reference frame $F$ is given by Euclidean geometry: $\lvert\vec{y}-\vec{x}\rvert=\sqrt{\sum_{i=1}^n (y_i-x_i)^2}$. Similarly, in the reference frame $F'$, we have $\lvert\vec{y}'-\vec{x}'\rvert=\sqrt{\sum_{i=1}^n (y_i'-x_i')^2}$. Thus, from (\ref{M0}), we have
\begin{eqnarray}
\label{M1}
(\Delta s)^2 &=& -(y_0-x_0)^2+\lvert\vec{y}-\vec{x}\rvert^2,\nonumber\\
(\Delta s')^2 &=& -(y_0'-x_0')^2+\lvert\vec{y}'-\vec{x}'\rvert^2.
\end{eqnarray}

By definition, Minkowski spacetime in $1+n$ dimensions is the set of spacetime points $x\in\mathbb{R}^{1+n}$ such that the spacetime interval  $\Delta s$ 
is the same in all inertial reference frames, for any pair of spacetime points $x,y\in\mathbb{R}^{1+n}$. Since $F$ and $F'$ are arbitrary inertial reference frames, it remains to show that $\Delta s=\Delta s'$, for any pair of spacetime points $x,y\in\mathbb{R}^{1+n}$.

We first show that if $\Delta s=0$ then $\Delta s'=0$. We assume that
\begin{equation}
\label{M1.1}
\Delta s =0
\end{equation}
holds. From (\ref{M1}) and (\ref{M1.1}), we have
\begin{equation}
\label{M2}
\lvert\vec{y}-\vec{x}\rvert=\lvert y_0-x_0\rvert.
\end{equation}
Without loss of generality let $y_0\geq x_0$. Since we are using units in which the speed of light in vacuum is unity, i.e $c=1$, (\ref{M2}) means that a light signal leaving the space location $\vec{x}=(x_1,\ldots,x_n)$ at time $x_0$ and travelling through vacuum in a straight line reaches the space location $\vec{y}$ at time $y_0$. That is, in the frame $F$, the spacetime events $E_0$ and $E_1$ can correspond to a light signal travelling through vacuum from the spacetime point $x$ to the spacetime point $y$.

From Principle \ref{RP}, in the frame $F'$ the spacetime events $E_0$ and $E_1$ can also correspond to a light signal travelling through vacuum from the spacetime point $x'$ to the spacetime point $y'$. From Principle \ref{CSL}, the speed of light in vacuum is $c=1$ in both reference frames $F$ and $F'$. Thus, we have 
\begin{equation}
\label{M3}
\lvert\vec{y}'-\vec{x}'\rvert=\lvert y_0'-x_0'\rvert.
\end{equation}
It follows from (\ref{M1}) and from (\ref{M3}) that the spacetime interval $\Delta s'$ between the spacetime points $x'$ and $y'$ in the frame $F'$ satisfies $\Delta s'=0$,
as claimed.

We now consider that
\begin{equation}
\label{M5}
\Delta s\neq 0.
\end{equation}
We show that $\Delta s'=\Delta s$. We define $a$ by
\begin{equation}
\label{M6}
\Delta s=a \Delta s'.
\end{equation}
From (\ref{M5}), we have that $a\neq 0$. In general, $a$ is a function of the spacetime coordinates $x,y,x'$ and $y'$, and of the velocity $\vec{v}$ of the reference frame $F'$ with respect to the reference frame $F$. However, due to Principle \ref{HST}, $a$ cannot depend on any spacetime coordinates, as otherwise some spacetime coordinates would be treated in a special way. Furthermore, due to Principle \ref{IS}, $a$ cannot depend on the direction of $\vec{v}$, as otherwise different directions in space would be treated differently. Thus, $a$ can only depend on the magnitude $v$ of $\vec{v}$.

We now consider a third reference frame $F''$. Let $x''$ and $y''$ be the respective spacetime coordinates of the spacetime events $E_0$ and $E_1$ in the reference frame $F''$, and let $\Delta s''$ be the spacetime interval between $x''$ and $y''$. That is, from (\ref{M1}), we have $(\Delta s'')^2 = (y_0''-x_0'')^2-\lvert\vec{y}''-\vec{x}''\rvert^2$. Let $\vec{v}'$ be the velocity of $F''$ with respect to $F$ and let $\vec{v}''$ be the velocity of $F''$ with respect to $F'$. Thus, from (\ref{M6}), we have
\begin{eqnarray}
\label{M7}
\Delta s&=&a(v) \Delta s',\\
\label{M8}
\Delta s&=&a(v') \Delta s'',\\
\label{M9}
\Delta s'&=&a(v'') \Delta s''.
\end{eqnarray}
From (\ref{M5}), (\ref{M7}) and (\ref{M8}), we have $a(v)\neq 0$ and $a(v')\neq 0$. Thus, we can divide (\ref{M7}) and (\ref{M8}) by $a(v)$ and $a(v')$, respectively, and substitute $\Delta s'$ and $\Delta s''$ in (\ref{M9}), to obtain
\begin{equation}
\label{M10}
\frac{ \Delta s}{a(v)}=\frac{a(v'')\Delta s}{a(v')}.
\end{equation}
From (\ref{M5}), we can divide (\ref{M10}) by $\Delta s$ and multiply by $a(v')\neq 0$, to obtain
\begin{equation}
\label{M11}
\frac{ a(v')}{a(v)}=a(v'').
\end{equation}
Since $\vec{v}$ is the velocity of $F'$ with respect to $F$, $\vec{v}'$ is the velocity of $F''$ with respect to $F$, and $\vec{v}''$ is the velocity of $F''$ with respect to $F'$, we see that the magnitude $v''$ not only depends on the magnitudes $v$ and $v'$, but also on the angle $\theta$ between $\vec{v}$ and $\vec{v}'$. Thus, we see that the right hand side of (\ref{M11}) depends on $\theta$, but the left hand side does not. Since $\vec{v}$ and $\vec{v}'$ are arbitrary, so is their angle $\theta$. Thus, we see that (\ref{M11}) can only hold if $a$ is a constant. It follows from (\ref{M11}) that this constant is $a=1$. Thus, it follows from (\ref{M6}) that $\Delta s'=\Delta s$, as claimed.
\end{proof}

\begin{proof}[Proof of Lemma \ref{Poincare}]
We reproduce a proof given by Ref. \cite{WeinbergGravitationandCosmologybook}.
Let $F$ and $F'$ be an arbitrary pair of inertial reference frames. We consider two spacetime events $E_0$ and $E_1$ with spacetime coordinates separated by infinitesimal spacetime intervals $ds$ and $ds'$ in the reference frames $F$ and $F'$, respectively. \reff{These are given by
\begin{eqnarray}
\label{M12}
ds^2&=&\sum_{\alpha,\beta} \eta_{\alpha\beta}dx_\alpha dx_\beta,\\
\label{M13}
ds'^2&=&\sum_{\gamma,\delta} \eta_{\gamma\delta}dx'_\gamma dx'_\delta,
\end{eqnarray}
where $\eta$ is the metric given by (\ref{metric}),} where $\alpha,\beta,\gamma$ and $\delta$ run over $\{0,1,\ldots,n\}$, and where $x_\alpha$ and $x'_\alpha$ are the spacetime coordinates in the reference frames $F$ and $F'$, respectively, for all $\alpha\in\{0,1,\ldots,n\}$.

From Postulate \ref{NCT}, the coordinate transformation $x\rightarrow x'$ is non-singular. Thus, the functions $x'(x)$ and $x(x')$ are well behaved differentiable functions and the matrix $\frac{\partial x'_\alpha}{\partial x_\beta}$ has a well defined inverse $\frac{\partial x_\beta}{\partial x'_\alpha}$. We express the infinitesimal intervals $dx'_\gamma$ as a function of the infinitesimal intervals $dx_\alpha$, for all $\alpha,\gamma\in\{0,1,\ldots,n\}$, by taking partial derivatives. From (\ref{M13}), we have
\begin{equation}
\label{M14}
ds'^2=\sum_{\alpha,\beta,\gamma,\delta} \eta_{\gamma\delta}\frac{\partial x'_\gamma}{\partial x_\alpha} \frac{\partial x'_\delta}{\partial x_\beta} dx_\alpha dx_\beta.
\end{equation}

By definition of Minkowski spacetime in $1+n$ dimensions, the spacetime interval between the spacetime events $E_0$ and $E_1$ is the same in all inertial reference frames. Thus, we have
\begin{equation}
\label{M15}
ds^2=ds'^2.
\end{equation}
It follows from (\ref{M12}), (\ref{M14}) and (\ref{M15}) that
\begin{equation}
\label{M16}
\sum_{\alpha,\beta}\biggl(\sum_{\gamma\delta}\eta_{\gamma\delta}  \frac{\partial x'_\gamma}{\partial x_\alpha} \frac{\partial x'_\delta}{\partial x_\beta}-\eta_{\alpha\beta}\biggr)dx_\alpha dx_\beta=0.
\end{equation}
In order that (\ref{M16}) holds for arbitrary $dx_\alpha$, we must have
\begin{equation}
\label{M17}
\eta_{\alpha\beta}=\sum_{\gamma,\delta}\eta_{\gamma\delta} \frac{\partial x'_\gamma}{\partial x_\alpha} \frac{\partial x'_\delta}{\partial x_\beta},
\end{equation}
for all $\alpha,\beta\in\{0,1,\ldots,n\}$.

We take the partial derivative of (\ref{M17}) with respect to $x_\epsilon$, and obtain
\begin{equation}
\label{M18}
0=\sum_{\gamma,\delta}\eta_{\gamma\delta} \biggl(\frac{\partial^2 x'_\gamma}{\partial x_\epsilon\partial x_\alpha} \frac{\partial x'_\delta}{\partial x_\beta}+\frac{\partial x'_\gamma}{\partial x_\alpha} \frac{\partial^2 x'_\delta}{\partial x_\epsilon\partial x_\beta}\biggr),
\end{equation}
for all $\alpha,\beta,\epsilon\in\{0,1,\ldots,n\}$. We add to (\ref{M18}) the same equation with $\alpha$ and $\epsilon$ interchanged, and we subtract the same equation with $\epsilon$ and $\beta$ interchanged, to obtain
\begin{eqnarray}
\label{M19}
0&=&\sum_{\gamma,\delta}\eta_{\gamma\delta} \biggl(\frac{\partial^2 x'_\gamma}{\partial x_\epsilon\partial x_\alpha} \frac{\partial x'_\delta}{\partial x_\beta}+\frac{\partial x'_\gamma}{\partial x_\alpha} \frac{\partial^2 x'_\delta}{\partial x_\epsilon\partial x_\beta}\nonumber\\
&&\quad\qquad\qquad+\frac{\partial^2 x'_\gamma}{\partial x_\alpha\partial x_\epsilon} \frac{\partial x'_\delta}{\partial x_\beta}+\frac{\partial x'_\gamma}{\partial x_\epsilon} \frac{\partial^2 x'_\delta}{\partial x_\alpha\partial x_\beta}\nonumber\\
&&\quad\qquad\qquad-\frac{\partial^2 x'_\gamma}{\partial x_\beta\partial x_\alpha} \frac{\partial x'_\delta}{\partial x_\epsilon}-\frac{\partial x'_\gamma}{\partial x_\alpha} \frac{\partial^2 x'_\delta}{\partial x_\beta\partial x_\epsilon}\biggr),\nonumber\\
\end{eqnarray}
for all $\alpha,\beta,\epsilon\in\{0,1,\ldots,n\}$. We see from the definition of the metric matrix $\eta_{\gamma\delta}$ given by (\ref{metric}), that in (\ref{M19}), the second and last terms cancel each other, and so do the fourth and fifth terms. Thus, since the first and third terms are equal, we obtain
\begin{equation}
\label{M20}
0=2\sum_{\gamma,\delta}\eta_{\gamma\delta} \biggl(\frac{\partial^2 x'_\gamma}{\partial x_\epsilon\partial x_\alpha} \frac{\partial x'_\delta}{\partial x_\beta}\biggr),
\end{equation}
for all $\alpha,\beta,\epsilon\in\{0,1,\ldots,n\}$.

It is easy to see from (\ref{metric}) that the matrix $\eta_{\gamma\delta}$ is invertible. As mentioned above, the matrix $\frac{\partial x'_\delta}{\partial x_\beta}$ is also invertible. Thus, the matrix $\sum_\delta \eta_{\gamma\delta}\frac{\partial x'_\delta}{\partial x_\beta}$ is also invertible.  It follows that (\ref{M20}) has only the trivial solution
\begin{equation}
\label{M21}
0=\frac{\partial^2 x'_\gamma}{\partial x_\epsilon\partial x_\alpha} ,
\end{equation}
for all $\alpha,\gamma,\epsilon\in\{0,1,\ldots,n\}$. The general solution to (\ref{M21}) is given by
\begin{equation}
\label{M22}
x'_\gamma=\sum_{\delta=0}^{n}\Lambda_{\gamma\delta} x_\delta+a_\gamma,
\end{equation}
where $\Lambda_{\gamma\delta},a_\gamma\in\mathbb{R}$ are constants, for all $\gamma,\delta\in\{0,1,\ldots,n\}$. By substituting (\ref{M22}) in (\ref{M17}), we obtain
\begin{equation}
\label{M23}
\eta_{\alpha\beta}=\sum_{\gamma=0}^n\sum_{\delta=0}^n\eta_{\gamma\delta}\Lambda_{\gamma\alpha}\Lambda_{\delta\beta}.
\end{equation}
Thus, from (\ref{M22}) and (\ref{M23}), we see that the coordinate transformation $x\rightarrow x'$ is a Poincar{\'e} transformation $P(a,\Lambda)$ defined by (\ref{neweq:l1}) and (\ref{newnew1}), as claimed.
\end{proof}

%apsrev4-2.bst 2019-01-14 (MD) hand-edited version of apsrev4-1.bst
%Control: key (0)
%Control: author (8) initials jnrlst
%Control: editor formatted (1) identically to author
%Control: production of article title (0) allowed
%Control: page (0) single
%Control: year (1) truncated
%Control: production of eprint (0) enabled
%

%\bibliography{Lorentzbiblio}
%\bibliography{postdocbiblio}
\end{document}